\documentclass[11pt]{article}
\usepackage[utf8]{inputenc}
\pdfoutput=1 

\usepackage{jheppub} 
\usepackage{todonotes}
\usepackage{amsthm}
\usepackage{mathtools,centernot} 

\usepackage{tabularx,multirow,booktabs,boldline}
\newcolumntype{L}[1]{>{\raggedright\arraybackslash}m{#1}}
\newcolumntype{C}[1]{>{\centering\arraybackslash}m{#1}}
\newcolumntype{R}[1]{>{\raggedleft\arraybackslash}m{#1}}

\newlength\myheight
\newlength\mydepth
\settototalheight\myheight{Xygp}
\settodepth\mydepth{Xygp}
\setlength\fboxsep{0pt}
\newcommand*\inlinegraphics[1]{%
  \settototalheight\myheight{Xygp}%
  \settodepth\mydepth{Xygp}%
  \raisebox{-\mydepth}{\includegraphics[height=\myheight]{#1}}%
}

\usepackage{enumitem}

\usepackage{euscript,amsmath,amsthm,amsfonts,amssymb}
\usepackage{subcaption}
\hypersetup{
	pdfencoding=auto
}

\newcommand{\R}{\mathbf{R}}

\newcommand{\M}{\mathcal{M}}
\newcommand{\N}{\mathcal{N}}
\newcommand{\I}{\mathcal{I}}
\newcommand{\PP}{\mathcal{P}}
\newcommand{\dM}{\partial\mathcal{M}}
\newcommand{\vv}{\vec v}
\newcommand{\GN}{G_{\rm N}}

\newcommand{\no}{\centernot\cap}

\DeclareMathOperator{\area}{area}

\DeclareMathOperator{\ceil}{ceil}

\makeatletter
\g@addto@macro\bfseries{\boldmath} 
\makeatother

\newtheorem{theorem}{Theorem}[section]
\newtheorem{lemma}[theorem]{Lemma}
\newtheorem{corollary}[theorem]{Corollary}

\title{\boldmath Crossing versus locking: Bit threads and continuum multiflows}


\author{Matthew Headrick,}
\author{Jesse Held,}
\author{and Joel Herman}


\affiliation{Martin Fisher School of Physics \\ Brandeis University\\ Waltham MA 02453\\ USA}

\emailAdd{headrick@brandeis.edu}
\emailAdd{jmheld@brandeis.edu}
\emailAdd{joelaronherman@brandeis.edu}

\preprint{BRX-TH-6667}

\abstract{
Bit threads are curves in holographic spacetimes that manifest boundary entanglement, and are represented mathematically by continuum analogues of network flows or multiflows. Subject to a density bound, the maximum number of threads connecting a boundary region to its complement computes the Ryu-Takayanagi entropy. When considering several regions at the same time, for example in proving entropy inequalities, there are various inequivalent density bounds that can be imposed. We investigate for which choices of bound a given set of boundary regions can be ``locked'', in other words can have their entropies computed by a single thread configuration. We show that under the most stringent bound, which requires the threads to be locally parallel, non-crossing regions can in general be locked, but crossing regions cannot, where two regions are said to cross if they partially overlap and do not cover the entire boundary. We also show that, under a certain less stringent density bound, a crossing pair can be locked, and conjecture that any set of regions not containing a pairwise crossing triple can be locked, analogously to the situation for networks.
}

\begin{document} 
\maketitle
\flushbottom

\section{Introduction}

The celebrated Ryu-Takayanagi (RT) holographic entanglement entropy formula \cite{Ryu:2006bv} provides a simple and direct relationship between a quantum mechanical property of the boundary theory and a geometric property of the bulk spacetime. Specifically, it gives the entropy $S(A)$ of a boundary region $A$ as\footnote{In this paper we restrict ourselves to the domain of validity of the RT formula in the form \eqref{RT}, namely holographic field theories in a limit where the dual is governed by classical Einstein gravity, in a static or time-reflection-invariant state. Relaxing each of these limitations leads to very interesting questions about both entropy inequalities and bit threads; however, we will not venture to address those questions here.}
\begin{equation}\label{RT}
    S(A) = \frac{1}{4G_{\rm N}}\text{area}(m(A)),
\end{equation}
where $m(A)$ is the minimal-area bulk surface homologous to $A$. Among many other applications, the RT formula enables the study of entropy inequalities in the holographic setting by geometric means. These include inequalities that are obeyed by general quantum states, such as positivity, subadditivity, and strong subadditivity \cite{Headrick:2007km,Headrick:2013zda},\footnote{We use $A,B,\ldots$ to represent non-empty, non-overlapping boundary regions, which we refer as \emph{elementary} regions. More generally, a \emph{region} is an elementary region or union thereof (where $AB$ denotes $A\cup B$, etc.). We do not impose any restrictions on the dimension or topology of the bulk or on the topology of the boundary regions. Formal definitions will be given in subsection \ref{sec:manifold}.}
\begin{equation}\label{inequalities1}
S(A) \ge 0\,,\qquad
S(A)+S(B)\ge S(AB)\,,\qquad
S(AB)+S(BC)\ge S(B)+S(ABC)\,.
\end{equation}
They also include inequalities that are not obeyed by general quantum states, and therefore indicate some special properties of holographic states. The two simplest such inequalities are monogamy of mutual information (MMI) \cite{Hayden:2011ag,Headrick:2013zda},
\begin{equation}\label{MMI}
S(AB)+S(AC)+S(BC)\ge S(A)+S(B)+S(C)+S(ABC)
\end{equation}
and the following inequality for five regions, invariant under dihedral permutations \cite{Bao:2015bfa}:
\begin{multline}\label{dihedral}
S(ABC)+S(BCD)+S(CDE)+S(DEA)+S(EAB)\\
\ge S(AB)+S(BC)+S(CD)+S(DE)+S(EA)+S(ABCDE)\,.
\end{multline}
An infinite number of further inequalities, for 5 and more regions, has been found explicitly \cite{Bao:2015bfa}, and it is believed that there exist many more.
\footnote{For further recent work on higher entropy inequalities, see \cite{Czech:2019lps,Brown:2015waa,Bao:2015boa,Bao:2017oms,Bao:2018wwd,Hubeny:2018trv,Hubeny:2018ijt,He:2019ttu,Caginalp:2019mgu,Cuenca:2019uzx,He:2020xuo}.}

All of the above inequalities are proved from the RT formula by a certain type of inclusion-exclusion argument \cite{Headrick:2007km}. This argument invokes the homology condition on the minimal surfaces, which says that, for a given boundary region $A$, there exists a bulk region $r(A)$ interpolating between $A$ and $m(A)$. The argument then considers the boundaries of regions obtained by taking unions, intersections, and differences of the regions associated to the terms on the left-hand side of each inequality. For example, for subadditivity, which has $S(A)$ and $S(B)$ on the left-hand side, one considers the union of their respective bulk regions:
\begin{equation}
\tilde r(AB) :=r(A)\cup r(B)\,.
\end{equation}
Its boundary $\tilde m(AB)$ has area no larger than $\area(m(A))+\area(m(B))$; on the other hand, being homologous to $AB$, it bounds the area of the minimal surface $m(AB)$ from above, giving
\begin{equation}
S(A)+S(B) = \frac1{4G_{\rm N}}\left(\area(m(A))+\area(m(B))\right) \ge \frac1{4G_{\rm N}}\area(\tilde m(AB)) \ge S(AB)\,.
\end{equation}
All known inequalities obeyed by RT entropies have been proved using essentially more elaborate versions of this argument. Unfortunately, although the RT formula makes finding and proving such inequalities relatively easy, it does not give much intuitive insight into their physical significance.

Some of the above inequalities have also been proved using the language of bit threads \cite{Freedman:2016zud,Cui:2018dyq}, which is in a certain sense dual to that of minimal surfaces. Bit threads are unoriented bulk curves ending on the boundary, subject to the rule that the density at any point in the bulk may not exceed $1/(4G_{\rm N})$. This bound in particular implies that the number of threads crossing the minimal surface $m(A)$ cannot exceed its area divided by $4G_{\rm N}$, hence the number $N_{A:\bar A}$ connecting $A$ and its complement does not exceed $S(A)$:
\begin{equation}\label{NboundA}
N_{A:\bar A}\le S(A)\,.
\end{equation}
Borrowing terminology from the theory of flows on networks, we say that a thread configuration \emph{locks} the region $A$ if the bound \eqref{NboundA} is saturated. In fact, the bound is tight: for any $A$, there exists a locking thread configuration:
\begin{equation}\label{maxN}
\max N_{A:\bar A} = S(A)
\end{equation}
(where the maximum is over allowed thread configurations). This theorem, which is the continuum analogue of the celebrated max flow-min cut theorem for graphs, is proved in three steps: (1) rewrite the problem of maximizing $N_{A:\bar A}$ as a convex program in terms of vector fields; (2) apply strong duality of convex programs to obtain a convex minimization program; (3) show that the latter reduces to finding the minimal surface homologous to $A$. (See \cite{Headrick:2017ucz} for a detailed exposition of the proof and further references.)

Eq.\ \eqref{maxN} leads to a proof of subaddditivity as follows. Consider a thread configuration that locks $AB$, $N_{AB:O}=S(AB)$, where $O:=\overline{AB}$. This configuration will contain some number $N_{A:O}$ of threads connecting $A$ to $O$, $N_{B:O}$ connecting $B$ to $O$, and $N_{A:B}$ connecting $A$ to $B$. (It may also contain threads connecting $A$ itself, etc.) Then,
\begin{equation}\label{SAproof}
S(A)+S(B) \ge N_{A:BO}+N_{B:AO}=N_{A:O}+N_{B:O}+2N_{A:B}\ge N_{A:O}+N_{B:O}=N_{AB:O}=S(AB)\,.
\end{equation}
The first inequality comes from \eqref{NboundA}, using the fact that $\bar A=BO$, etc.

A stronger version of the max flow-min cut theorem shows that nested regions (e.g.\ $A$ and $AB$) can be locked by a single thread configuration \cite{Freedman:2016zud,Headrick:2017ucz}. Applying this fact to $B$ and $ABC$, and using the same logic as in \eqref{SAproof}, leads immediately to the strong subadditivity inequality. Conceptual implications of these proofs for the encoding of boundary entanglement in bulk geometry were discussed in \cite{Freedman:2016zud}. Note that the two proof strategies are in some sense mirror images of each other: with minimal surfaces, we start with a set of surfaces that calculates the left-hand side and then use it to bound the right-hand side; with threads, we start with a  configuration that calculates the right-hand side and then use it to bound the left-hand side.

In \cite{Cui:2018dyq}, another locking theorem for multiple regions was proved: any set of non-overlapping regions can be locked. Applying this theorem to $A$, $B$, $C$, and $O:=\overline{ABC}$ leads immediately to the MMI inequality \eqref{MMI}. The existence of such locking configurations also motivated a proposed explanation for why holographic states obey MMI, called ``bipartite dominance''. Bipartite dominance is the idea that a pure holographic state with the boundary divided into three regions contains mostly bipartite entanglement, with only a small amount of tripartite entanglement (where ``small'' means contributing at order 1 to the entropy, versus at order $1/\GN$ for the bipartite entanglement). This statement implies the MMI inequality, and also matches  the calculable entanglement structure of stabilizer tensor network models of holography \cite{Nezami:2016zni}. On the other hand, an argument against bipartite dominance has recently been given \cite{Akers:2019gcv}.

The proof of the multi-region locking theorem in \cite{Cui:2018dyq} used the same convex duality technique described below \eqref{maxN} for the max flow-min cut theorem. While this proof is non-constructive, a method for constructing multi-region locking configurations in certain cases was described in \cite{Hubeny:2018bri} (see also \cite{Agon:2018lwq}). It was furthermore conjectured in  \cite{Hubeny:2018bri} that such a configuration can be constructed in which the threads connecting different pairs of boundary regions are segregated from each other, in other words occupy non-overlapping bulk regions.

In this paper we further explore locking thread configurations, and the circumstances under which they do or do not exist. One long-term goal of this investigation---which we do not achieve here---is to prove the higher entropy inequalities such as \eqref{dihedral} using bit threads, and thereby hopefully to derive some intuition for their underlying meaning. An important feature that distinguishes the ``lower'' inequalities \eqref{inequalities1} and \eqref{MMI} from the higher ones is that the composite regions on the right-hand sides of the higher ones ``cross''. Again we are following the terminology from network theory: two boundary regions are said to cross if they partially overlap and do not cover the whole boundary; for example, $AB$ crosses $BC$ (assuming $\overline{ABC}$ is non-empty), but $AB$ does not cross $A$, $C$, or $ABC$.

On a network, the analogue of a boundary region is a subset of the network terminals. There exists a well-developed theory, going under the name \emph{multiflow} (or \emph{multicommodity flow}) \emph{locking problems}, on the question of when a set of terminal subsets can be locked. For example, it is known that a set of terminal subsets can be locked if it do not contain a triple of subsets that cross pairwise \cite{KarzanovLomonosov}. On the other hand, simple counterexamples show that a set of terminal subsets containing a crossing triple cannot in general be locked. (See Appendix \ref{sec:networks} for a detailed discussion and further references.)

In this paper, working in the Riemannian manifold setting, we similarly relate the ability to lock a set of boundary regions to their crossing properties. However, we will see that the analogy between networks and manifolds is not straightforward; one cannot simply assume that theorems that hold on networks necessarily hold on manifolds. One significant complication arises from what one means precisely by ``density'' in the rule that the thread density must not exceed $1/(4\GN)$. Consider the threads passing through a small neighborhood. If these threads are parallel to each other, then there is a clear definition of their density, namely the number per unit transverse area. However, if they are not parallel, then there are various ways the density can be defined, and therefore bounded. In this paper we study three possible bounds: the  strongest sensible one, which we call $\nu_c$, \emph{requires} the threads to be locally parallel; the weakest sensible bound, $\nu_a$, bounds the number passing through any given surface; while an intermediate one, $\nu_v$, which is the easiest to analyze mathematically and was previously applied in \cite{Cui:2018dyq}, bounds the total length passing through a given volume. It is not clear physically which, if any, of these density bounds is the correct one for holographic bit threads. Indeed, finding the right density bound is one of the motivations for our investigation: If bit threads do play a fundamental role in connecting geometry to entanglement, then it is important to understand in detail the rules that they follow.

We first study the case of non-crossing boundary regions. In Theorem \ref{WCLT}, we show that under the $\nu_v$ density bound any crossing-free region set can be simultaneously locked. (This encompasses Theorems 1 and 2 of \cite{Cui:2018dyq} as well as the nesting property mentioned above.) This theorem is proven in two different ways: using convex dualization, and by decomposing the bulk into regions along minimal surfaces and then applying Theorem 1 of \cite{Cui:2018dyq} to each bulk region. Building on this theorem, we then prove in Theorem \ref{NOLT} that the threads in a locking configuration can be segregated, and therefore obey the stronger $\nu_c$ density bound. This proves the conjecture of \cite{Hubeny:2018bri} as a special case.

We then turn to the case of crossing region sets. We first show by an explicit counterexample that the $\nu_v$ density bound does \emph{not} allow crossing regions to be locked in general. However, we then show in Theorem \ref{continuumcrossing} that the weaker $\nu_a$ density bound \emph{does} allow any pair of crossing regions to be locked. We conjecture that the crossing-triple-free condition, which guarantees locking in the network setting, also does so on manifolds under the $\nu_a$ density bound. If true, this conjecture would restore the analogy between networks and Riemannian manifolds as far as lockability is concerned.

All of this leaves the problem of how to prove the higher entropy inequalities using bit threads. The regions on the right-hand side of the dihedral inequality \eqref{dihedral} are crossing triple-free. Thus, if our conjecture is correct, this inequality can be proved using bit threads subject to the $\nu_a$ density bound. On the other hand, interestingly, every other known holographic entropy inequality \emph{does} have a crossing triple among its right-hand side terms, and thus \emph{cannot} be proved this way. Proving these inequalities using threads thus seems to require a more radical relaxation in the density bound, such as having sets of threads living on different ``sheets'' and interacting only on the boundary.\footnote{Such a set-up was used in \cite{Harper:2019lff} for computing a quantity called holographic multipartite entanglement of purification.} We leave the analysis of such ideas to future work.

We are also left with the question of which, if any, of the density bounds we consider is the physically appropriate one for bit threads. The proof of the MMI inequality, as well as its conjectured interpretation in terms of the entanglement structure of holographic states, requires locking only a non-crossing set of regions. Our Theorem \ref{NOLT} shows that this can be accomplished even with the most stringent, $\nu_c$, density bound (which forces the threads to be locally parallel), suggesting that it is an appropriate condition to put on the threads. Weakening the density bound to the intermediate one $\nu_v$ does not buy us anything, in the sense of being able to lock more kinds of regions or prove more inequalities. As discussed above, relaxing further to the $\nu_a$ bound will at most allow us to prove one further inequality. A unified perspective on the higher inequalities in any case seems to require some other kind of structure. So an argument could be made that, overall, the results of this paper support the $\nu_c$ bound as the most physically well motivated, although clearly to resolve this question will require some new ideas.

The structure of this paper is as follows. In Section \ref{sec:background}, we establish the mathematical background for our work. In particular, we define the $\nu_{a,c,v}$ density bounds. We also define the notion of a \emph{multiflow} on a manifold, which is the main mathematical tool we use throughout the paper for proving theorems about threads, and study the precise relationship between thread configurations and multiflows. In Section \ref{section:WCLT}, we prove that an arbitrary crossing-free region set can be locked under the $\nu_v$ density bound. In Section \ref{section:NOLT}, we build on this result to prove that the more stringent $\nu_c$ bound also allows a crossing-free region set to be locked. In Section \ref{section:crossing}, we show that these results do not extend to crossing regions, by exhibiting a simple counterexample in which $AB$ and $BC$ cannot both be locked under the $\nu_v$ bound. Finally, in Section \ref{section:altBound}, we show that with the less stringent $\nu_a$ bound, any pair of crossing regions \emph{can} be locked, and we conjecture that this result extends to any crossing-triple-free region set, by analogy to the standard locking theorem on networks. Table \ref{summaryTable} summarizes the current state of knowledge, given the results in this paper, concerning locking of different types of region sets under different density bounds. The appendix contains a review of multiflows on networks and their relation to multiflows on manifolds, as well as the proof of a network theorem analogous to Theorem \ref{continuumcrossing}; this theorem is essentially a special case of the standard network locking theorem, but the proof method is novel as far as we know.

\section{Background}
\label{sec:background}

In this section we lay out the important concepts and definitions that we will use in the rest of the paper. These include the mathematical setting, the notions of threads, flows, and multiflows and their interrelations, important concepts such as crossing and locking, and the dualization of a certain class of multiflow maximization problems. The content of this section expands upon the framework laid out in the two papers \cite{Freedman:2016zud,Cui:2018dyq}. 

\subsection{Manifold and regions}
\label{sec:manifold}

The mathematical setting for our work, fixed throughout this paper, is a $d$-dimensional compact Riemannian manifold-with-boundary $\M$. $\M$ represents a time reflection-invariant Cauchy slice (moment of time symmetry) of a holographic spacetime. Its boundary $\dM$ represents a Cauchy slice of the conformal boundary where the field theory resides,\footnote{We are implicitly assuming that the dual state is pure, so that $\M$ is not bounded by any horizons; to our knowledge, any holographic mixed state (such as a thermal state) can be purified holographically, so this assumption comes without loss of generality. One can also consider ``internal'' boundaries that are not part of the conformal boundary where the field theory lives, such as confining walls and end-of-the-world branes. To avoid cluttering the discussion, we will not consider such internal boundaries here; however, they can easily be included simply by not allowing threads to end on them. See \cite{Headrick:2017ucz} for a discussion of how to include such boundaries in the dualization and max flow-min cut theorem.} with a suitable cutoff. A boundary (bulk) \emph{region} is a codimension-0 embedded submanifold-with-boundary of $\dM$ ($\M$). We abbreviate the flux of a vector field $\vv$ on a boundary region $A$ by $\int_A\vv$:
\begin{equation}
 \int_A\vv:=\int_A\sqrt h\,\hat n\cdot\vv\,,
\end{equation}
where $h$ is the determinant of the induced metric and $\hat n$ is the inward-directed unit normal vector on $\dM$.

The union of two boundary (bulk) regions $A,A'$ is a boundary (bulk) region. The same is not necessarily true of their intersection, which may contain codimension-1 or higher subsets. We say that $A$, $A'$ are \emph{non-overlapping}, and write
\begin{equation}\label{nodef}
A\no A'\,,
\end{equation}
if $A\cap A'$ is empty or has codimension-1 or higher, in other words if the interiors of $A$ and $A'$ do not intersect.

We will be considering various sets of (possibly overlapping) boundary regions. For this purpose, it is notationally convenient to effectively discretize $\dM$ by fixing a ``basis'' of non-overlapping  regions covering $\dM$, of which the various regions we consider are unions. Specifically, the \emph{elementary regions} $A_1,\ldots,A_n$ (which we sometimes write $A,B,\ldots$) are such that
\begin{equation}
A_i\no A_j\quad (i\neq j)\,,\qquad\bigcup_{i=1}^nA_i = \dM\,.
\end{equation}
The $A_i$ may have arbitrary topology; for example, we do not require them to be connected. Throughout this paper, the indices $i,j,k,l$ run from 1 to $n$.

A \emph{composite region} $A_I$ is a union of elementary regions:\footnote{This notation is borrowed from \cite{Bao:2015bfa}. However, note that, in contrast to that reference, the $A_i$ cover all of $\dM$; we do not include a ``purifying'' region $O$.}
\begin{equation}
A_I = \bigcup_{i\in I}A_i\,,
\end{equation}
where $I\subseteq[n]:=\{1,\ldots,n\}$. (The elementary regions, along with the empty set and $\dM$, are thus special cases of composite regions.) We will also sometimes denote a composite region simply by writing its component elementary regions, for example $AB=A_{\{1,2\}}$. The complement of an index set is taken within $[n]$:
\begin{equation}
\bar I:=[n]\setminus I\,.
\end{equation}
As a subset of $\dM$, $A_{\bar I}$ is the closure of the setwise complement $\dM\setminus A_I$.

For convenience we assume the generic situation in which each composite region $A_I$ has a unique minimal-area homologous surface, which we denote $m(A_I)$, with $S(A_I)$ its area:
\begin{equation}
S(A_I):=\area(m(A_I))\,,
\end{equation}
In the holographic setting, in units where $4\GN=1$, $S(A_I)$ is the entropy of $A_I$ according to the Ryu-Takayanagi formula 
\cite{Ryu:2006bv}. We denote the homology region $r(A_I)$ (so $\partial r(A_I) = A_I\cup m(A_I)$). Under taking the complement, we have
\begin{equation}
m(A_{\bar I})=m(A_I)\,,\qquad r(A_{\bar I}) = \overline{r(A_I)}\,,\qquad S(A_{\bar I})=S(A_I)\,. 
\end{equation}
And of course we have
\begin{equation}
m(\emptyset)=m(\dM)=\emptyset\,,\qquad
r(\emptyset)=\emptyset\,,\qquad r(\dM)=\M\,,\qquad S(\emptyset)=S(\dM)=0\,.
\end{equation}

Two index sets $I,J$, or the corresponding regions $A_I$, $A_J$, are said to \emph{cross} if all of the following index sets are non-empty:
\begin{equation}\label{crossingdef}
I\cap J\,,\quad
I\cap \bar J\,,\quad
\bar I\cap J\,,\quad
\bar I\cap\bar J\,.
\end{equation}
For example, if $n=4$, then $AB$ crosses $BC$, but does not cross $A$, $ABC$, or $D$. Note that crossing is invariant under taking the complement of either or both regions: if $A_I$ and $A_J$ cross, then so do $A_{\bar I}$ and $A_J$, $A_I$ and $A_{\bar J}$, and $A_{\bar I}$ and $A_{\bar J}$. $A_I$ and $A_J$ do \emph{not} cross if and only if at least one of the following conditions holds:
\begin{equation}\label{notcrossing}
I\cap J = \emptyset\,,\quad
I\subseteq J\,,\quad
J\subseteq I\,,\quad
I\cup J = [n]\,,
\end{equation}
or in terms of the regions themselves,
\begin{equation}\label{notcrossing2}
A_I\no A_J \,,\quad
A_I\subseteq A_J\,,\quad
A_J\subseteq A_I\,,\quad
A_I\cup A_J = \dM\,,
\end{equation}
By the nesting property of homology regions \cite{Headrick:2013zda}, their homology regions $r(A_I)$, $r(A_J)$ then obey the corresponding property or properties:
\begin{equation}\label{notcrossingU}
r(A_I)\no r(A_J) \,,\quad
r(A_I)\subseteq r(A_J)\,,\quad
r(A_J)\subseteq r(A_I)\,,\quad
r(A_I)\cup r(A_J) = \M\,.
\end{equation}

\subsection{Threads and density bounds}
\label{subsection:threads}

The basic object in the bit-thread formulation of holographic entanglement \cite{Freedman:2016zud} is the \emph{thread}, an unoriented curve in $\M$ with no endpoints in the interior (so that it either ends on $\dM$ or is closed\footnote{We permit closed threads purely as a matter of mathematical convenience, allowing us to pass more easily between the thread and flow languages. (Indeed, much of the power of the flow language derives from the fact that it involves only local constraints, using which one cannot rule out closed flow lines.) In the holographic setting, they do not (as far as we know) correspond to any form of entanglement. Mathematically, for the purposes of locking, our main interest in this paper, they play no role and can always be deleted.}). We will be considering collections of threads, on which we impose the requirement that the density is everywhere less than or equal to $1$.\footnote{Restoring units, the density is required to be less than or equal to $1/4\GN$. Since the threads are 1-dimensional, the density has units of length$^{1-d}$.} This raises the question of what we mean by the ``density'' of a set of threads. In fact, we will consider several possible notions of density in this paper. Throughout our discussion, we are not interested in individual threads, but only in a coarse-grained description of macroscopic numbers of threads on length scales much larger than the Planck scale (which we've set to 1 by setting $4G_{\rm N}=1$).\footnote{As discussed in \cite{Cui:2018dyq}, we can define a thread configuration mathematically as a set of unoriented curves equipped with a measure, allowing us to ``count'' the number passing through a surface or connecting one boundary region to another. However, in this paper we will not aim for mathematical precision in our discussion of threads.}

If the threads are locally parallel---i.e.\ parallel in every small neighborhood---then the only natural definition of density is the transverse density, which we call $\nu_t$; this is the number per unit area intersecting a small disk orthogonal to the threads. (Here ``small'' means small compared to the curvature scales of $\M$ but large compared to the Planck scale.) When the threads are not locally parallel, however, there are at least two natural notions of density, based on volume and area respectively. We define $\nu_v$ as the total length of threads contained in a small ball divided by its volume.  This notion of density was used in \cite{Cui:2018dyq}. We define $\nu_a$ as the number per unit area intersecting a small disk, maximized over the orientation of the disk. It is straightforward to see that $\nu_v\ge\nu_a$, and if the threads are locally parallel then $\nu_v=\nu_a=\nu_t$. We thus have two possible density bounds:
\begin{equation}\label{density1}
\nu_a\le1\,,
\end{equation}
and
\begin{equation}\label{density0}
\nu_v\le1
\end{equation}
with \eqref{density0} stronger than \eqref{density1}. A third possibility, stronger than both, is to \emph{require} the threads to be locally parallel everywhere:
\begin{equation}\label{densityc}
\text{threads locally parallel,}\quad\nu_t\le1\,;
\end{equation}
we call this the $\nu_c$ condition. One could easily invent yet other density conditions, but these three are the ones we will consider in this paper.

Any of the above conditions implies that the number of threads crossing any surface is bounded by its area. In particular, the number $N_{A_I:A_{\bar I}}$ connecting a boundary region $A_I$ and its complement is bounded by $S(A_I)$:
\begin{equation}\label{Nbound}
N_{A_I:A_{\bar I}}\le S(A_I)\,.
\end{equation}
Furthermore, as we will show below, the max flow-min cut theorem applies for any of the density bounds; this theorem says
\begin{equation}\label{mfmc}
\max N_{A_I:A_{\bar I}} = S(A_I)\,,
\end{equation}
where the maximum is over thread configurations. %
Thus the threads reproduce the RT formula for any of the density bounds. We say that a thread configuration that achieves the maximum \eqref{mfmc} \emph{locks} $A_I$. Clearly, a configuration that locks $A_I$ also locks $A_{\bar I}$. 
Under the bound \eqref{density0} or \eqref{densityc}, %
a necessary and sufficient condition for a thread configuration to lock $A_I$ is for the threads that intersect $m(A_I)$ to (1) be orthogonal to it; (2) saturate the density bounds \eqref{density0} and \eqref{densityc} on $m(A_I)$ (these are equivalent since the threads are locally parallel there); and (3) cross $m(A_I)$ only once.

The conditions \eqref{density1}, \eqref{density0}, \eqref{densityc} represent the full range, from weakest to strongest, of local density conditions that guarantee \eqref{mfmc}; this property is essential since it is what allows thread configurations to compute minimal surface areas and thereby RT entropies. \eqref{density1} simply states that the number of threads crossing a surface is bounded by its area, and is therefore the weakest condition guaranteeing the bound \eqref{Nbound}. Unfortunately, \eqref{density1} is hard to work with mathematically, for example to prove locking theorems. (A locking theorem is an existence theorem for thread configurations that lock sets of composite regions.) We will discuss it further and conjecture a locking theorem in section \ref{section:altBound}. \eqref{density0} is stronger but much easier to work with. It was used in \cite{Cui:2018dyq} to prove two locking theorems. We will generalize these theorems in section \ref{section:WCLT} to a general locking theorem for sets of regions with no crossing. However, in section \ref{section:crossing} we will show that the bound can forbid locking of crossing regions. \eqref{density1} and \eqref{density0} are both natural generalizations of the notion of a capacity constraint for a multicommodity flow on a graph, which is a bound on the total amount of all the commodities flowing through a given edge; the distinction between \eqref{density1} and \eqref{density0} only occurs when threads cross at an angle, but there is no analogue of this phenomenon on a graph. \eqref{densityc} is the strongest local condition that can be imposed while preserving \eqref{mfmc}; again it is more difficult to work with than \eqref{density0}, in particular because, when we represent thread configurations by vector fields, it does not correspond to a convex condition. Nonetheless, in section \ref{section:NOLT}, we will succeed in proving a locking theorem for non-crossing regions under the condition
%
 \eqref{densityc}.

\subsection{Flows}
\label{subsection:Flows_Basics}

We are interested in studying the behavior of threads on macroscopic scales---much larger than the Planck scale. A mathematically convenient continuum description is provided by vector fields. Specifically, a divergenceless vector field $\vv$ represents a set of threads with transverse density $\nu_t=|\vv|$ and, where $\vv\neq0$, direction parallel to $\vv$. Therefore the thread configuration, which is automatically locally parallel, obeys all of the density bounds discussed in the previous section if and only if $|\vv|\le1$ everywhere. We call a vector field that obeys $\nabla\cdot\vv=0$ and $|\vv|\le1$ a \emph{flow}. We note that the density of thread endpoints at a point on $\dM$ is $|\hat n\cdot\vv|$.

While a flow gives rise to a thread configuration, the mapping from flows to thread configurations is neither one-to-one nor onto. It is not one-to-one because the vector field implicitly assigns an orientation to the threads. Changing this orientation on any subset of the threads leads to a different flow; for example, $\vv$ and $-\vv$ correspond to the same thread configuration. The mapping is not onto because a configuration in which the threads are not locally parallel, i.e.\ in which threads cross each other at angles, is not represented by any flow.

In general for any flow $\vv$ and region $A_I$ we have
\begin{equation}\label{flowbound}
\int_{A_I}\vv\le S(A_I)\,.
\end{equation}
The max flow-min cut theorem (see \cite{Headrick:2017ucz} and references therein) says that the bound \eqref{flowbound} is tight:
\begin{theorem}[Riemannian max flow-min cut]\label{mfmctheorem}
Let $A_I$ be a boundary region. Then there exists a flow $\vv$ such that
\begin{equation}\label{mfmc2}
\int_{A_I}\vv = S(A_I)\,.
\end{equation}
\end{theorem}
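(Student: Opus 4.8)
The plan is to prove the Riemannian max flow-min cut theorem via convex duality, following the standard three-step strategy summarized below equation~\eqref{maxN} in the excerpt. First I would set up the maximization of $\int_{A_I}\vv$ over all flows as a convex optimization program: the objective $\int_{A_I}\vv$ is linear in $\vv$, and the constraints $\nabla\cdot\vv=0$ (equality) and $|\vv|\le1$ (a pointwise norm bound) are both convex, so maximizing a linear functional over a convex feasible set is a concave program. The constraint set is nonempty (e.g.\ $\vv=0$) and the feasible region is compact in a suitable topology, so the maximum is attained, which already establishes that \emph{some} optimal flow exists; the content of the theorem is that the optimal value equals $S(A_I)$.

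The key step is to dualize. I would introduce a Lagrange multiplier field $\phi$ for the divergenceless constraint and a nonnegative multiplier $\psi$ for the norm bound, writing the Lagrangian and then extremizing over $\vv$. Integrating the term $\int_\M\phi\,\nabla\cdot\vv$ by parts produces a bulk term $-\int_\M\vv\cdot\nabla\phi$ together with a boundary term; the boundary term should combine with the objective $\int_{A_I}\vv$ so that $\phi$ is forced to satisfy a Dirichlet-type boundary condition (jumping by one across $A_I$ versus $A_{\bar I}$). Extremizing the Lagrangian over the unconstrained-direction but norm-bounded field $\vv$ replaces $|\vv|\le1$ with the pointwise penalty $|\nabla\phi|$, yielding the dual program: minimize $\int_\M|\nabla\phi|$ over scalar functions $\phi$ subject to the boundary condition that $\phi=1$ on $A_I$ and $\phi=0$ on $A_{\bar I}$ (up to an overall normalization convention). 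Strong duality holds here because the program is convex and a Slater-type interior-point condition is satisfied, so the primal max equals the dual min.

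The final step is to recognize the dual program geometrically. The quantity $\int_\M|\nabla\phi|$ for a function interpolating from $0$ to $1$ is, by the coarea formula, the integral over level sets of the function of the areas of those level sets, weighted so that the minimum is achieved by a function whose level sets degenerate onto a single hypersurface. Minimizing $\int_\M|\nabla\phi|$ subject to the fixed boundary values is therefore equivalent to finding the minimal-area hypersurface separating $A_I$ from $A_{\bar I}$ in the homology class fixed by the boundary condition---that is, the Ryu-Takayanagi surface $m(A_I)$---so the dual optimum equals $\area(m(A_I))=S(A_I)$. Combining this with strong duality gives $\max\int_{A_I}\vv=S(A_I)$, and the existence of a maximizing flow (from compactness) then yields a flow achieving \eqref{mfmc2}.

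I expect the main obstacle to be the rigorous justification of strong duality and of the passage between the dual relaxation $\int_\M|\nabla\phi|$ and the geometric minimal-surface problem. In the continuum the optimal $\phi$ is generically not smooth---it wants to be a step function jumping across $m(A_I)$---so the relaxation from indicator-like functions to $BV$ functions, and the application of the coarea formula to identify $\min\int_\M|\nabla\phi|$ with $\area(m(A_I))$, require care with function spaces. Since the excerpt explicitly defers the detailed proof to \cite{Headrick:2017ucz}, I would cite that reference for the analytic technicalities and present the convex-duality derivation at the formal level, emphasizing the structure of the dualization rather than the measure-theoretic details.
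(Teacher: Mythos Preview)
Your proposal is correct and follows essentially the same three-step approach the paper sketches in the introduction (convex primal $\to$ Lagrangian dualization $\to$ coarea reduction to the minimal-surface problem); note that the paper itself does not actually prove Theorem~\ref{mfmctheorem} but simply states it and defers to \cite{Headrick:2017ucz}, so your write-up would in fact supply more detail than the paper does. Your identification of the technical obstacles (strong duality, $BV$ relaxation, coarea) is also accurate and matches what is handled in the cited reference.
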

\noindent A flow $\vv$ is said to \emph{lock} $A_I$ if the underlying thread configuration does so. \eqref{mfmc2} provides a sufficient condition. (It is not a necessary condition. For example, if $S(A_I)>0$ and $\vv$ achieves the maximum then $-\vv$ does not, although they represent the same thread configuration.)

\subsection{Multiflows}
\label{subsection:Multiflows_Basics}

\subsubsection{Definition}

Throughout this paper, we will be very interested in how many threads connect various boundary regions in some given configuration. To keep track of this data, it is useful to define a divergenceless vector field $\vv_{ij}$ for each pair of distinct elementary regions, representing the threads connecting  $A_i$ and $A_j$; we thus require $\hat n\cdot\vv_{ij}=0$ on $A_k$ for $k\neq i,j$. We don't need an independent vector field $\vv_{ji}$; we therefore define only $\vv_{ij}$ with $i<j$. We will denote by $V$ the set of vector fields $(\vv_{ij})_{i<j}$. We will call each $\vv_{ij}$ a \emph{component flow} and the corresponding set of threads a \emph{thread bundle}. (In addition to threads connecting $A_i$ and $A_j$, this bundle may include closed threads and ones that connect $A_i$ to itself or $A_j$ to itself.)

Each thread bundle is locally parallel and has density $|\vv_{ij}|$. However, at a given point there may be multiple non-zero component flows, and these may not be parallel. At such a point threads cross at angles, leading to different values for the volume and area densities defined in subsection \ref{subsection:threads}. The total volume density is simply the sum of the densities of the individual bundles:
\begin{equation}\label{nu0def}
\nu_v(V)=\sum_{i<j}|\vv_{ij}|\,.
\end{equation}
The density on a small disk normal to the unit vector $\hat n$ for $\vv_{ij}$ is $|\hat n\cdot\vv_{ij}|$, so the total area density is
\begin{equation}\label{nu1def}
\nu_a(V)=\max_{\hat n}\sum_{i<j}|\hat n\cdot\vv_{ij}|\,.
\end{equation}
To implement the condition \eqref{densityc}, which forbids bundles from overlapping, we require $\nu_c(V)\le1$, where
\begin{equation}\label{nucdef}
\nu_c(V) := \sum_{i<j}\ceil(|\vv_{ij}|)\,,
\end{equation}
and $\ceil$ is the ceiling function (the smallest integer greater than or equal to its argument).\footnote{At first sight, the condition $\nu_c(V)\le1$ may appear stronger than \eqref{densityc}, since the former forbids bundles from overlapping, whereas the latter merely forces them to be parallel. However, since distinct bundles start and/or end on distinct boundary regions, in order to overlap they must be non-parallel somewhere.} These three collective norms\footnote{We are slightly abusing the term ``norm'', since $\nu_c$ does not obey the homogeneity property usually required of a norm.} are related as follows:
\begin{equation}
\nu_c(V) \ge\nu_v(V)\ge\nu_a(V)\,;
\end{equation}
therefore requiring $\nu_c(V)\le1$ imposes the strongest condition and $\nu_a(V)\le1$ the weakest.

The above conditions are summarized in the following definition. A \emph{$\nu_{v,a,c}$-multiflow} $V$ is a set of vector fields $\vv_{ij}$ ($i<j$) obeying:
\begin{align}
\left.\hat n\cdot\vv_{ij}\right|_{A_k}&=0\quad (k\neq i,j) \label{eq:flow_terminal} \\
\nabla\cdot\vv_{ij} &= 0 \label{eq:divergenceless1}\\
\nu_{v,a,c}(V) &\le 1 \,.\label{nubound}
\end{align}
The term ``multiflow'' as defined in \cite{Cui:2018dyq} would refer to what we would call a ``$\nu_v$-multiflow'' here.\footnote{There is also the trivial notational difference that in \cite{Cui:2018dyq} the fields $\vv_{ij}$ were also defined for $i\ge j$, but with the constraint $\vv_{ji}=-\vv_{ij}$.} The conditions \eqref{eq:flow_terminal} and \eqref{eq:divergenceless1} imply
\begin{equation}\label{vji}
\int_{A_j}\vv_{ij} = -\int_{A_i}\vv_{ij}\,.
\end{equation}

Note that \eqref{eq:flow_terminal}, \eqref{eq:divergenceless1} are linear equality constraints, and the functions $\nu_v$ and $\nu_a$ are convex functions; therefore the $\nu_v$- and $\nu_a$-multiflows form convex sets. On the other hand the condition $\nu_c(V)\le1$, which enforces non-overlap of thread bundles, is not convex.

\subsubsection{Relation to threads and flows}
\label{sec:multiflow relations}

Just as for flows, while a multiflow gives rise to a thread configuration obeying the corresponding density bound, the mapping is neither one-to-one nor onto. The mapping is not onto because, in the configuration corresponding to a multiflow, each thread bundle must be locally parallel. The mapping is not one-to-one because, again, we can reverse the direction of a component flow, or part thereof, without changing the corresponding thread configuration.\footnote{We can fix the ambiguity in the direction of the component flows by forcing the flow lines of $\vv_{ij}$ to go from $A_i$ to $A_j$. This would be accomplished by adding to the definition of a multiflow the further constraint $\hat n\cdot\vv_{ij}\ge0$ on $A_i$. This essentially amounts to fixing a gauge ambiguity and would not change anything in the main substance of the paper. It would lead to the following changes to the material in the rest of this section: It would guarantee that \eqref{fluxnumb} is obeyed; eliminate the possibility of threads connecting $A_i$ to itself, requiring us to delete such threads in the procedure below \eqref{fluxnumb} to convert a flow to a multiflow; make \eqref{mfmc2}, as applied to \eqref{v from vij}, a necessary (in addition to sufficient) condition for the multiflow $V$ to lock $A_I$; and change the constraints \eqref{eq:21} for the dual program to $\psi_{ij}\ge N_{ij}$ on $A_i$, $\psi_{ij}\le0$ on $A_j$. It would leave unchanged the dual program in the second form \eqref{dual2}.} If $\hat n\cdot\vv_{ij}\ge0$ on $A_i$ then
\begin{equation}\label{fluxnumb}
\int_{A_i}\vv_{ij} =N_{A_i:A_j}\,,
\end{equation}
where $N_{A_i:A_j}$ is the number of threads connecting $A_i$ and $A_j$ in the corresponding bundle.

At several points in this paper, it will be useful convert a flow into a multiflow or vice versa. To convert a flow into a multiflow, simply first convert it into a thread configuration and then separate the threads into bundles according to the regions they start and end on. Closed threads can be assigned to any bundle, and threads that start and end on the same region $A_j$ assigned to the $ij$ bundle for any $i<j$ or the $jk$ bundle for any $k>j$. The thread configuration is necessarily locally parallel, so each bundle will also be locally parallel, and therefore can be converted into a component flow. 
Since the thread bundles are non-overlapping, with density less than or equal to 1, the multiflow obeys the strictest norm bound, $\nu_c\le1$, hence either of the others as well.

Conversely, for any $\nu$, a $\nu$-multiflow $V$ can be converted to a flow by taking linear combinations of the component flows; the vector field
\begin{equation}\label{lincomb}
\vv = \sum_{i<j}\xi_{ij}\vv_{ij}\,,
\end{equation}
where each $\xi_{ij}$ is a constant between $-1$ and 1, is divergenceless and satisfies $|\vv|\le1$. Note that, while the decomposition of a flow into a multiflow described in the previous paragraph does not modify the underlying thread configuration, the linear combinations \eqref{lincomb} may do so.

\subsubsection{Locking}
\label{sec:locking}

Fix a region $A_I$ and multiflow $V$. We can pick out the components that flow from $A_I$ to $A_{\bar I}$ by setting
\begin{equation}
\xi_{ij} = \begin{cases}
+1\,,&\quad i\in I,j\in\bar I \\
-1\,,&\quad i\in\bar I,j\in I \\
0\,,&\quad \text{otherwise}
\end{cases}
\end{equation}
in \eqref{lincomb}, yielding the flow
\begin{equation}\label{v from vij}
\vv_I := \sum_{I\ni i<j\in\bar I}\vv_{ij} - \sum_{\bar I\ni i<j\in I}\vv_{ij}\,.
\end{equation}
Its flux is
\begin{equation}\label{fluxvI}
\int_{A_I}\vv_I=
\sum_{I\ni i<j\in\bar I}\int_{A_i}\vv_{ij}+
\sum_{\bar I\ni i<j\in I}\int_{A_i}\vv_{ij}
\end{equation}
(where we used \eqref{vji}). Being a flow, $\vv_I$ obeys \eqref{flowbound}. The multiflow is said to \emph{lock} $A_I$ if the underlying thread configuration does, and a sufficient condition is \eqref{mfmc2}. For any $I$ and any choice of $\nu$ there exists a locking $\nu$-multiflow, since we can convert a locking flow $\vv$ into a $\nu$-multiflow by the procedure described in the previous subsection.

A theorem of \cite{Cui:2018dyq} shows that there exists a $\nu_v$-multiflow that simultaneously locks all the elementary regions $A_i$; by a redefinition of the elementary regions, this also implies that any disjoint set of composite regions can be simultaneously locked. A principal goal of this paper is to investigate possible generalizations of this theorem. We say that a thread configuration or multiflow \emph{locks} a family $\I\subseteq2^{[n]}$ of subsets of $[n]$ if it locks $A_I$ for all $I\in\I$.

One of our main strategies for proving the possibility or impossibility of locking a given set $\I$ with a given type of multiflow is as follows. Given a multiflow $V$, we have, for all $I\in\I$,
\begin{equation}\label{fluxbound}
\int_{A_I}\vv_I\le S(A_I)\,.
\end{equation}
Therefore
\begin{equation}\label{total flux bound}
\sum_{I\in\I} \int_{A_I}\vv_I\le \sum_{I\in\I} S(A_I)\,,
\end{equation}
and furthermore \eqref{total flux bound} is saturated if and only if \eqref{fluxbound} is saturated for all $I\in\I$, which is true if and only if $\I$ is locked. So the strategy is to maximize the left-hand side of \eqref{total flux bound} over $\nu$-multiflows and seeing if it is saturated.

\subsection{Dualization of $\nu_v$-multiflow problems}
\label{subsection:dual}

We call the problem of maximizing the left-hand side of \eqref{total flux bound} (for a given $\nu$ and $\I$) the \emph{primal} program:
\begin{equation}\label{primal}
\text{Maximize }\sum_{I\in\I}\int_{A_I}\vv_I
\quad\text{over $\nu$-multiflows $V$}\,.
\end{equation}
As noted below \eqref{nubound}, the set of $\nu_v$- and $\nu_a$-multiflows form convex sets, so the primal program is a convex optimization problem for those two norm bounds. This allows us to bring to bear the power of the theory of convex optimization. In particular, it is often very useful to study the dual of a convex program, which under certain mild conditions has the same optimal value as the primal (so-called \emph{strong duality}). See \cite{Headrick:2017ucz} for a physicist-friendly introduction to strong duality and its application to Riemannian flow problems.

In this subsection we will derive the dual of the $\nu_v$ primal program. This is very similar to the dualizations considered in \cite{Cui:2018dyq}, so we will be brief. Unfortunately, the $\nu_a$ norm bound is much more difficult to work with and we were not able to derive the dual of the $\nu_a$ primal program in any useful form.

We define $n_{ij}$ to be the number of sets $I\in\I$ which contain $i$ and not $j$,
\begin{equation}\label{nijdef}
n_{ij}:=\left|\{I\in\I|i\in I,j\in \bar I\}\right|\,,
\end{equation}
and $N_{ij}$ as the symmetrized number:
\begin{equation}\label{Nijdef}
N_{ij}:=n_{ij}+n_{ji}\,.
\end{equation}
After applying \eqref{fluxvI} to the dual objective, this allows us to switch the order of the sums:
\begin{equation}
\sum_{I\in\I}\int_{A_I}\vv_I = 
\sum_{I\in\I}\sum_{I\ni i<j\in\bar I}\int_{A_i}\vv_{ij} +\sum_{I\in\I}\sum_{\bar I\ni i<j\in I}\int_{A_i}\vv_{ij} =
\sum_{i<j}N_{ij}\int_{A_i}\vv_{ij}\,.
\end{equation}

We will let the boundary condition \eqref{eq:flow_terminal} be implicit and the divergenceless constraint \eqref{eq:divergenceless1} and norm bound \eqref{nubound} be explicit. This means that we introduce Lagrange multipliers, $\psi_{ij}$ ($i<j$) and $\lambda$ respectively, for the latter two constraints. We then solve the maximization problem without imposing those constraints, but while still imposing \eqref{eq:flow_terminal}. Adding the Lagrange multiplier terms to the objective, we arrive at the following Lagrangian functional:
\begin{equation}
 L=  \sum_{i<j}N_{ij}\int_{A_{i}}\vv_{ij} +\int_{\M}\sqrt g\left(\lambda +\sum_{i<j}\left(\psi_{ij}\nabla\cdot \vv_{ij}-\lambda|\vv_{ij}|\right)\right).
\end{equation}
Integrating the divergence term by parts yields
\begin{equation}
L=  \sum_{i<j}\left(\int_{A_{i}}\vv_{ij}(N_{ij}-\psi_{ij})-\int_{A_j}\vv_{ij}\,\psi_{ij}\right) +\int_{\M}\sqrt g\left(\lambda -\sum_{i<j}\left(\nabla\psi_{ij}\cdot \vv_{ij}+\lambda|\vv_{ij}|\right)\right).
\end{equation}
Demanding that the maximum of the Lagrangian with respect to $\vv_{ij}$ be finite implies the following constraints for the dual program:
\begin{equation}
    \lambda\geq |\nabla\psi_{ij}|,
    \label{eq:20}
\end{equation}
\begin{equation}
    \psi_{ij}|_{A_{i}}=N_{ij} \,,\qquad\psi_{ij}|_{A_{j}}=0\,.
    \label{eq:21}
\end{equation}
Maximizing over $\vv_{ij}$ then yields just $\int_{\M}\sqrt g\lambda$ for the dual objective. The \emph{dual program} is thus
\begin{equation}
\text{Minimize $\int_{\M}\sqrt g\,\lambda$ with respect to $\lambda,\{\psi_{ij}\}$, subject to \eqref{eq:20}, \eqref{eq:21}.}
\end{equation} 
It is easily seen that Slater's condition is obeyed, so strong duality holds: the primal and dual programs have the same optimal values.

We can rewrite the dual problem in a way that removes the $\psi_{ij}$s entirely. Together, \eqref{eq:20} and \eqref{eq:21} are equivalent to the following two constraints on $\lambda$:
\begin{equation}\label{positive}
\lambda\ge0
\end{equation}
\begin{equation}
    \int_{c}ds\,\lambda \geq N_{ij}\quad\text{$\forall$ curve $c$ from $A_{i}$ to $A_{j}$}\,.
    \label{eq:pathBound}
\end{equation}
The fact that \eqref{eq:20} and \eqref{eq:21} imply \eqref{positive} and \eqref{eq:pathBound} is fairly straightforward. Showing the converse requires constructing the function $\psi_{ij}$ given a function $\lambda$ obeying \eqref{positive} and \eqref{eq:pathBound}. One way to do this is to define the function
\begin{equation}\label{phiidef}
\phi_i(x) := \inf\int_{A_i}^xds\,\lambda\,,
\end{equation}
where the infimum is over paths from $A_i$ to $x\in\M$, and then set
\begin{equation}
\psi_{ij}(x) = \min\{0,N_{ij}-\phi_i(x)\}\,.
\end{equation}
Therefore the dual program can be written
\begin{equation}\label{dual2}
 \text{Minimize $\int_{\M}\sqrt g\,\lambda$ with respect to $\lambda$, subject to \eqref{positive}, \eqref{eq:pathBound}}\,.\\
 \end{equation}

In general in this paper we will be fairly sloppy about the real-analysis aspects of the statements and proofs we make. However, one technical comment concerning the dual program may help to clear up some confusion on the reader's part. Below we will prove several bounds on the dual objective, such as \eqref{threadWD}, \eqref{dualbound}, \eqref{boundedbelow}, under various assumptions. In each case we will assume implicitly in the proof that $\lambda$ is a continuous function. On the other hand, the solution (i.e.\ optimal configuration) of \eqref{dual2} in some cases involves delta functions, which obviously are not continuous functions. The bounds are nontheless valid even for a configuration containing delta functions, including the solution, since a delta function can be obtained as a limit of a sequence of continuous functions with both the constraints  $\int_c ds\lambda$ and the objective $\int_{\M}\sqrt g\lambda$ converging to the appropriate values.

\subsubsection{Weak duality for threads}

The dual program \eqref{dual2} may seem somewhat distant from our original objects of interest, the bit threads, since we got to it by several steps (multiflows, convex programming, dualization). However, as we will now show, \eqref{dual2} can be directly related back to threads by taking the curves in the constraint \eqref{eq:pathBound} to be the threads themselves:

\begin{theorem}
Let $\I$ be a set of subsets of $[n]$ and $\lambda:\M\to\R$ a function obeying \eqref{positive}, \eqref{eq:pathBound}. Then for any thread configuration obeying the density bound $\nu_v\le1$ everywhere,
\begin{equation}\label{threadWD}
\int_{\M}\sqrt g\,\lambda \ge \sum_{I\in\I} N_{A_I:A_{\bar I}}\,.
\end{equation}
\end{theorem}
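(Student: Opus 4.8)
The plan is to pass from the volume integral $\int_{\M}\sqrt g\,\lambda$ to line integrals of $\lambda$ along the individual threads, invoking the path constraint \eqref{eq:pathBound} to bound those line integrals from below. The entire argument is captured by the chain
\begin{equation*}
\int_{\M}\sqrt g\,\lambda \;\ge\; \int_{\M}\sqrt g\,\nu_v\,\lambda \;=\; \sum_t\int_t ds\,\lambda \;\ge\; \sum_{I\in\I}N_{A_I:A_{\bar I}}\,,
\end{equation*}
where the sum runs over all threads $t$ in the configuration, and I would establish its three links in turn.

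For the first inequality I would use that $\lambda\ge0$ by \eqref{positive} and $\nu_v\le1$ by hypothesis, so that $\nu_v\lambda\le\lambda$ pointwise. The middle equality is the defining property of the volume density: the measure on $\M$ assigning to each region its total contained thread length has density $\nu_v$ relative to $\sqrt g$, so integrating $\lambda$ against this length measure reproduces $\int_{\M}\sqrt g\,\nu_v\,\lambda$.

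For the final inequality I would organize the threads by their endpoints. A thread $t$ running from $A_i$ to $A_j$ with $i\neq j$ is in particular a curve from $A_i$ to $A_j$, so \eqref{eq:pathBound} gives $\int_t ds\,\lambda\ge N_{ij}$; a closed thread, or one returning to the same elementary region, satisfies only $\int_t ds\,\lambda\ge0$ from \eqref{positive}. The key combinatorial point is that $N_{ij}=n_{ij}+n_{ji}$ of \eqref{Nijdef} counts exactly the sets $I\in\I$ that separate $i$ from $j$ (i.e.\ contain precisely one of the two indices), while a thread from $A_i$ to $A_j$ contributes to $N_{A_I:A_{\bar I}}$ precisely when $I$ separates $i$ and $j$. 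Summing $\int_t ds\,\lambda\ge N_{ij}$ over threads and reorganizing the resulting double sum over pairs $(t,I)$ therefore yields $\sum_t\int_t ds\,\lambda\ge\sum_{I\in\I}N_{A_I:A_{\bar I}}$; the closed and self-connecting threads contribute non-negatively on the left and nothing on the right, so they only help the inequality.

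The genuinely substantive content is this last combinatorial matching, which converts the geometric path constraint \eqref{eq:pathBound} into the thread-counting bound. The step most in need of care is instead the middle equality $\sum_t\int_t ds\,\lambda=\int_{\M}\sqrt g\,\nu_v\,\lambda$, which rests on the coarse-grained, measure-theoretic meaning of ``integrating along the threads'' and of $\nu_v$ itself. Consistent with the paper's stated policy of not aiming for full real-analytic rigor about thread configurations, I expect this identification to be asserted at the level of the density's definition rather than derived in detail.
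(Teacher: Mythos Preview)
Your proposal is correct and follows essentially the same route as the paper. The only presentational difference is that the paper makes the middle step $\int_{\M}\sqrt g\,\nu_v\,\lambda=\sum_t\int_t ds\,\lambda$ explicit via a cell decomposition (partitioning $\M$ into small cells where $\lambda$ is approximately constant and summing), whereas you invoke it directly as the defining property of the length density; the combinatorial matching $\sum_{i<j}N_{A_i:A_j}N_{ij}=\sum_{I\in\I}N_{A_I:A_{\bar I}}$ and the use of \eqref{positive}, \eqref{eq:pathBound} are identical.
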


\begin{proof}
Divide $\M$ into cells $\mathcal N$, each of which is small enough that $\lambda$ can be considered constant and the thread configuration uniform within it. The density $\nu_v$ in a cell is the total length of threads within it divided by its volume $V_{\mathcal N}$:
\begin{equation}
\nu_v := \frac1{V_{\mathcal{N}}}\sum_c\int_{c\cap\mathcal{N}}ds\,,
\end{equation}
where the sum is over all the threads in the configuration (most of which will typically have $c\cap\mathcal{N}=\emptyset$). Since $\nu_v\le1$, and using \eqref{positive}, we have
\begin{equation}
V_{\mathcal{N}}\lambda(\mathcal{N})\ge \sum_c\int_{c\cap\mathcal{N}}ds\,\lambda(\mathcal{N})\,.
\end{equation}
Summing over the cells yields
\begin{equation}\label{threadbound1}
\int_{\M}\sqrt g\,\lambda\ge\sum_c\int_cds\,\lambda\,.
\end{equation}
Meanwhile,
\begin{equation}\label{threadbound2}
\sum_c\int_c ds\,\lambda\ge\sum_{i<j}\sum_{c:A_i-A_j}\int_cds\,\lambda\ge \sum_{i<j}N_{A_i:A_j}N_{ij} = \sum_{I\in\I} N_{A_I:A_{\bar I}}\,,
\end{equation}
where in the first inequality we separated the threads according to their endpoints, dropping any not connecting distinct elementary regions; $c:A_i-A_j$ means threads connecting $A_i$ and $A_j$; in the second inequality we used \eqref{eq:pathBound}; and in the equality we rearranged the sum using the definition of $N_{ij}$. Combining \eqref{threadbound1} and \eqref{threadbound2} yields \eqref{threadWD}.
\end{proof}

Thus any feasible function $\lambda$ imposes an upper bound on the total number of threads connecting the different regions and their complements. This is essentially a form of weak duality, but for thread configurations rather than flows. In particular, if it happens that
\begin{equation}\label{lambdaint}
\int_{\M}\sqrt g\,\lambda<\sum_{I\in\I} S(A_I)\,,
\end{equation}
then $\I$ \emph{cannot} be locked by a thread configuration obeying the $\nu_v$ density bound.

Of course, we already knew, by the duality between \eqref{primal} and \eqref{dual2}, that \eqref{lambdaint} implies the impossibility of a $\nu_v$-multiflow saturating \eqref{total flux bound}. However, since not every thread configuration can be written in terms of a multiflow, this leaves open the loophole that some thread configuration might be able to do it. \eqref{threadWD} closes this loophole. This will be important for us in section \ref{section:crossing}.

\section{\boldmath Non-crossing regions: \texorpdfstring{$\nu_v$}{𝜈_v} locking theorem}
\label{section:WCLT}

The following locking theorem was proven in \cite{Cui:2018dyq}:
\begin{theorem}[\cite{Cui:2018dyq}]\label{locking1}
There exists a $\nu_v$-multiflow that locks all the elementary regions $A_i$.
\end{theorem}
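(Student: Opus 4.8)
The plan is to apply the convex-duality machinery of subsection~\ref{subsection:dual} to the family of singletons $\I=\{\{1\},\dots,\{n\}\}$ and to evaluate the resulting dual program~\eqref{dual2}. With this choice, the only set in $\I$ containing $i$ but not $j$ (for $i\neq j$) is $\{i\}$ itself, so $n_{ij}=1$ and hence $N_{ij}=2$ for every pair $i\neq j$. The primal program~\eqref{primal} then asks to maximize $\sum_{I\in\I}\int_{A_I}\vv_I$ over $\nu_v$-multiflows, and by the remark below~\eqref{total flux bound} it suffices to show that this maximum equals $\sum_i S(A_i)$: any maximizing multiflow then saturates~\eqref{total flux bound} and therefore locks every $A_i$. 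Since~\eqref{total flux bound} already supplies the upper bound $\sum_{I\in\I}\int_{A_I}\vv_I\le\sum_i S(A_i)$, and strong duality identifies the primal maximum with the dual minimum, the whole content of the theorem reduces to the lower bound
\begin{equation}
\int_{\M}\sqrt g\,\lambda\ge\sum_{i=1}^n S(A_i)
\end{equation}
for every $\lambda$ feasible in~\eqref{dual2}, i.e.\ every $\lambda\ge0$ obeying $\int_c ds\,\lambda\ge 2$ along all curves $c$ joining distinct elementary regions.

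To prove this lower bound I would exploit the fact that $N_{ij}=2$ is the \emph{same} for every pair, which lets the budget $\int_{\M}\sqrt g\,\lambda$ be split into disjoint pieces, one per region. Reading~\eqref{eq:pathBound} as the statement that the $\lambda$-weighted distance between any two distinct elementary regions is at least $2$, let $\phi_i$ be the weighted distance from $A_i$ defined in~\eqref{phiidef}, so that $\phi_i=0$ on $A_i$, $\phi_i\ge2$ on $A_j$ for $j\neq i$, and $|\nabla\phi_i|\le\lambda$ almost everywhere. I would then define the open balls
\begin{equation}
U_i:=\{x\in\M:\phi_i(x)<1\}\,.
\end{equation}
These are pairwise disjoint: if some $x$ lay in $U_i\cap U_j$ with $i\neq j$, concatenating the near-geodesics from $A_i$ to $x$ and from $x$ to $A_j$ would produce a curve of weighted length $\phi_i(x)+\phi_j(x)<2$, contradicting~\eqref{eq:pathBound}.

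It remains to bound $\int_{U_i}\sqrt g\,\lambda$ below by $S(A_i)$. Because $\phi_i<1<2\le\phi_i|_{A_j}$, the sublevel region $\{\phi_i<t\}$ meets $\dM$ only inside $A_i$ for every $t\in(0,1)$; as it also contains all of $A_i$ (where $\phi_i=0$), its level set $\{\phi_i=t\}$ is homologous to $A_i$ and hence has area at least $S(A_i)$. Combining the eikonal bound $|\nabla\phi_i|\le\lambda$ with the coarea formula then gives
\begin{equation}
\int_{U_i}\sqrt g\,\lambda\ge\int_{U_i}\sqrt g\,|\nabla\phi_i|=\int_0^1\area(\{\phi_i=t\})\,dt\ge S(A_i)\,.
\end{equation}
Summing over the disjoint $U_i$ yields $\int_{\M}\sqrt g\,\lambda\ge\sum_i S(A_i)$. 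Together with the primal upper bound and strong duality this forces the common optimal value to be exactly $\sum_i S(A_i)$, so a maximizing $\nu_v$-multiflow locks all the $A_i$.

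The routine parts are the computation $N_{ij}=2$ and the dual setup. The crux---and the step I expect to require the most care---is the dual lower bound: the naive coarea estimate applied to a single $\phi_i$ would reuse the \emph{entire} $\lambda$-budget and overcount, so the essential idea is that $N_{ij}=2$ makes the radius-$1$ balls $U_i$ disjoint, letting each $S(A_i)$ be charged to a separate portion of $\int_{\M}\sqrt g\,\lambda$. The remaining technical points are verifying that the sublevel sets $\{\phi_i<t\}$ are genuine homology regions for $A_i$ (so that $\area(\{\phi_i=t\})\ge S(A_i)$ holds for a.e.\ $t$, via Sard) and handling the delta-function nature of the optimal $\lambda$ through the continuity remark below~\eqref{dual2}.
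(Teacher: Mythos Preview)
Your proposal is correct and follows essentially the same argument as the paper: compute $N_{ij}=2$, pass to the dual program, define the disjoint sublevel sets $\{\phi_i<1\}$, and apply the coarea formula to each to extract $S(A_i)$. The only cosmetic difference is that the paper writes $|\nabla\phi_i|=\lambda$ (true for the infimal-path function) where you use the weaker $|\nabla\phi_i|\le\lambda$, but either suffices for the inequality you need.
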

\noindent A slightly stronger theorem was also proved, namely that all the elementary regions together with any single composite region can be simultaneously locked. In this section we will generalize these theorems to any cross-free set of regions:
\begin{theorem}[Weak continuum locking] \label{WCLT}
Let $\I\subseteq2^{[n]}$ be cross-free. Then there exists a $\nu_v$-multiflow that locks $\I$.
\end{theorem}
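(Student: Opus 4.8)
The plan is to reduce the global problem to a collection of single-cell locking problems: cut $\M$ along the minimal surfaces of the regions in $\I$, solve each piece with the known elementary-region locking theorem, and then glue. Cross-freeness enters at the very first step: for any $I,J\in\I$ the homology regions obey one of the alternatives in \eqref{notcrossingU}, so $r(A_I)$ and $r(A_J)$ are disjoint, nested, or complementary, and in every case their boundaries $m(A_I),m(A_J)$ do not cross transversally. Assuming the generic position already posited in the paper, the union $\Sigma:=\bigcup_{I\in\I}m(A_I)$ is therefore a non-crossing family of surfaces that dissects $\M$ into finitely many cells $\M_1,\dots,\M_p$. Each cell $\M_\alpha$ is a compact Riemannian manifold-with-boundary whose boundary splits into pieces lying in $\dM$ (each inside a single elementary region $A_i$) and pieces lying on the cutting surfaces.

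In each cell I would regard these boundary pieces as the ``elementary regions'' of $\M_\alpha$ and apply the locking theorem of \cite{Cui:2018dyq} (Theorem \ref{locking1}) to obtain a $\nu_v$-multiflow on $\M_\alpha$ that locks all of them simultaneously. The point requiring justification is that a cutting-surface piece $\sigma\subseteq m(A_I)$ is itself area-minimizing within $\M_\alpha$ among surfaces homologous to it, so that ``locking $\sigma$'' genuinely forces the local threads to meet $\sigma$ orthogonally at unit density; this follows from global minimality, since any lower-area competitor inside $\M_\alpha$ could be substituted for $\sigma$ in $m(A_I)$ to give a surface homologous to $A_I$ of smaller area. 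A useful consequence of locking $\sigma$ is that the density on $\sigma$ is saturated by threads running from $\sigma$ to the \emph{rest} of $\partial\M_\alpha$, leaving no room for threads that return to $\sigma$.

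I would then glue across $\Sigma$: along each cutting surface, shared by two cells, both local configurations meet it orthogonally at unit density, so the two density profiles match and each thread can be continued through the surface, producing a single global (unoriented) thread configuration. Separating its threads according to the elementary regions on which they end, by the flow-to-multiflow procedure, yields a global family $V=(\vv_{ij})$; since each cell obeys $\nu_v\le1$ and gluing only connects threads where both sides already carry unit orthogonal density, $\nu_v(V)\le1$ everywhere and $V$ is a genuine $\nu_v$-multiflow. To verify that $V$ locks a given $A_I\in\I$ I would argue at the level of flux rather than track individual crossings: on $m(A_I)$ the total density $\nu_v(V)=\sum_{i<j}|\vv_{ij}|$ equals $1$ and is orthogonal, and one wants to conclude that the derived flow $\vv_I$ of \eqref{v from vij}, which collects exactly the component flows joining the two sides of $m(A_I)$, equals the full orthogonal unit field there. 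Granting this, $\int_{m(A_I)}\vv_I=\area(m(A_I))=S(A_I)$, and divergencelessness of $\vv_I$ on $r(A_I)$ gives $\int_{A_I}\vv_I=S(A_I)$, which is the locking condition \eqref{mfmc2}; summed over $I\in\I$ this saturates \eqref{total flux bound}.

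The main obstacle, and the only place where a genuine argument rather than bookkeeping is needed, is the claim used at the end of the previous paragraph: that no component flow with both endpoints on the \emph{same} side of $m(A_I)$ survives on $m(A_I)$, equivalently that no glued thread crosses $m(A_I)$ more than once. Such a double-crossing thread would consume orthogonal density on $m(A_I)$ without contributing to $\int_{m(A_I)}\vv_I$, spoiling saturation. I expect the nesting of the homology regions forced by cross-freeness to be exactly what rules this out: it provides a consistent ``inward'' orientation for the crossing of each $m(A_I)$, under which the no-return property established cell-by-cell propagates to the statement that a thread entering $r(A_I)$ cannot re-emerge. A cleaner but less constructive alternative is to dualize: by the strong duality of subsection \ref{subsection:dual} the primal optimum \eqref{total flux bound} equals the optimum of \eqref{dual2}, and since the primal is always $\le\sum_{I\in\I}S(A_I)$, it is enough to show that \emph{every} feasible $\lambda$ obeys $\int_{\M}\sqrt g\,\lambda\ge\sum_{I\in\I}S(A_I)$. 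For one region this is the standard coarea/min-cut estimate applied to the potential \eqref{phiidef}; the crux for many regions is to exploit the laminar structure to build a single sweepout of $\M$ whose level sets realize the cuts $m(A_I)$ disjointly, so that the coarea integral splits into $\sum_I S(A_I)$. The difficulty there is the mirror image of the no-recrossing problem above, and is again precisely where cross-freeness is indispensable.
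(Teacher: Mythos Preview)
Your proposal matches the paper's second proof (``proof by decomposition'') almost exactly: cut along the minimal surfaces, invoke Theorem~\ref{locking1} in each cell, and glue. You also sketch the paper's first proof (dualization) as an alternative. So the overall strategy is correct.

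The genuine gap is the one you yourself flag: ruling out that a glued thread crosses a given $m(A_I)$ more than once. Your heuristic about a consistent ``inward'' orientation is not enough, because the threads within a cell need not respect any global orientation on the surfaces, and a thread can wander through many cells before returning. The paper closes this cleanly by organizing the cells into a directed tree $G$: take $\I\cup\{[n]\}$ as vertices, with an edge $I\to J$ whenever $I\supset J$ with nothing in $\I$ strictly in between (after normalizing so that every pair in $\I$ is disjoint or nested). Each cell is $r(A_I)\setminus\bigcup_{J\leftarrow I}r(A_J)$, and two cells share a wall $m(A_J)$ exactly when they are joined by an edge of $G$. A glued thread therefore traces a walk in $G$; since inside each cell the thread connects \emph{distinct} boundary pieces (closed threads and self-returns having been deleted), the walk never backtracks, and in a tree a non-backtracking walk visits no edge twice. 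Hence no thread crosses any $m(A_I)$ more than once, and since the density is saturated and orthogonal there, $A_I$ is locked. This tree argument is the missing idea in your decomposition proof; once you have it, the rest of what you wrote goes through (the paper also treats the degenerate case where distinct $m(A_I)$ partially coincide by inserting infinitesimal gaps, which your ``generic position'' assumption sidesteps).

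For the dualization route you mention, the paper's device is the function $\phi_I(x):=\min_{i\in I}(\phi_i(x)-N_{iI})$ with $N_{iI}:=|\{J\in\I:i\in J\subset I\}|$; one checks $\phi_I\le0$ on $A_I$, $\phi_I\ge1$ on $A_{\bar I}$, and that the level-set strips $R_I=\{0<\phi_I<1\}$ are pairwise disjoint (disjoint $I,J$ give $\phi_I+\phi_J\ge2$, nested $I\subset J$ give $\phi_J<0$ on $R_I$). Coarea on each $R_I$ then yields $\int_{R_I}\sqrt g\,\lambda\ge S(A_I)$ and summing gives the bound you wanted. This is the precise form of the ``single sweepout whose level sets realize the cuts disjointly'' that you anticipated.
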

We will give two proofs of Theorem \ref{WCLT}. The first proof is  parallel to the proof of Theorem \ref{locking1} given in \cite{Cui:2018dyq}, and is based on the dual program \eqref{dual2}. For the second proof we will instead present an algorithm for constructing a multiflow via a process we refer to as bulk-cell decomposition, applying Theorem \ref{locking1} within each cell.

In both proofs it will be convenient to assume that $\I$ does not include the empty set and that no pair $I,J\in\I$ obeys
\begin{equation}\label{nocover}
I\cup J = [n]\,.
\end{equation}
These assumptions are without loss of generality: the empty set is locked by any multiflow. And the assumption \eqref{nocover} can be imposed by replacing any $I\in\I$ containing $n$ with $\bar I$, which leaves invariant the locking properties of any given multiflow. With these assumptions, we then have from \eqref{notcrossing} that any distinct pair $I,J\in\I$ is either disjoint or nested, i.e.\ exactly one of the following holds:
\begin{equation}\label{alternatives}
I\cap J=\emptyset\quad\text{or}\quad I\subset J\quad\text{or}\quad J\subset I\,.
\end{equation}

\subsection{Proof by dualization}

In order to prove the Weak Continuum Locking Theorem using duality, we will follow very closely the proofs of the theorems of \cite{Cui:2018dyq}. We therefore briefly review the proof of Theorem \ref{locking1}.

\begin{proof}[Proof of Theorem \ref{locking1}]
With $\I = \{\{1\},\ldots,\{n\}\}$, according to the definitions \eqref{nijdef}, \eqref{Nijdef}, we have $n_{ij}=1$, hence $N_{ij}=2$, for all $i\neq j$. Given a feasible function $\lambda$ for the dual program \eqref{dual2}, we have the function $\phi_i$ on $\M$ defined in \eqref{phiidef}. Using this we define the bulk set $R_i$:
\begin{equation}
R_i:=\{x\in\M|\phi_i(x)<1\}\,.
\end{equation}
By the condition \eqref{eq:pathBound}, $R_i\cap R_j=\emptyset$ for $i\neq j$. By the condition \eqref{positive}, we then have the following lower bound on the dual objective:
\begin{equation}
\int_{\M}\sqrt g\,\lambda \ge \sum_i\int_{R_i}\sqrt g\,\lambda\,.
\end{equation}
Defining the bulk region
\begin{equation}
r_i(p):=\{x\in\M|\phi_i(x)\le p\}
\end{equation}
for $0<p<1$, the level set
\begin{equation}
m_i(p) := \partial r_i(p)\setminus A_i
\end{equation}
is contained in $R_i$ and, being homologous to $A_i$, obeys
\begin{equation}\label{levelsetbound}
\area(m_i(p))\ge S(A_i)\,.
\end{equation}
We then have
\begin{equation}\label{coarea}
\int_{R_i}\sqrt g\,\lambda = \int_{R_i}\sqrt g\,|\nabla\phi_i| = \int_0^1 dp\,\area(m_i(p)) \ge S(A_i)\,,
\end{equation}
where in the first equality we used the fact that $|\nabla\phi_i|=\lambda$; in the second the coarea formula; and in the third \eqref{levelsetbound}. So we have the following lower bound for the dual objective functional:
\begin{equation}\label{dualbound}
\int_{\M}\sqrt g\,\lambda \ge \sum_iS(A_i)\,.
\end{equation}
On the other hand, according to \eqref{total flux bound}, the objective---and therefore the optimal value---of the primal \eqref{primal} is bounded above by $\sum_iS(A_i)$. Hence \eqref{total flux bound} is tight and, as explained below that inequality, it follows that all of the $A_i$s can be simultaneously locked.
\end{proof}

The proof of Theorem \ref{WCLT} follows the same general logic, but involves a bit more bookkeeping to define the bulk region corresponding to each boundary region.

\begin{proof}[Proof of Theorem \ref{WCLT} by dualization]

Again we start with the primal-dual pair \eqref{primal}, \eqref{dual2}, and again we will show that, for any feasible function $\lambda$, the dual objective is bounded below by the sum of the minimal-surface areas of the $A_I$s:
\begin{equation}\label{boundedbelow}
\int_{\M}\sqrt g\,\lambda \ge \sum_{I\in\I} S(A_I)\,.
\end{equation}
This implies that the dual optimal value, and therefore the primal optimal value, are likewise bounded below. Therefore \eqref{total flux bound} is tight, and the $A_I$s can be simultaneously locked.

In order to prove \eqref{boundedbelow}, we will construct for each $A_I$ a scalar function with level sets homologous to it, and show that the dual objective is bounded below by the sum over $I$ of the average level set area, which is in turn bounded below by $\sum_{I\in\I} S(A_I)$. To begin, define for each $i$ and $I\in\I$ the number $N_{iI}$,
\begin{equation}\label{Nialphadef}
N_{iI}:=|\{J\in \I|i\in J \subset I\}|\,,
\end{equation}
which counts the number of sets $J\in\I$ that contain $i$ and are  contained in $I$.
From this definition, it follows that 
\begin{align}
    N_{iI}&\leq n_{ij}-1\,,\quad i\in I,j\in \bar I \label{eq:24}
\\
    N_{iI}&=0\,,\;\;\quad\qquad i\in\bar I\,,
\end{align}
where $n_{ij}$ is defined in \eqref{nijdef}, and the first inequality relies on the fact that $I$ is not counted in $N_{iI}$.

We now employ the functions $\phi_i$ defined in \eqref{phiidef}, and note that the constraint \eqref{eq:pathBound} implies 
\begin{equation}
    \phi_{i}(x)+\phi_{j}(x) \geq N_{ij}\,.
    \label{eq:26}
\end{equation}
Based on these functions, we define a similar function associated with each composite region:
\begin{equation}
    \phi_I(x) := \min_{i\in I} (\phi_{i}(x)-N_{iI})\qquad(I\in\I)\,.
    \label{eq:27}
\end{equation}
$\phi_I$ is continuous, and its gradient at any given point is  equal to the gradient of $\phi_{i_{\rm min}}$, where $i_{\rm min}$ is the minimizer in \eqref{eq:27}, which in turn has magnitude $\lambda$:
\begin{equation}
|\nabla\phi_I| = |\nabla\phi_{i_{\rm min}}| = \lambda\,.
\end{equation}

We will now show that $\phi_I\le0$ on $A_I$ and $\phi_I\ge1$ on $A_{\bar I}$. For $x\in A_{i\in I}$, $\phi_{i}(x)=0$ so $\phi_I(x)\leq 0$. On the other hand, for $x\in A_{j\in\bar I}$, let $i$ be the minimizer of \eqref{eq:27}. Since $\phi_{j}(x)=0$, $\phi_i(x)\ge N_{ij}$.  Thus by \eqref{eq:24},
\begin{equation}
    \phi_I(x) = \phi_{i}(x)-N_{iI} \geq N_{ij}-N_{iI} \geq 1+n_{ji} \geq 1\,.
\end{equation}
This implies that the level set of $\phi_I$ for any value of the function between 0 and 1 is homologous to $A_I$. Therefore, by the same reasoning as in \eqref{coarea}, we have
\begin{equation}    \label{eq:30}
\int_{R_I}\sqrt g\,\lambda\ge S(A_I)\,,
\end{equation}
where
\begin{equation}
R_I:=\{x\in\M\; | 0< \phi_I(x)< 1 \}\,.
\end{equation}
It remains to show that the sets $R_I$ do not intersect. Then we will have
\begin{equation}
\int_{\M}\sqrt g\,\lambda\ge\sum_{I\in\I}\int_{R_I}\sqrt g\,\lambda\,,
\end{equation}
and \eqref{boundedbelow} follows immediately from \eqref{eq:30}.

To show that $R_I \cap R_J=\emptyset$ for $I\neq J$, we must consider two cases, depending on whether $I$ and $J$ are disjoint or nested.
\begin{enumerate}
    \item First, suppose $I\cap J=\emptyset$. At a given point $x\in\M$, let $i\in I$ be a minimizer of \eqref{eq:27} and $j\in J$ a corresponding minimizer for $\phi_J$. Then, by \eqref{eq:24},
   \begin{equation}
        N_{iI}+N_{jJ}\leq N_{ij}-2\,.
        \label{eq:33}
    \end{equation}
Along with \eqref{eq:26} and \eqref{eq:27}, this implies
    \begin{equation}
         \phi_I(x)+\phi_J(x)=\phi_{i}(x)+\phi_{j}(x)-(N_{iI}+N_{jJ})\geq 2\,.
             \end{equation}
Hence $x$ cannot be both in $R_I$ and $R_J$.
    \item Suppose now that $I\subset J$. Let $x\in {R}_I$, and again let $i\in I$ be a minimizer in \eqref{eq:27}. 
    By the definition \eqref{Nialphadef},
    \begin{equation}
        N_{iJ}\ge N_{iI}+1 \,.
    \end{equation}
Along with the fact that $\phi_I(x)<1$, this implies 
    \begin{equation}
    \phi_J(x)\le
    \phi_{i}(x) - N_{iJ} = 
    \phi_I(x) + N_{iI} -N_{iJ}  <
    0\,,
    \end{equation}
    so $x\not\in R_J$.
\end{enumerate}
\end{proof}

\subsection{Proof by decomposition}

In addition to the proof above, which relies on computing the dual program and constraining its optimal solution, we provide a second proof. Rather than relying on the formal theory of convex programming, this is (partially) a proof by construction. We present a method, based on Theorem \ref{locking1}, for constructing a multiflow with maximal flux through all the minimal surfaces $m(A_I)$. To avoid confusion regarding orientation, we proceed by talking mainly in the language of bit threads, knowing that once we have finished constructing a thread configuration it can be converted into a proper multiflow, through the procedure discussed in section \ref{subsection:Multiflows_Basics}. 

\begin{figure}
    \centering
    \includegraphics[width=5in]{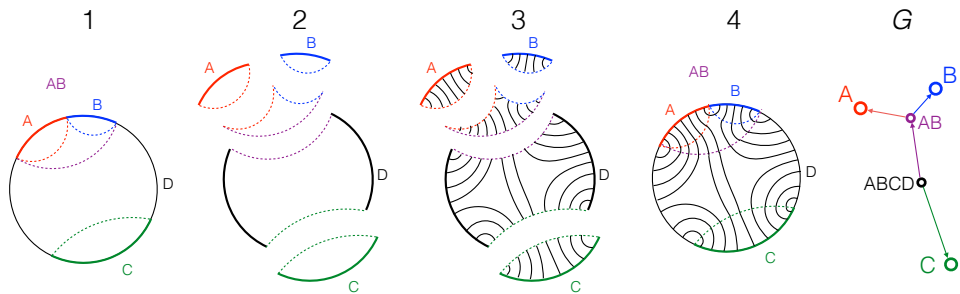}
    \caption{Illustration of the construction of the thread configuration used in the proof by decomposition for the set of boundary regions $\{A,B,AB,C\}$ ($\I=\{\{1\},\{2\},\{1,2\},\{3\}\}$). Step 1: identify minimal surfaces $m(A_{I\in\I})$. Step 2: divide the bulk into cells. Step 3: find a maximal thread configuration for each bulk cell. Step 4: recombine the bulk cells to obtain a maximal thread configuration for the original problem. The graph $G$ used in the text, whose vertices correspond to cells and edges to minimal surfaces, is shown at right for this example.}
    \label{fig:Decomposition}
\end{figure}

We refer to the procedure used to make this construction as ``bulk cell decomposition.'' The basic idea behind this decomposition is to divide the bulk into separate volumes which are bounded by some combinations of the minimal surfaces $m(A_I)$ and the $A_I$ themselves. We can then construct a thread configuration in each cell that maximizes the flux through all of its bounding surfaces. Finally, we recombine these thread configurations along the minimal surfaces that separate them in order to obtain a global thread configuration. See figure \ref{fig:Decomposition} for an illustration of the method.

\begin{proof}[Proof of Theorem \ref{WCLT} by cell decomposition] Before defining the bulk cell decomposition we need to lay some technical groundwork. First, we adjoin to $\I$ the full set $[n]$, corresponding to the entire boundary $\partial\M$; henceforth in the proof $[n]\in\I$. We then construct a directed graph $G$ with vertices labelled by $I\in\I$, and in which there is an edge from $I$ to $J$ (written $I\to J$) if $I\supset J$ and there is no $K\in\I$ such that $I\supset K\supset J$. A vertex may have any number of edges coming out of it, but has exactly one edge coming into it, except $[n]$, which has no edges coming into it. It follows that $G$ is a connected tree. In particular, there is exactly one path connecting any pair of vertices.

The minimal surfaces $m(A_I)$ for different $I$ may partly coincide. We will give an example and deal with this possibility below. For now, for ease of presentation, we assume that the $m(A_I)$ do not coincide.

With these preliminaries out of the way, we now define the cell decomposition. For each $I\in\I$ we define the bulk region $\M_I$, which we call a ``cell'', as follows:\footnote{More precisely, $\M_I$ is the closure of the right-hand side of \eqref{calphadef}. Similarly, the first surface in \eqref{boundarylist} should be enclosed in a ``closure''.}
\begin{equation}\label{calphadef}
\M_I := r(A_I)\setminus\left(\bigcup_{J\leftarrow I}r(A_J)\right).
\end{equation}
The collection $\{\M_I\}_{I\in \I}$ is a decomposition of $\M$, in the sense that $\M_I\no \M_J$ for $I\neq J$ and $\cup_{I\in\I} \M_I = \M$. Two cells $\M_I$, $\M_J$ adjoin each other if and only if $I$ and $J$ share an edge in $G$.

The boundary of $\M_I$ is the union of the following surfaces:
\begin{equation}\label{boundarylist}
A_I\setminus\left(\bigcup_{J\leftarrow I}A_J\right),\qquad
m(A_I)\,,\qquad
m(A_J)\quad (J\leftarrow I)\,.
\end{equation}
Regarding $\M_I$ as a ``bulk'' in its own right, and considering each surface in \eqref{boundarylist} as an ``elementary boundary region'' for that ``bulk'', we now invoke Theorem \ref{locking1} to assert the existence of a $\nu_v$-multiflow within $\M_I$ locking all of these surfaces. The surface $m(A_I)$ is its own minimal surface, so in the corresponding thread configuration the threads intersect $m(A_I)$ orthogonally and saturate the density bound there, and no thread intersects it twice (see the necessary and sufficient condition for a thread configuration to lock a region given below \eqref{mfmc}); similarly for $m(A_J)$ for $J\leftarrow I$. We remove any closed threads or threads with both endpoints on the first surface in \eqref{boundarylist}.

If we consider the thread configurations constructed as above on two neighboring cells, the threads are orthogonal to their shared minimal surface and saturate the density bound on both sides of it, and therefore can be connected across it to form a valid thread configuration on their union. Since the $\{\M_I\}_{I\in\I}$ form a decomposition of $\M$, we can join the thread configurations in all the cells to construct a thread configuration covering $\M$.

It remains to be shown that this thread configuration locks $\I$, and that it can be converted back into a $\nu_v$-multiflow. A given thread maps to a continuous path on the graph $G$. Within any cell it traverses, it connects distinct ``elementary boundary regions'' (i.e.\ different surfaces in \eqref{boundarylist}). Therefore the path on $G$ never doubles back on itself. Since $G$ is a tree, it follows that the path never crosses the same edge twice; so the thread never crosses the same minimal surface twice. Since the density bound is saturated on every minimal surface, every $A_I$ is locked (again by the condition below \eqref{mfmc}).

To convert the thread configuration back to a multiflow, consider the thread bundle consisting of all the threads connecting $A_i$ and $A_j$ ($i\neq j$). All these threads map to the same path in $G$. Within each cell that they traverse, they are part of the same bundle, i.e.\ they connect the same pair of ``elementary boundary regions'', and therefore (since the thread configuration within the cell was born as a $\nu_v$-multiflow) they are locally parallel. Since they are locally parallel everywhere, they can be written as a flow $\vv_{ij}$. These component flows clearly satisfy the $\nu_v$ norm bound everywhere.

We now address the possibility that the minimal surfaces $m(A_I)$, $m(A_J)$ for distinct $I,J\in\I$ may partly coincide. As an example, let  $n=3$ and $\I = \{A,AB,ABC\}$; if $A$ and $B$ are separated far apart then $r(AB)=r(A)\cup r(B)$ and $m(AB) = m(A)\cup m(B)$. To avoid bookkeeping headaches when we track threads through the bulk, we introduce infinitesimal gaps between coinciding minimal surfaces, arranged to respect the nesting property of homology regions.\footnote{Specifically, suppose $m_{IJ}:=m(A_I)\cap m(A_J)\neq\emptyset$ for some $I\neq J$. By the nesting property of homology regions, for every $K$ on the path $p_{IJ}$ on $G$ connecting $I$ and $J$, $m(A_K)\supset m_{IJ}$, except $K=[n]$. We introduce an infinitesimal gap between $m(A_K)$ and $m(A_L)$ for the distinct pair of vertices $K,L$ on $p_{IJ}$ if (1) $K\leftarrow J\neq[n]$, in which case the gap is included in $r(A_M)$ for all $M\supseteq L$; or (2) $K\leftarrow[n]\rightarrow L$, in which case the gap is only included in $r(A_{[n]})=\M$.} For instance, in the above example we put a gap between $m(A)$ and $m(AB)$ and one between $m(B)$ and $m(AB)$, and these gaps are included in $r(AB)$ and $r(ABC)$ but not $r(A)$ or $r(B)$; the cell $\M_{AB}$ then consists of these two gaps. The construction of the thread configuration then goes through as described above, with the threads going straight across the infinitesimal gaps.
\end{proof}

\section{\boldmath Non-crossing regions: \texorpdfstring{$\nu_c$}{𝜈_c} locking theorem}
\label{section:NOLT}

In the previous section, we proved (by two different methods) that any crossing-free set $\I\subseteq2^{[n]}$ can be locked by a $\nu_v$-multiflow. In this section we will upgrade this theorem by replacing the $\nu_v$ collective norm bound with the stronger $\nu_c$ bound. We remind the reader that the $\nu_c$ bound, in addition to limiting the norm of each component flow to 1 everywhere, forbids different components from having overlapping supports (see \eqref{nucdef} for the definition of the $\nu_c$ norm bound). We thus have:

\begin{theorem}[Non-overlapping locking] \label{NOLT}
Let $\I\subseteq2^{[n]}$ be cross-free. Then there exists a $\nu_{c}$-multiflow that locks $\I$.
\end{theorem}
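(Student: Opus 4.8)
The plan is to start from the $\nu_v$-multiflow guaranteed by Theorem \ref{WCLT} and argue that it can be replaced by, or deformed into, a $\nu_c$-multiflow locking the same cross-free family $\I$. The essential difference between the two bounds is that $\nu_c$ forbids distinct thread bundles from overlapping at all, whereas $\nu_v$ only bounds their summed length-density. So the real content is a \emph{segregation} statement: the threads connecting different pairs of elementary regions can be rearranged to occupy disjoint bulk regions while preserving the flux through each minimal surface $m(A_I)$. The cleanest route is to build on the proof by cell decomposition of Theorem \ref{WCLT}, since that proof already produces a configuration with strong control over how threads cross the separating minimal surfaces.

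Concretely, I would first recall the cell decomposition $\{\M_I\}_{I\in\I}$ (after adjoining $[n]$ to $\I$, introducing infinitesimal gaps between coinciding minimal surfaces as in the first proof, and building the tree graph $G$). The key observation is that locking is a statement about saturating the density bound \emph{on each minimal surface} $m(A_I)$, together with orthogonality and the no-double-crossing condition (the criterion stated below \eqref{mfmc}). None of these conditions requires the threads to overlap anywhere: they are conditions about what happens transversally on a codimension-1 surface. Thus the strategy reduces to proving a \emph{single-cell} segregation lemma: within a cell $\M_I$, whose boundary is the list of surfaces \eqref{boundarylist}, one can find a $\nu_c$-multiflow (not merely a $\nu_v$-multiflow) that locks all of those bounding surfaces simultaneously and is orthogonal to each with saturated density. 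Granting this lemma, the global configuration is assembled exactly as in the cell-decomposition proof: neighboring cells share a minimal surface across which the threads are orthogonal and density-saturated on both sides, so they glue; and since $G$ is a tree, no thread crosses any minimal surface twice, so every $A_I$ is locked. Because within each cell the bundles are segregated and across a shared surface a given thread simply continues straight, distinct bundles remain segregated globally, giving $\nu_c\le1$ everywhere.

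The main obstacle is therefore the single-cell segregation lemma. The difficulty is that $\nu_c$ is \emph{not} a convex condition (as emphasized below \eqref{nubound}), so the convex-duality machinery used for Theorem \ref{locking1} and the first proof of Theorem \ref{WCLT} does not apply directly; one cannot simply dualize. Instead I expect one must argue geometrically: inside a cell, the minimal surface $m(A_I)$ is its own minimal surface, and the threads crossing it form a single locally-parallel bundle with density saturated. The natural idea is to start at the surfaces, where the threads are already locally parallel and hence automatically non-overlapping in a collar neighborhood, and then route the distinct bundles through the interior of the cell along disjoint tubes. Since the cell is a manifold-with-boundary of the same dimension and the bundles are one-dimensional families of curves, a transversality/general-position argument should let one perturb the bundles so that distinct bundles have disjoint supports in the interior while preserving their endpoints and flux on each bounding surface. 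Making this rigorous—ensuring the perturbed bundles remain divergenceless flows with $|\vv_{ij}|\le1$, stay orthogonal and saturated on the minimal surfaces, and genuinely occupy disjoint supports—is where the careful work lies, and is the step I would expect to occupy most of the proof.
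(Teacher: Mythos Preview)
Your reduction via the cell decomposition is sound, and the gluing step is correct: once each cell carries a $\nu_c$-multiflow locking its bounding surfaces, the glued thread configuration is locally parallel everywhere, so any two global threads through a common point follow the same integral curve and hence connect the same pair of elementary regions; this forces the global component flows to have disjoint supports. So the reduction to the ``single-cell segregation lemma'' is legitimate.

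The gap is that the single-cell lemma is not actually a reduction to something easier: it is precisely Theorem~\ref{NOLT} in the special case $\I=\{\{1\},\ldots,\{m\}\}$ (all singletons), applied to the cell as the ambient manifold. You still have to prove that case, and your proposed transversality/general-position argument is not the right tool. The objects you need to separate are not one-dimensional curves but the \emph{supports} of the component flows, which are codimension-$0$ subsets of the cell; generic perturbation does not make codimension-$0$ sets disjoint. Moreover, in the collar of each bounding surface the threads are all orthogonal and hence all parallel, so multiple cell-wise bundles necessarily overlap there unless you first choose a partition of the surface and then solve a global routing problem (connecting the assigned patches on different surfaces by disjoint tubes while keeping $|\vv_{ij}|\le1$). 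You give no mechanism for doing this, and it is exactly where the difficulty of the theorem lives.

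The paper takes a completely different route that avoids the cell decomposition entirely. Starting from the $\nu_v$-multiflow $V$ of Theorem~\ref{WCLT}, it defines an \emph{untangling operator} $U_{kl,mn}$ acting on any pair of overlapping components: one forms $\vv'=\vv_{kl}\pm\vv_{mn}$ (the sign is dictated by the cross-free hypothesis, which guarantees a choice preserving $\int_{A_I}\vv_I$ for every $I\in\I$), then re-separates $\vv'$ into components by flow-line endpoints and feeds the crossover pieces back into the appropriate $\vv_{ij}$. This removes the overlap between the chosen pair while keeping all fluxes. One then iterates, always untangling the pair with largest overlap volume. Untangling can re-introduce overlaps elsewhere, so termination is not obvious; the paper controls this with a functional $\theta[V]=\int_{\M}\tfrac12\zeta(\zeta-1)$, where $\zeta(x)$ is the number of nonzero components at $x$. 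Each untangling step strictly decreases $\theta$ by at least the overlap volume removed, and a short argument shows $\theta_m\to0$, so the sequence converges (almost everywhere) to a $\nu_c$-multiflow that still locks $\I$. The cross-free hypothesis enters only in fixing the sign in $\vv'=\vv_{kl}\pm\vv_{mn}$; everything else is a direct manipulation of the multiflow.

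In short: your gluing framework is fine, but the core analytic content of the theorem is hidden inside your single-cell lemma, and the sketch you give for it does not work. The paper's untangling-plus-convergence argument supplies exactly the missing mechanism.
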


To prove Theorem \ref{NOLT}, we will start from a $\nu_v$-multiflow that locks $\I$, whose existence is guaranteed by Theorem \ref{WCLT}, and give an algorithm for converting it into a $\nu_c$-multiflow without changing the flux out of each $A_I$. The algorithm iterates through pairs of overlapping component flows, removing the overlap (which we call ``untangling''). It does this by adding (or subtracting) the two component flows, replacing them with a single combined flow. We show that, for non-crossing boundary regions, it is always possible to do this while preserving the flux out of all boundary regions. The individual threads in this combined flow are then re-distributed to the appropriate component flows according to their endpoints.

In subsection \ref{subsection:theUntanglingProcess}, we introduce a simple example to demonstrate the concept, prove that the algorithm is always possible and well-defined given the assumptions of a non-crossing multiflow problem, and present a precise definition of the untangling operator. Subsection \ref{subsection:retanglingConvergence} is concerned with the performance of the untangling operator, and how to use it to fully convert a $\nu_v$-multiflow into a $\nu_c$-multiflow. We first present the primary issue with any untangling algorithm, namely that it is not guaranteed to complete in a finite number of untangling steps. We then show that any such infinite sequence of multiflows must converge to a $\nu_c$-multiflow.

\subsection{The untangling operator}\label{subsection:theUntanglingProcess}

The proof of Theorem \ref{NOLT} relies on a method of ``untangling'' overlapping component flows by adding (or subtracting) them and re-separating the sum into new component flows based on the flow line endpoints. The aim of this subsection is to define, for any distinct index pairs $(k,l)\neq(m,n)$, an operator $U_{kl,mn}$ that takes a $\nu_v$-multiflow $V$ and return a $\nu_v$-multiflow $V^2=U_{kl,mn}[V]$ in which the components $\vv^2_{kl}$, $\vv^2_{mn}$ have non-overlapping support. Furthermore, the operator preserves the total flux of the multiflow out of each $A_I$, so if $V$ locks $\I$ then so does $V^2$.

\begin{figure}
	\centering
    \includegraphics[width=2.75in]{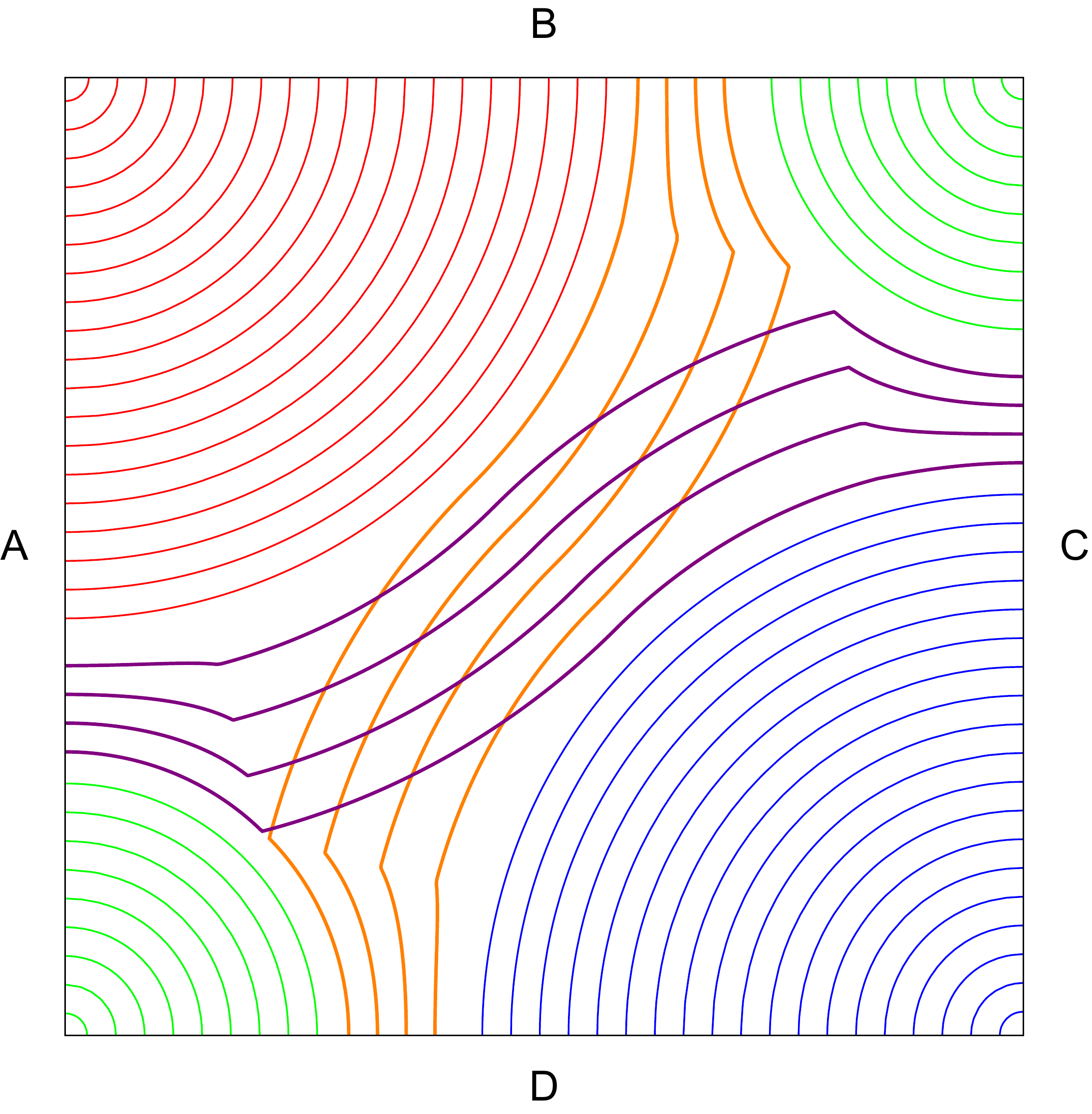}
    \includegraphics[width=2.75in]{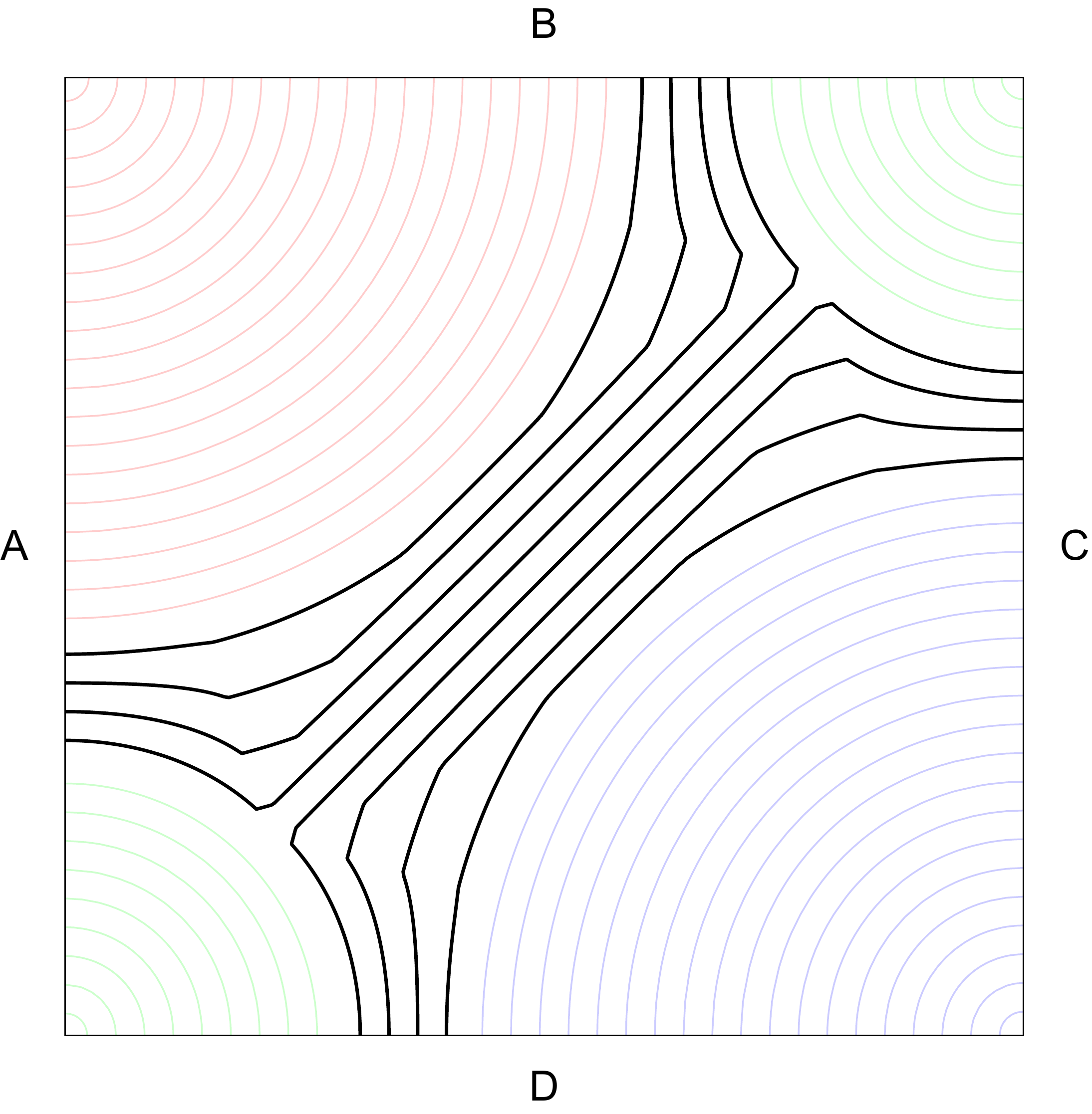}
    \includegraphics[width=2.75in]{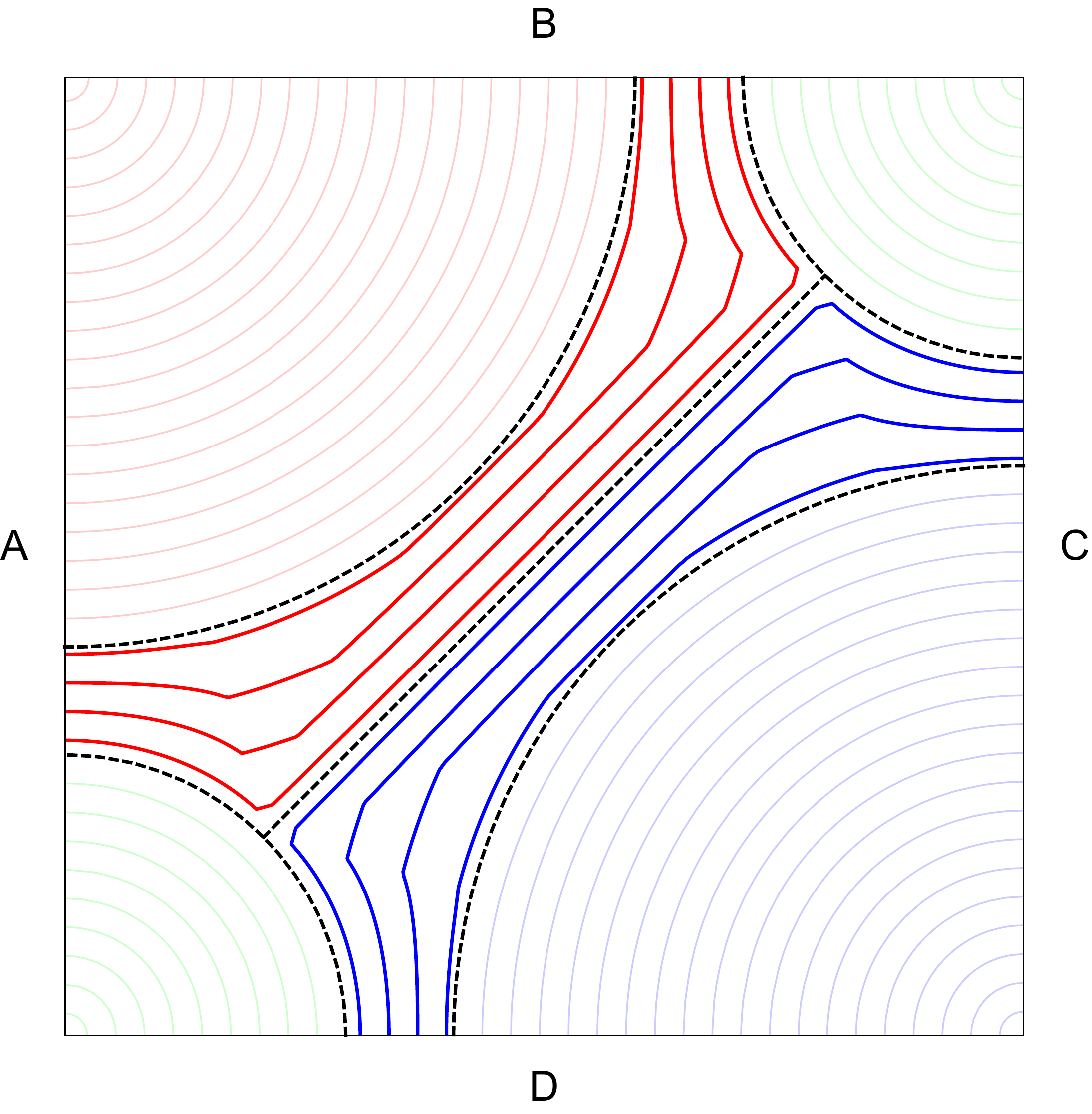}
    \includegraphics[width=2.75in]{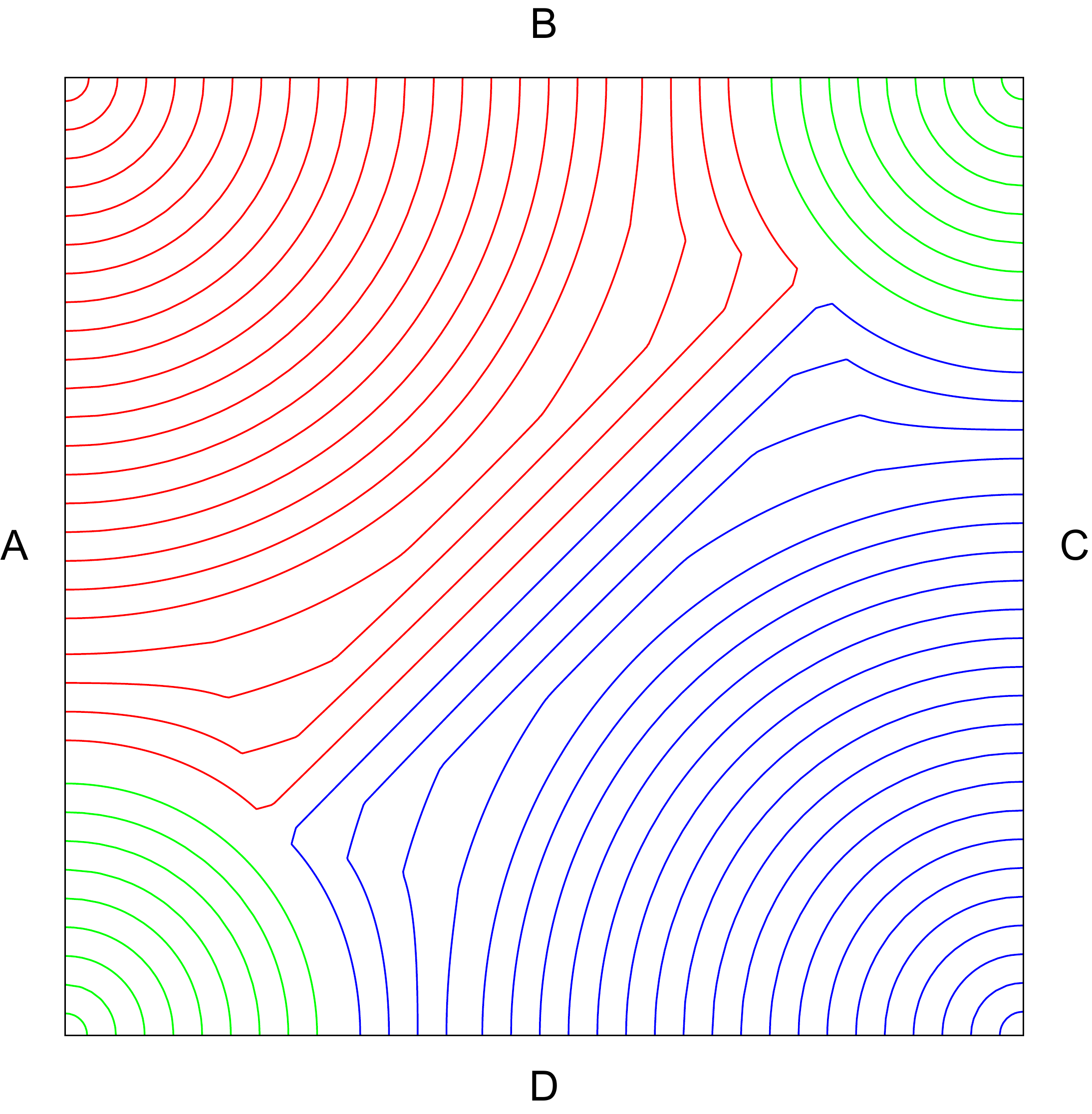}
    \caption{Step-by-step guide to the untangling process, shown on the unit square with sides $A,B,C,D$ and $\{A_I\}_{I\in\I}=\{A,B,C,D,AD\}$. Top left: the original $\nu_v$-multiflow. Two component flows, $\vec{v}_{AC}$ (purple) and $\vec{v}_{BD}$ (orange) overlap. Top right: The two crossing flows are combined together. In order to guarantee that their fluxes out of AD combine constructively, $\vec{v}_{BD}$ is subtracted from $\vec{v}_{AC}$ to produce the black flow, $\vec{v}'$. Bottom left: The combined flow, $\vec{v}'$, is separated into components based on thread endpoints. Stream-tubes (black dashed) originating at $\partial(A_{i}\cap \mathrm{supp}(\vec{v}'))$ for each primitive boundary region $A_i$ visualize this partition into components. In this case, $\vec{v}'$ separates into $\vec{v}_{AB}'$ and $\vec{v}_{CD}'$. Bottom right: $\vec{v}_{AB}'$ and $\vec{v}_{CD}'$ are combined with their unprimed counterparts, producing a new multiflow without overlap.}
    \label{fig:untanglingExample}
\end{figure}

\paragraph{Toy example.}
We begin with a simple example based on a toy problem (see section \ref{section:crossing} for a more in-depth description of this problem). Let $\mathcal{M}$ be the unit square with flat metric, the elementary regions be its sides, and the composite regions we are trying to lock be $\{A_I\}_{I\in\I}=\{A,B,C,D,AD\}$. It is easy to see that this problem does not involve any crossing regions.
We begin with a $\nu_v$-compliant solution to the problem, as guaranteed by the Weak Continuum Locking Theorem and seen in figure \ref{fig:untanglingExample}a. To better demonstrate the untangling process, we have deliberately picked a starting multiflow with overlapping component flows. Here, $\vec{v}_{AC}$ overlaps with $\vec{v}_{BD}$. To untangle them, our first step is to replace these two component flows with a new vector field, $\vec{v}'=\vec{v}_{AC}-\vec{v}_{BD}$. This combined field maintains the flux out of all composite boundary regions (fig \ref{fig:untanglingExample}b). There remains the task of separating $\vec{v}'$ back into component flows, allowing the process to repeat as necessary. For this purpose, the vector field is converted into a set of threads, which are sorted based on their endpoints and assigned back to the appropriate component flow. At the conclusion of the process, we are left with a new multiflow that still locks all the given composite regions, but without overlapping component flows.

\paragraph{Conservation of flux.}
We now return to the general case. Let $\vec{v}_{kl}$ and $\vec{v}_{mn}$ be distinct component flows of a $\nu_v$-multiflow $V$. (While distinct,  $\vec{v}_{kl}$ and $\vec{v}_{mn}$ may share a common index; for example $k$ and $m$ could be equal. In this subsection, $k,l,m,n$ should be regarded as fixed, while $i,j$ are variables.)

Recall the definition of $\vv_I$ from \eqref{v from vij}:\begin{equation}\label{xidef2}
\vv_I := \sum_{i<j}\xi^I_{ij}\vv_{ij}\,,\qquad
\xi^I_{ij} = \begin{cases}
+1\,,\quad& i\in I,j\in\bar I \\
-1\,,\quad &i\in\bar I,j\in I \\
0\,,\quad &\text{otherwise}
\end{cases}
\end{equation}
(similarly for $\vv^2_I$). Using the assumption that $I,J\in\I$ do not cross, we have the following lemma:
\begin{lemma}
For any $I,J\in\I$, $\xi^I_{kl}\xi^I_{mn}\xi^J_{kl}\xi^J_{mn}=0$ or $+1$.
\end{lemma}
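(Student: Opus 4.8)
The plan is to regroup the four-fold product and reduce the claim to a statement about how the two non-crossing sets $I$ and $J$ separate the (at most four) terminals $k,l,m,n$. Writing $\chi_I(i)=1$ if $i\in I$ and $0$ otherwise, one has $\xi^I_{ij}=\chi_I(i)-\chi_I(j)$, so the quantity in question is $s_I s_J$ with $s_I:=\xi^I_{kl}\xi^I_{mn}$ and $s_J:=\xi^J_{kl}\xi^J_{mn}$, each valued in $\{-1,0,+1\}$. The product is automatically $0$ whenever any factor vanishes, so it suffices to treat the case in which all four factors are nonzero---equivalently, in which both $I$ and $J$ separate both pairs $\{k,l\}$ and $\{m,n\}$---and to show that then $s_I=s_J$, so that $s_Is_J=s_I^2=+1$.

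First I would record the complementation identity $\xi^{\bar I}_{ij}=-\xi^I_{ij}$, which gives $s_{\bar I}=s_I$ (and likewise $s_{\bar J}=s_J$); thus replacing $I$ by $\bar I$ or $J$ by $\bar J$ changes neither $s_I$ nor $s_J$. Using the characterization \eqref{notcrossing} of non-crossing---one of $I\cap J=\emptyset$, $I\subseteq J$, $J\subseteq I$, $I\cup J=[n]$---together with the symmetry of the claim under $I\leftrightarrow J$, I can complement and/or relabel to reduce every non-crossing configuration to the single nested case $I\subseteq J$. For instance $I\cap J=\emptyset$ becomes $I\subseteq\bar J$, and $I\cup J=[n]$ becomes $\bar I\subseteq J$.

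The core step is then the following: if $I\subseteq J$ and both separate the pair $\{k,l\}$, then $\chi_I$ and $\chi_J$ agree on $\{k,l\}$. Indeed, the unique element of $\{k,l\}$ lying in $I$ also lies in $J$ (by $I\subseteq J$), and since $J$ too contains exactly one element of the pair, that element is the same one, and the other lies outside $J$, hence outside $I$. Applying this to both pairs gives $\chi_I=\chi_J$ on all of $\{k,l,m,n\}$, whence $\xi^I_{kl}=\xi^J_{kl}$ and $\xi^I_{mn}=\xi^J_{mn}$, and therefore $s_I=s_J$, as desired.

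The one subtlety to watch---and the only place the argument really needs care---is that the pairs $(k,l)$ and $(m,n)$ are only required to be distinct, not disjoint, so they may share a terminal. When they do, the four factors are built from three distinct indices, and one checks directly that $s_I$ (and, identically, $s_J$) is forced to a fixed sign determined solely by which terminal is shared, independently of $I$ and $J$: e.g.\ if $k=m$ then $\xi^I_{kl}$ and $\xi^I_{mn}$ both equal $2\chi_I(k)-1$ whenever nonzero, giving $s_I=+1$, while if $k=n$ a short computation gives $s_I=-1$. In every shared-index case $s_I=s_J$ holds automatically (the crossing hypothesis is not even needed), so the product is again $s_I^2\in\{0,+1\}$. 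Combining the distinct-index and shared-index analyses establishes the lemma.
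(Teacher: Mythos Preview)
Your proof is correct but follows a different route from the paper's. The paper argues by direct contradiction: assuming the product equals $-1$, a specific sign pattern such as $\xi^I_{kl}=\xi^I_{mn}=\xi^J_{kl}=-\xi^J_{mn}=1$ immediately places $k\in I\cap J$, $l\in\bar I\cap\bar J$, $m\in I\cap\bar J$, $n\in\bar I\cap J$, so all four sets in the definition of crossing are non-empty; the other sign patterns are handled ``similarly.'' Your argument instead regroups the product as $s_Is_J$, exploits the complementation invariance $s_{\bar I}=s_I$ to reduce every non-crossing configuration to the nested case $I\subseteq J$, and then shows directly that nesting forces $\chi_I=\chi_J$ on $\{k,l,m,n\}$, hence $s_I=s_J$. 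The paper's approach is considerably shorter; yours is more structural and makes the role of nesting explicit. One remark: your separate shared-index discussion is a nice observation (it shows the lemma holds for such pairs without even invoking non-crossing), but it is not actually needed---your nested-case argument goes through verbatim whether $\{k,l\}$ and $\{m,n\}$ are disjoint or not, since it only uses that $I$ and $J$ each separate each pair.
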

\begin{proof}
Proof by contradiction. Assume $\xi^I_{kl}\xi^I_{mn}\xi^J_{kl}\xi^J_{mn}=-1$. Say for example that $\xi^I_{kl}=\xi^I_{mn}=\xi^J_{kl}=-\xi^J_{mn}=1$. Then $k\in I\cap J$, $l\in \bar I\cap\bar J$, $m\in I\cap\bar J$, $n\in\bar I\cap J$. Since none of these sets are empty, $I$ and $J$ cross (see \eqref{crossingdef}). Similarly for the other possibilities.
\end{proof}

According to the lemma, there are two possibilities:
\begin{enumerate}
\item For all $I\in\I$, $\xi^I_{kl}\xi^I_{mn}=0$ or $+1$; in this case we define
\begin{equation}
\vv' := \vv_{kl}+\vv_{mn}\,.
\end{equation}
\item For all $I\in\I$, $\xi^I_{kl}\xi^I_{mn}=0$ or $-1$; in this case we define
\begin{equation}
\vv' := \vv_{kl}-\vv_{mn}\,.
\end{equation}
\end{enumerate}
(If, for all $I\in\I$, $\xi^I_{kl}\xi^I_{mn}=0$, we can define $\vv'$ in either way.) This definition guarantees that no flux out of any $A_I$ is lost by combining $\vv_{kl}$ and $\vv_{mn}$ in this way. More precisely, if in the multiflow $V$ we were to replace $\vv_{kl}$ by $\vv'$ and set $\vv_{mn}$ to zero, then $\int_{A_I}\vv_I$ would remain unchanged for all $I\in\I$. Of course, the result would not obey the constraint  \eqref{eq:flow_terminal} (although it would still obey \eqref{eq:divergenceless1} and \eqref{nubound}) and therefore would not constitute a $\nu_v$-multiflow. We will remedy this defect in the next paragraph by separating $\vv'$ into appropriate component flows.\footnote{It may seem tempting to try to untangle more than two flows at once, accelerating us towards our eventual $\nu_c$-compliant solution. However, there is in general no way to combine three or more flows into a composite flow while conserving the flux out of all $A_I$. In fact, the very simple case of three primitive boundary regions (each its own composite region) and three component flows provides a counterexample, which will be left as an exercise to the reader.}

\paragraph{Formalizing the untangling step.}
We now introduce some new notation for the untangling step, giving it a more precise definition. We will define the operator $U_{kl,mn}$, which acts on a $\nu_v$-multiflow $V$ and gives a $\nu_v$-multiflow $V^2$ in which the component flows  $\vv^2_{kl}$ and $\vv^2_{mn}$ are non-overlapping; furthermore, with the sign chosen as explained in the previous paragraph, the flux through $A_I$ is preserved, i.e.
\begin{equation}\label{fluxpreserve}
\int_{A_I}\vv_I^2 = \int_{A_I}\vv_I\,,
\end{equation}
for all $I\in\I$. We assume without loss of generality that $k\le m$. Step by step, $V^2$ is defined by the following procedure:
\begin{enumerate}
\item \textbf{Add} (subtract) $\vec{v}_{kl}$ and $\vec{v}_{mn}$ into a new composite flow, $\vec{v}^{\prime}=\vec{v}_{kl}+\vec{v}_{mn}$ ($\vec{v}^{\prime}=\vec{v}_{kl}-\vec{v}_{mn}$), as explained in the previous paragraph.
\item \textbf{Separate} the composite flow $\vec{v}^{\prime}$ into $5$ distinct sub-flows:
	\begin{enumerate}
	\item $\vec{v}^{\prime}_{kl}$
	\item $\vec{v}^{\prime}_{mn}$ ($\vec{v}^{\prime}_{nm}$)
	\item $\vec{v}^{\prime}_{kn}$ ($\vec{v}^{\prime}_{km}$)
	\item $\vec{v}^{\prime}_{ml}$ ($\vec{v}^{\prime}_{nl}$)
	\item None of the above
	\end{enumerate}
The first two of these flows correspond to the two original flows being combined. The third and fourth are the two crossover flows, and the fifth flow consists only of internal cycles. If we switch back to the discetized bit thread image, we can perform the separation step by categorizing each thread by its endpoints. In the continuum, streamtubes can be used to partition the support of $\vec{v}^{\prime}$ into five disjoint regions, each occupied by one of these sub-flows and pictured in Figure \ref{fig:untanglingExample}.
\item \textbf{Switch signs.} Use the identity $\vec{v}^{\prime}_{ji}=-\vec{v}^{\prime}_{ij}$ as necessary to return all subflows to the convention $i<j$. Delete the ``None of the above'' flow.
\item \textbf{Recombine} the third and fourth primed sub-flows (c and d) with their corresponding unprimed component flows. Return a new multiflow $U_{kl,mn}[V]=V^2$ with the following components:
\begin{equation}
\vec{v}^{2}_{ij}=
\begin{cases}
	\vec{v}^{\prime}_{ij}              &\mbox{if}\:(i,j)\in\{(k,l),(m,n)\}\\
	\vec{v}_{ij}+\vec{v}^{\prime}_{ij} &\mbox{if}\:(i,j)\in\{(k,n),(m,l),(l,m)\}\quad\left(\:\{(k,m),(n,l),(l,n)\}\:\right)\\
	\vec{v}_{ij}                       &\mbox{otherwise}
\end{cases}
\end{equation}
In other words, the two flows that were untangled replace their overlapping predecessors, and any other incidental crossover flows created in the process are recombined into the multiflow by being added to the component flow with the same endpoints.
\end{enumerate}

\subsection{Re-tangling and proof of convergence}\label{subsection:retanglingConvergence}

We have now demonstrated that we can untangle any two component flows. One might naively suppose that we need only perform the algorithm once on every pair of component flows, and then we are done. However, this is not the case. When we combine two component flows $\vec{v}_{kl}$, $\vec{v}_{mn}$, the composite flow $\vec{v}^{\prime}$ will not necessarily re-separate solely into $\vec{v}'_{kl}$ and $\vec{v}'_{mn}$, but may also produce non-zero crossover flows $\vec{v}'_{kn}$ and/or $\vec{v}'_{ml}$. If the component flow $\vec{v}_{kn}$ or $\vec{v}_{ml}$ was previously separated from some other component flow, this new contribution may re-tangle it.

\begin{figure}
\centering
\includegraphics[width=6.0in]{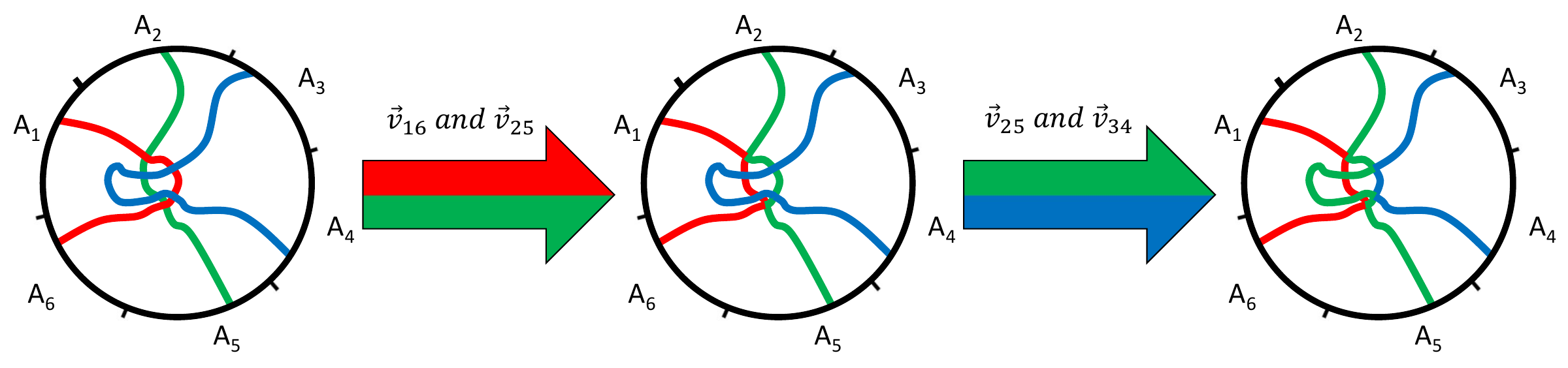}
\caption{A demonstration of the ``re-tangling'' effect. We begin with a multiflow configuration with overlapping pairs of flows. We first untangle the component flows $\vec{v}_{16}$ and $\vec{v}_{25}$. Next we untangle the component flows $\vec{v}_{25}$ and $\vec{v}_{34}$. We note that by doing this we reintroduce an overlap between $\vec{v}_{16}$ and $\vec{v}_{15}$ which we had just untangled.
    }
\label{fig:retangling}
\end{figure}

In fact, this re-tangling effect can take place even when there are no crossover threads. Consider the following example with six boundary regions and three component flows, $\vec{v}_{16}$, $\vec{v}_{25}$, and $\vec{v}_{34}$, visualized in figure \ref{fig:retangling}. We begin by acting with $U_{16,25}$, untangling $\vec{v}_{16}$ and $\vec{v}_{25}$. When we re-separate, there is no crossover, and $\vec{v}_{16}^{'}$ and $\vec{v}_{25}^{'}$ emerge as non-overlapping flows. However, when we next attempt to untangle $\vec{v}_{25}^{'}$ and $\vec{v}_{34}$ with $U_{25,34}$, even without crossover, $\vec{v}_{34}$ transfers its piece that crosses $\vec{v}_{16}^{'}$ back to $\vec{v}_{25}^{'}$, undoing our work.

It is therefore necessary to loop the untangling algorithm through the flows repeatedly until the $\nu_c$ norm bound is satisfied. This could potentially take an infinite number of iterations, so its convergence to a non-overlapping $\nu_c$-multiflow configuration is not a priori obvious. The remainder of the subsection is dedicated to demonstrating this convergence.

First we define a functional $\theta$ on the space of $\nu_v$-multiflows that measures the multiflow's non-compliance with the $\nu_c$ density bound. Specifically,
\begin{equation}
\theta[V]:=\int_{\mathcal{M}}\frac{1}{2}\zeta(\zeta-1),
\end{equation}
where $\zeta=\zeta(V(x))$ is the number of non-zero component flows in $V$ at the point $x\in \mathcal{M}$. The integrand is thus the number of pairs of non-zero component flows at $x$, and $\theta[V]=0$ if and only if $V$ obeys the $\nu_c$ bound everywhere (except on a set of measure zero). To fully define the sequence of multiflows, we now give a more precise definition of what one step in the untangling procedure is.

\paragraph{Iteration operator.}Define one iteration of the untangling algorithm as the untangling of the two flows in the current configuration with the maximum volume of overlap. (While two successive iterations will never untangle the same pair of flows, the algorithm will not necessarily cycle through every pair before repeating a pair.) Denote an iteration by the following operator $U$:\begin{equation}\begin{split}
U[V]:=U_{kl,mn}[V]\,,\quad&\text{where $\vec{v}_{kl},\vec{v}_{mn}$ is the pair of distinct components of $V$} \\ &\text{with maximum volume of support overlap.}
\end{split}\end{equation}

By Theorem \ref{WCLT}, there exists a $\nu_v$-multiflow $V_0$ that locks $\I$. We define the sequence of multiflows and respective $\theta$ values:
\begin{equation}
V_m:= U^{m}[V_0]\,, \quad\theta_m:=\theta[V_m]\,,\quad m=0,1,2,\ldots\,.
\end{equation}
Each $V_m$ is a $\nu_v$-multiflow that locks $\I$. We now prove that the sequence $\theta_m$ is non-increasing and converges to 0.

\begin{lemma}
$\theta_{m+1} \leq \theta_m$.
\label{eq:termleq}
\end{lemma}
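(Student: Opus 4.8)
The statement to prove is Lemma (labeled \texttt{eq:termleq}), namely that $\theta_{m+1}\le\theta_m$, i.e.\ one iteration of the untangling operator $U$ never increases the non-compliance functional $\theta$. Since $\theta[V]=\int_{\mathcal M}\tfrac12\zeta(\zeta-1)$ is an integral of a pointwise quantity counting pairs of overlapping component flows, the natural strategy is to prove the inequality \emph{pointwise}: I would show that at (almost) every point $x\in\mathcal M$, the number of pairs of non-zero component flows after applying $U_{kl,mn}$ is no larger than before. Integrating such a pointwise bound over $\mathcal M$ then gives the lemma immediately. So the whole problem reduces to a local, combinatorial bookkeeping question about how $U_{kl,mn}$ redistributes supports at a single point.

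**Key steps.** First I would fix a generic point $x$ and partition $\mathcal M$ into the five disjoint streamtube regions produced by the separation step in subsection \ref{subsection:theUntanglingProcess} (the $kl$, $mn$, crossover $kn$, crossover $ml$, and cycle pieces of $\vec v'$). Since the operator only modifies the components $\vec v_{kl}$, $\vec v_{mn}$, $\vec v_{kn}$, $\vec v_{ml}$, $\vec v_{lm}$ (and deletes the cycle flow), all \emph{other} component flows $\vec v_{ij}$ are untouched, so their contribution to $\zeta$ at $x$ is unchanged. The task is then to compare, at $x$, the count of non-zero components among the affected family before and after the step. The crucial input is that after the operation the two components $\vec v^2_{kl}$ and $\vec v^2_{mn}$ have \emph{disjoint} support (this is exactly what $U_{kl,mn}$ guarantees), so at $x$ at most one of them is non-zero. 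I would enumerate the cases according to which of $\vec v_{kl}(x)$, $\vec v_{mn}(x)$ are non-zero and which streamtube region $x$ falls into, and in each case tally $\Delta\zeta$. The key observation is that combining two overlapping flows into $\vec v'$ and re-separating into non-overlapping pieces can only merge occupied ``slots'' or transfer flux to already-occupied crossover components; it cannot create a genuinely new non-zero component at a point where none of the relevant slots were occupied. In particular, where $\vec v_{kl}$ and $\vec v_{mn}$ both overlapped (contributing a pair), after the step at most one of the $kl,mn$ slots survives, reducing the local pair count; one must then check that any flux pushed into the crossover slots $kn,ml$ does not create a compensating \emph{increase} in pairs elsewhere, using the fact that $\zeta(\zeta-1)/2$ is strictly convex so that merging is favorable.

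**Main obstacle.** The delicate part is the re-tangling phenomenon flagged in the text around figure \ref{fig:retangling}: the crossover contributions $\vec v'_{kn}$, $\vec v'_{ml}$ are \emph{added} to pre-existing components $\vec v_{kn}$, $\vec v_{ml}$, and if those were already non-zero and non-overlapping with something, the merge could in principle raise the pair count in a neighborhood that didn't involve the $kl,mn$ pair at all. The real work is to verify that the net effect is still non-positive: the removal of the $(kl,mn)$ overlap pair must dominate any pairs newly created by the crossover flux landing on occupied $kn$ or $ml$ slots. I expect this to hinge on the streamtube structure—at a point in, say, the $kn$ streamtube of $\vec v'$, the flux there came \emph{from} the union of the $kl$ and $mn$ supports, so that slot was already ``accounted for'' in the old configuration—together with a careful convexity argument showing that moving flux from two distinct slots into a single slot weakly decreases $\tfrac12\zeta(\zeta-1)$. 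I would organize this as the one genuinely nontrivial case-check, and I anticipate that establishing the pointwise inequality cleanly (rather than the integrated one) is where essentially all the difficulty lies; once the pointwise bound $\zeta_{\text{after}}(\zeta_{\text{after}}-1)\le\zeta_{\text{before}}(\zeta_{\text{before}}-1)$ is in hand, integration over $\mathcal M$ finishes the proof with no further subtlety.
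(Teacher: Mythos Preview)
Your approach---proving $\zeta$ is pointwise non-increasing and then integrating---is exactly the paper's, and it is correct. However, the ``main obstacle'' you flag is illusory; no convexity argument or delicate domination estimate is needed. The single observation that dissolves the worry is this: the support of each crossover piece $\vec v'_{kn}$ (or $\vec v'_{ml}$) is contained in $\mathrm{supp}(\vec v')\subseteq\mathrm{supp}(\vec v_{kl})\cup\mathrm{supp}(\vec v_{mn})$, and the streamtube decomposition assigns $\vec v'(x)$ to \emph{exactly one} of the five sub-flows. Hence at any point where a crossover piece is non-zero, (i) at least one of the original $kl$, $mn$ slots was occupied, and (ii) after the operation \emph{both} of $\vec v^{\,2}_{kl}$, $\vec v^{\,2}_{mn}$ vanish there (since that streamtube was assigned to the crossover, not to $kl$ or $mn$). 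So filling a previously empty crossover slot is always compensated one-for-one by vacating a $kl$ or $mn$ slot. And if the crossover slot $kn$ was \emph{already} occupied, adding flux to it does not change its occupancy count at all---$\zeta$ counts non-zero slots, not flux magnitudes. Either way $\zeta$ cannot increase at $x$, and since $\zeta\mapsto\tfrac12\zeta(\zeta-1)$ is monotone on non-negative integers you are done.

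You are conflating the re-tangling phenomenon of figure~\ref{fig:retangling} with an increase in $\zeta$. Re-tangling means a particular \emph{pair} of components that had been separated becomes overlapping again; but this always happens by flux being transferred from some other slot that is simultaneously vacated at that point, so the local count $\zeta$ stays the same or drops. The paper's proof is accordingly two sentences long: summing collapses two slots to one on the overlap (and leaves $\zeta$ unchanged elsewhere), and re-separating into streamtubes does not raise the slot count because it is a partition of $\mathrm{supp}(\vec v')$.
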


\begin{proof}
Note that when two flow fields are locally summed, then only their sum remains in locations in the bulk where they previously overlapped, reducing $\zeta$ in such locations by 1. $\zeta$ does not change in any region where the two flows did not originally overlap. Furthermore, the process of re-separating out the threads back into components also does not change $\zeta$. Thus, $\zeta$ is non-increasing pointwise, so $\theta$ is non-increasing.
\end{proof}

\begin{lemma}
$\lim_{m\to\infty}\theta_m=0$.
\label{eq:convergence}
\end{lemma}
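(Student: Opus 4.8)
The plan is to combine the monotonicity already established in Lemma~\ref{eq:termleq} with a quantitative lower bound on the decrease produced by each iteration. Since the integrand $\tfrac12\zeta(\zeta-1)=\binom{\zeta}{2}\ge0$, every $\theta_m\ge0$; together with $\theta_{m+1}\le\theta_m$ this forces $\theta_m$ to converge to some limit $\theta_\infty\ge0$, so it suffices to rule out $\theta_\infty>0$. The key reformulation is to read $\theta$ as a sum of pairwise overlap volumes. Writing $\binom{\zeta(x)}{2}$ as the number of unordered pairs of components that are simultaneously nonzero at $x$ and integrating, one gets $\theta[V]=\sum_p\omega_p(V)$, where $p$ ranges over the $P=\binom{\binom{n}{2}}{2}$ pairs of distinct components and $\omega_p(V)$ is the volume of the region on which both members of $p$ are nonzero. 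Because there are only $P$ such pairs (a fixed finite number depending on $n$ alone), the pair of maximal overlap---which is exactly the pair the iteration operator $U$ untangles at step $m$---satisfies $\omega_{\max}\ge\theta_m/P$, the maximum of finitely many overlaps dominating their average.

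Next I would bound the per-step decrease from below. Let $R$ be the region of volume $\omega_{\max}$ on which the two untangled components $\vec v_{kl},\vec v_{mn}$ had both been nonzero. The crucial pointwise claim is that $\zeta$ drops by at least $1$ everywhere on $R$ under $U$, while Lemma~\ref{eq:termleq} guarantees $\zeta$ is nowhere increased. Granting this, $\theta_m-\theta_{m+1}\ge\int_R\bigl(\binom{\zeta}{2}-\binom{\zeta-1}{2}\bigr)=\int_R(\zeta-1)\ge\vol(R)=\omega_{\max}\ge\theta_m/P$. Hence $\theta_{m+1}\le(1-1/P)\theta_m$, so $\theta_m\le(1-1/P)^m\theta_0\to0$, which is the claim (and in fact yields geometric convergence). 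Equivalently, in contradiction form: if $\theta_\infty>0$, then every step would lower $\theta$ by at least $\theta_\infty/P>0$, driving $\theta_m\to-\infty$ and contradicting $\theta_m\ge0$.

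The step I expect to require the most care is the pointwise assertion that $\zeta$ decreases by at least $1$ on $R$. Here I would argue as follows. After forming the single combined flow $\vec v'=\vec v_{kl}\pm\vec v_{mn}$, at each point of $R$ the two previously separate bundles are merged into one vector field; because $\vec v'$ is a genuine flow, it has a unique integral curve through each point where it is nonzero, so when it is re-separated into bundles by thread endpoints exactly one subflow is nonzero at that point. Thus combination-and-separation replaces two nonzero components by one throughout $R$, lowering $\zeta$ there by $1$. The final recombination, in which a crossover subflow $\vec v'_{kn}$ (or $\vec v'_{ml}$) is added to a preexisting component with the same endpoints, can only merge two components into one or annihilate a component, so it leaves $\zeta$ unchanged or lowers it further on $R$ and never raises it anywhere---consistent with Lemma~\ref{eq:termleq}. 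Assembling these observations gives $\zeta(V_{m+1})\le\zeta(V_m)-1$ throughout $R$, which closes the argument.
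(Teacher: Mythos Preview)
Your proof is correct and follows essentially the same route as the paper: both interpret $\theta$ as the sum over the $P=\binom{\binom{n}{2}}{2}$ pairs of component-flow overlap volumes, bound the maximal overlap below by the average $\theta_m/P$, and use that $\zeta$ drops by at least $1$ on the untangled overlap region to conclude $\theta_{m+1}\le(1-1/P)\theta_m$. The only cosmetic difference is that you iterate this inequality to obtain geometric decay directly, whereas the paper applies it once at a suitably chosen $m_0$ to derive a contradiction with $\theta'>0$; your packaging is slightly cleaner and yields the explicit rate $(1-1/P)^m$ as a bonus.
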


\begin{proof}
We have already shown that the sequence $\theta_m$ is monotonically decreasing and bounded below by zero. This implies that the sequence must converge to some $\theta' \geq 0$. Suppose $\theta'> 0$. Then for any $\epsilon > 0$ there exists some $m_0\in \mathbb{Z}_{+}$ such that
\begin{equation}
\theta_{m_0}-\theta^{\prime}<\epsilon.
\label{eq:epsibound}
\end{equation}
Define $n':=\frac12n(n-1)$, the total number of component flows (where $n$ is the number elementary boundary regions), and $n'':=\frac12n'(n'-1)$, the number of pairs of distinct component flows. Choose
\begin{equation}
\epsilon = \frac{\theta^{\prime}}{n''-1}.
\end{equation}
From \eqref{eq:epsibound} and the choice of $\epsilon$, we have
\begin{equation}
\theta_{m_0} < \theta^{\prime} +\epsilon = \theta^{\prime} \frac{n''
}{n''-1}.
\end{equation}
Since $\theta$ is the integral of the total number of pairs of flows present at each point in the bulk, it can also be thought of as the sum over pairs of component flows of their overlap volume. The operation $U$ always untangles the pair with the maximum overlap volume. This maximum must be greater than or equal to the average volume of overlap across all pairs, so we know that for the maximally overlapping pair of component flows in the configuration $V_{m_0}$,
\begin{equation}
\rho_{\text{overlap}}\geq \frac{\theta_{m_0}}{n''},
\end{equation}
where $\rho_{\text{overlap}}$ is the volume of overlap between the maximally overlapping pair of component flows.

After untangling, $\zeta$ decreases by 1 in the volume of overlap that was affected by the operation, so $\frac{1}{2}\zeta(\zeta-1)$ decreases by at least 1 in that volume (since $\zeta \geq 2$ originally wherever the untangling took place). This means that $\theta$ must decrease by at least $\rho_{\text{overlap}}$ upon untangling, which implies
\begin{equation}
\theta_{m_0+1} \leq \theta_{m_0} -\rho_{\text{overlap}} \leq \theta_{m_0}- \frac{\theta_{m_0}}{n''}=\theta_{m_0}\frac{n''-1}{n''} < \theta^{\prime}\,.
\end{equation}This, in turn, implies that $\theta^{\prime}$ is not a lower bound of the sequence. However, since the sequence is monotonically decreasing, if $\theta^{\prime}$ is not a lower bound then it cannot be the limit point of the sequence, which contradicts our assumption that such a limit point $\theta^{\prime}>0$ exists. Thus the limit point of the sequence must be 0.
\end{proof}

Finally, we now show that the sequence $V_m$ converges pointwise almost everywhere in $\M$ to a $\nu_c$-multiflow. The operator $U$ only changes a multiflow $V$ at points in $\M$ where two or more component flows overlap, i.e.\ where $\zeta\ge2$. Therefore if we let $Z_m$ be the subset of $\M$ where component flows in $V_m$ overlap,
\begin{equation}
Z_m:=\{x\in\M|\zeta(V_m(x))\ge2\}\,,
\end{equation}
then the subsets $Z_m$ are nested:
\begin{equation}
Z_{m+1}\subseteq Z_m\,.
\end{equation}
Hence, except on the limiting set
\begin{equation}
Z_\infty:=\bigcap_{m=0}^\infty Z_m\,,
\end{equation}
$V_m$ converges pointwise to a configuration $V_\infty$ with $\zeta=0$ or 1 everywhere. Furthermore, since the volume of $Z_m$ is bounded above by $\theta_m$, it goes to 0. Hence $Z_\infty$ has vanishing volume, and can be safely neglected. (In particular, since $V_m$ obeys the $\nu_v$ bound for all $m$, all component flow vector fields are finite. So the flux of any component of $V_m$ passing through $Z_\infty$ vanishes.) We conclude that $V_\infty$ is $\nu_c$-multiflow that locks $\I$.

\section{\boldmath Crossing regions: \texorpdfstring{$\nu_v$}{𝜈_v} locking failure}
\label{section:crossing}

Since $\nu_v$-multiflows are capable of locking an arbitrary  arrangement of non-crossing regions, one may ask whether they can also lock crossing regions. Indeed, as mentioned in the introduction (and discussed further in appendix \ref{sec:networks}), on networks there is a theorem guaranteeing that crossing sets of  terminals can be locked under certain conditions. On the other hand, in the continuum there is a potential geometric obstruction to locking for crossing regions under the $\nu_v$ norm bound, which we can most easily state in terms of threads. To lock a region $A_I$, the threads must cross its minimal surface $m(A_I)$ orthogonally and at maximal density. However, the minimal surfaces for two regions that cross often intersect at an angle; at the intersection locus, the threads cannot cross both surfaces perpendicularly.\footnote{This argument was pointed out to us by V. Hubeny.}

\begin{figure}
    \centering
    \includegraphics[width=3in]{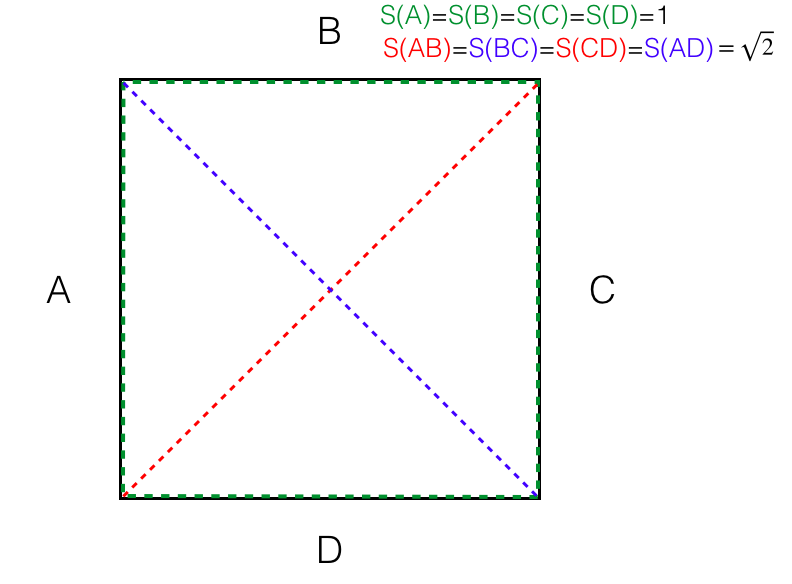}
    \caption{Minimal surfaces and corresponding entropies of the toy problem described in Section \ref{section:crossing}.}
    \label{fig:ToyRT}
\end{figure}

Nonetheless, one might wonder if there could exist some kind of multiflow that is perhaps singular at the intersection locus that locks crossing regions, or a limit of multiflows that comes arbitrarily close to doing so. In order to exclude such possibilities, in this section we will give a relatively simple example in which crossing regions \emph{cannot} be locked by a $\nu_v$-multiflow (or limit of $\nu_v$-multiflows). Specifically, we let $\M$ be the flat unit square and the elementary regions $A,B,C,D$ be its sides in consecutive order (see Figure \ref{fig:ToyRT}). We will attempt to lock the crossing boundary regions $AB$ and $BC$ with a $\nu_v$-multiflow. The minimal surfaces for these regions are the two diagonals respectively, and they have areas $S(AB)=S(BC)=\sqrt2$. The two regions can be simultaneously locked if and only if the maximal total flux equals their total minimal surface area $S(AB)+S(BC)=2\sqrt{2}$. We will study both the primal and dual versions of this problem, \eqref{primal} and \eqref{dual2} respectively. First, we will show that the maximal total flux is less than $2\sqrt2$, establishing that these two regions cannot be locked by a $\nu_v$-multiflow (or by a thread configuration obeying the $\nu_v$ density bound). We will then exhibit numerical solutions to the primal and dual programs.

The primal program \eqref{primal} for this case is
\begin{equation}\label{squareprimal}
\text{Maximize }\int_{AB}\vv_1+\int_{BC}\vv_2
\quad\text{over $\nu_v$-multiflows $V$}\,,
\end{equation}
where
\begin{equation}
\vv_1 = \vv_{AC}+\vv_{BC}+\vv_{AD}+\vv_{BD}\,,\qquad
\vv_2 = -\vv_{AB}-\vv_{AC}+\vv_{BD}+\vv_{CD}\,.
\end{equation}
(In terms of our usual notation, $\vv_1$ and $\vv_2$ are $\vv_I$ for $I=\{1,2\}$ and $I=\{2,3\}$ respectively.) The primal objective can also be written in terms of the fluxes $\int_{A_i}\vv_{ij}$ of the component flows as
\begin{equation}
\int_A\vv_{AB}+2\int_A\vv_{AC}+\int_A\vv_{AD}+\int_B\vv_{BC}+2\int_B\vv_{BD}+\int_C\vv_{CD}\,.
\end{equation}
The dual program \eqref{dual2} is 
\begin{align}
    \label{eq:47}
\text{Minimize $ \int_{\mathcal{M}}\sqrt g\,\lambda$ with respect to $\lambda$, subject to }
    \lambda&\ge0\,,\nonumber\\
   \int_A^ B  ds\,\lambda \:,\int_A^D ds\,\lambda \:,\int_B^C ds\,\lambda \:,\int_C^D ds\,\lambda &\geq 1\,, \\
  \int_A^C ds\,\lambda \:,
  \int_B^D ds\,\lambda &\geq 2\,,\nonumber
\end{align}
where the integrals are over arbitrary curves with endpoints in the given regions.

\begin{figure}
    \centering
    \includegraphics[width=4.5in]{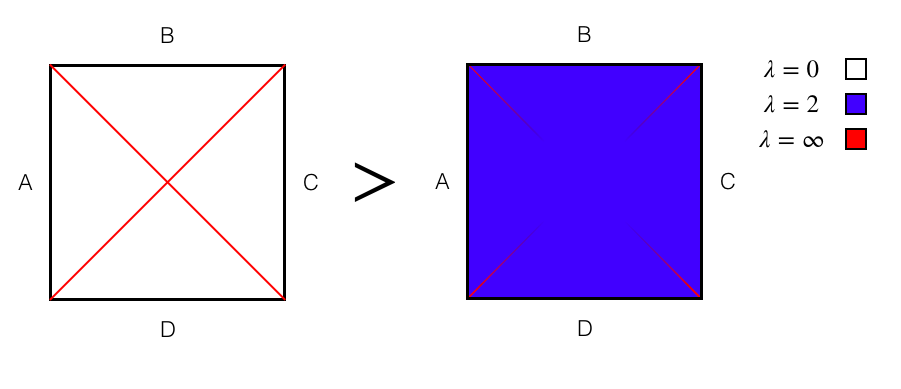}
    \caption{Illustration of feasible configurations for the dual program \eqref{eq:47}. On the left $\lambda$ is zero except for unit-weight delta-functions on the minimal surfaces for $AB$ and $BC$, which are the diagonals of the square. On the right is a feasible $\lambda$ with a smaller value of the objective: $\lambda=2$ everywhere except for delta functions along the diagonals with weight  decreasing linearly with distance from the corner.}
    \label{fig:6}
\end{figure}

By weak duality, any feasible configuration of the dual program provides a rigorous upper bound for the optimal value of the primal program. One such a choice for $\lambda$ is illustrated on the left side of Figure \ref{fig:6}. It consists of a unit-weight delta function along the $AB$ and $BC$ minimal surfaces, which are the diagonals of the square. The dual objective for this configuration takes the value $2\sqrt2$. The question is then whether or not we can do better, i.e.\ achieve a smaller value for the objective while still satisfying the constraints. If so, then the primal objective is bounded above by a value smaller than $2\sqrt2$, and a result we cannot lock both $AB$ and $BC$.

We will now show that this is indeed the case. The configuration, shown on the right side of Figure \ref{fig:6}, is obtained by first setting $\lambda =2$ everywhere on the square to satisfy the constraint on paths between opposite sides ($A$ to $C$ and $B$ to $D$), and then adding weighted delta functions along the diagonals to ensure that the constraint on paths between adjacent sides is also satisfied. The smallest weight for the delta functions to satisfy the constraints is
\begin{equation}
    1-2\sqrt{2}\rho,
\end{equation}
where $\rho$ is the distance from the nearest corner of the square. Note that we terminate the delta functions at a distance $\rho=1/(2\sqrt{2})$ from the corner. Integrating this weight from $\rho=0$ to $1/(2\sqrt{2})$ gives a contribution of $\sqrt{2}/8$ to the objective from each delta function. In all we thus have
\begin{equation}
    \int_{\mathcal{M}}\sqrt g\,\lambda=2+4\frac{\sqrt{2}}{8}=2+\frac{\sqrt{2}}{2}\approx 2.71\,.
\end{equation}
Since this is strictly less than $2\sqrt2\approx 2.83$, we conclude that $AB$ and $BC$ cannot be simultaneously locked by a $\nu_v$-multiflow. By the weak duality \eqref{threadWD} between the dual program and threads, we also conclude that $AB$ and $BC$ cannot be simultaneously locked by any thread configuration obeying the $\nu_v$ density bound. Note that we are not claiming that this configuration is the \emph{solution} to the dual program.

\begin{figure}
    \centering
    \includegraphics[width=5in]{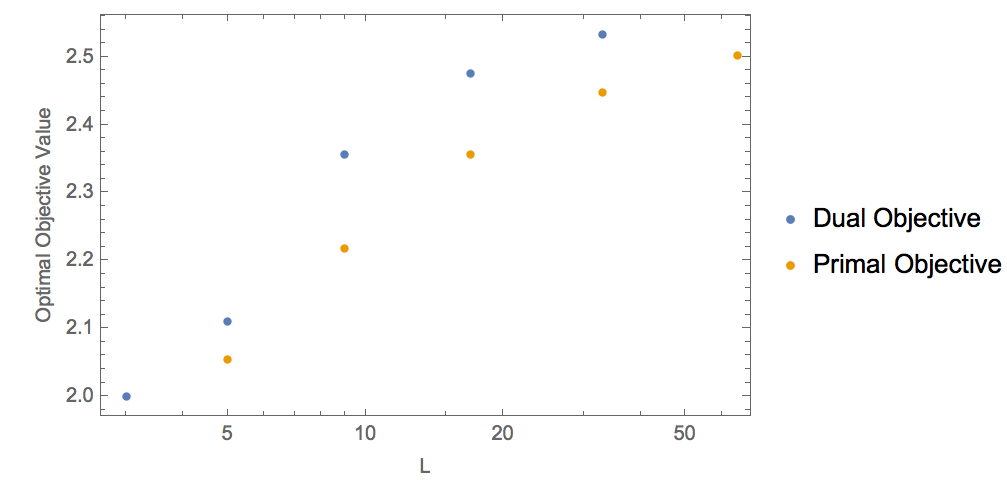}
    \caption{Optimal values of primal and dual programs \eqref{squareprimal}, \eqref{eq:47} as a function of resolution $L$.}
    \label{fig:conv}
\end{figure}

\begin{figure}
    \centering
    \includegraphics[width=4in]{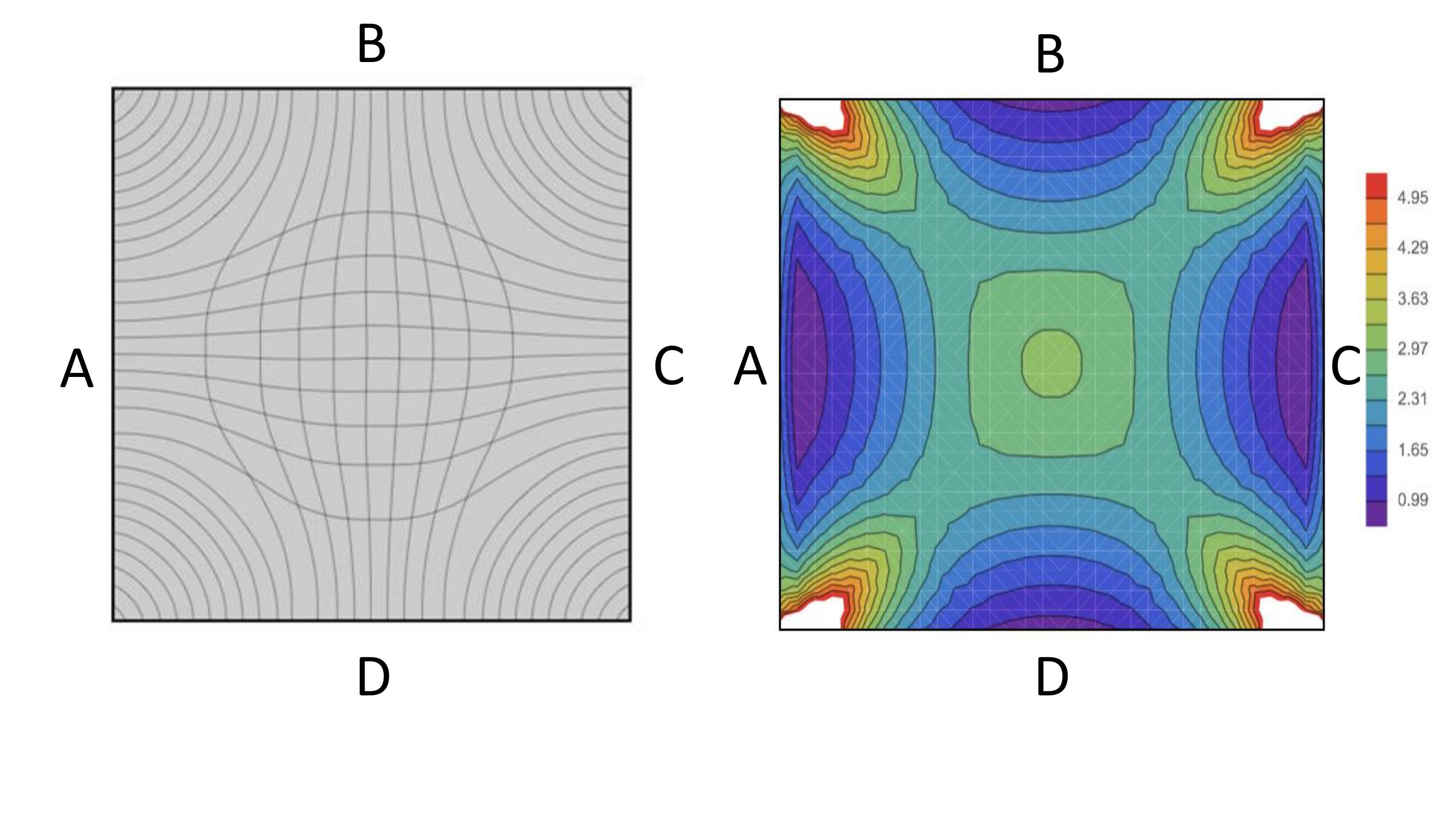}
    \caption{Left: Numerical solution for the primal program \eqref{squareprimal} on an $L=65$ lattice. Shown are a set of flow lines (i.e.\ threads) for each component flow. Right: Contour map of solution $\lambda$ for the dual program \eqref{eq:47} on an $L=33$ lattice.}
    \label{fig:10}
\end{figure}

In view of this result, and in order to better understand the behavior of $\nu_v$-multiflows in the presence of crossing regions, it is interesting to study the solutions to the primal and dual programs. Using \emph{Mathematica}, we performed numerical optimizations for both the primal and dual programs, discretizing the square to an $L\times L$ lattice for various values of $L$. For the primal, we considered resolutions ranging from $L=5$ to $65$, and for the dual ranging from $L=3$ to 33. A plot of the optimal objective values for both the primal and dual computations for different lattice sizes is shown in Figure \ref{fig:conv}. While we did not attempt a serious numerical error analysis, the values shown in Figure \ref{fig:conv} do appear to be converging at large lattices sizes to a value around 2.5 or 2.6, with the primal and dual consistent with each other. The highest-resolution solutions for the primal and dual are shown in Figure \ref{fig:10}. Both appear to be smooth; in particular, there is no evidence of delta functions in the dual solution.

\section{Crossing regions: \texorpdfstring{$\nu_a$}{𝜈\_a} locking theorem and conjectures}
\label{section:altBound}

In the previous section, we showed with an example that crossing regions cannot in general be locked by a $\nu_v$-multiflow. This also certainly holds for $\nu_c$-multiflows, since the $\nu_c$ norm bound is stronger than the $\nu_v$ one. However, in subsection \ref{sec:background}, we discussed a third possible norm bound (or equivalently density bound for threads), that was based on the total flux (or number of threads) passing through any surface. This bound was $\nu_a\le1$, where the norm $\nu_a$ was defined in \eqref{nu1def}; we repeat it here for convenience:
\begin{equation}\label{nu1def2}
\nu_a(V)=\max_{\hat n}\sum_{i<j}|\hat n\cdot\vv_{ij}|\,,
\end{equation}
where the maximum is over unit vectors. The norm $\nu_a$ can also be written in terms of a maximization over a set of scalars $\xi_{ij}\in[-1,1]$ ($i<j$):
\begin{equation}\label{nu1def3}
\nu_a(V)=\max_{\{|\xi_{ij}|\le1\}}\left|\sum_{i<j}\xi_{ij}\vv_{ij}\right|.
\end{equation}
The equivalence of \eqref{nu1def2} and \eqref{nu1def3} can be shown by using the fact that, for any vector $\vv$, $|\vv| = \max_{\hat n}\hat n\cdot\vv$ and switching the order of the maximizations:
\begin{align}
\max_{\{|\xi_{ij}|\le1\}}\left|\sum_{i<j}\xi_{ij}\vv_{ij}\right| &=
\max_{\{|\xi_{ij}|\le1\}}\max_{\hat n}\hat n\cdot\left(\sum_{i<j}\xi_{ij}\vv_{ij}\right)\nonumber \\ &= 
\max_{\{|\xi_{ij}|\le1\}}\max_{\hat n}\left(\sum_{i<j}\xi_{ij}\hat n\cdot\vv_{ij}\right) \nonumber\\ &= 
\max_{\hat n}\sum_{i<j}\max_{|\xi_{ij}|\le1}\xi_{ij}\hat n\cdot\vv_{ij} \nonumber\\ &=\max_{\hat n}\sum_{i<j}|\hat n\cdot\vv_{ij}|\,.
\end{align}

By \eqref{nu1def3}, the condition $\nu_a(V)\le1$ is equivalent to the requirement that any linear combination of component flows,
\begin{equation}
\vv = \sum_{i<j}\xi_{ij}\vv_{ij}\,,
\end{equation}
where the coefficients $\xi_{ij}$ are constants in $[-1,1]$, obeys $|\vv|\le1$ and is therefore itself a flow. In particular, its flux for any boundary region $A_I$ is bounded above by the minimal surface area $S(A_I)$. The $\nu_a$ bound is the weakest possible bound that guarantees this property.

Since the $\nu_a$ bound is less stringent than the $\nu_c$ or $\nu_v$ bounds, it is natural to ask whether a $\nu_a$-multiflow might be able to lock crossing regions (at least in some cases). Let us start with the simple example studied in the previous section, the flat unit square with sides $A,B,C,D$, where we are trying to lock $AB$ and $BC$. Indeed, we find a simple solution within the set of $\nu_a$-multiflows: Set $\vv_{AC}$ to be a constant vector field in the horizontal direction (from $A$ toward $C$) of norm $1/\sqrt2$, and similarly $\vv_{BD}$ a constant vector field in the vertical direction (from $B$ to $D$) of norm $1/\sqrt2$; and set all other component flows to 0. It is easy to check that this multiflow satisfies the $\nu_a$ norm bound, and locks both $AB$ and $BC$.

This encouraging result leads to the following question: Can any pair of crossing regions be locked by a $\nu_a$-multiflow? We turn now to Theorem \ref{continuumcrossing}, which answers this question in the affirmative. Conjectured generalizations will be discussed in subsection \ref{sec:open questions}.

\subsection{Crossing-pair locking theorem}

\begin{figure}
\centering
\includegraphics[width=6in]{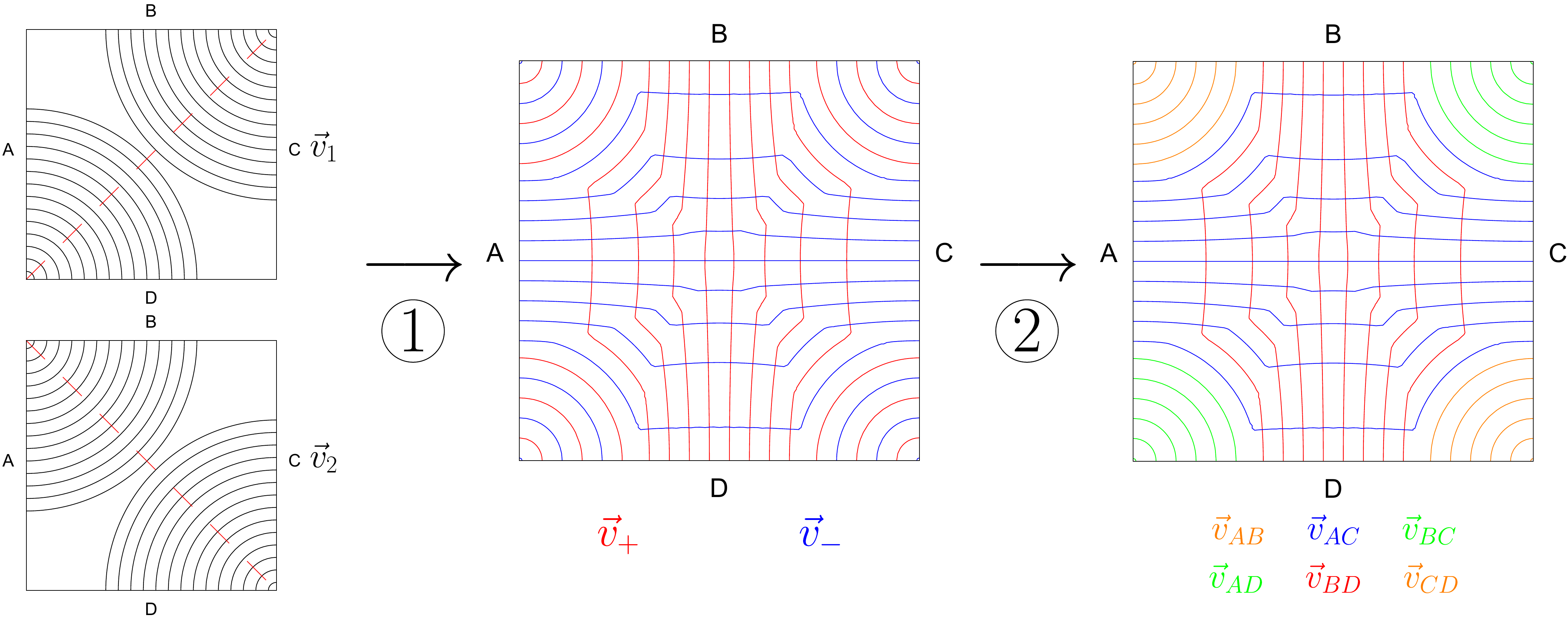}
\caption{Flowchart of the process of converting two flows into a $\nu_a$-multiflow. We start with two separate flows, $\vec{v}_1$ and $\vec{v}_2$, which lock $AB$ and $BC$, respectively (the locked minimal surfaces are shown as red dashed lines). In step (1), the two flows are combined to create $\vec{v}^\pm=\frac{1}{2}(\vec{v}_1\pm\vec{v}_2)$. In step (2), $\vec{v}^\pm$ are separated into components and recombined into the multiflow $V=\{\vec{v}_{ij}\}$.
}
\label{fig:SquareSumDifference}
\end{figure}

In this subsection, we prove the following theorem. The proof is illustrated in Figure \ref{fig:SquareSumDifference} on the example of the square studied in Section \ref{section:crossing}. A straightforward generalization, 
which allows us for example to lock the set $\I=\{A,B,C,D,AB,BC\}$ for $n=4$, %
is given after the proof.

\begin{theorem}[Crossing-pair locking]\label{continuumcrossing}
Let $I,J\subset[n]$ cross. Then there exists a $\nu_a$-multiflow that locks $\{I,J\}$.
\end{theorem}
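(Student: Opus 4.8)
The plan is to mimic the untangling construction from Theorem \ref{NOLT} but adapted to the $\nu_a$ bound, exploiting the key fact noted just before the theorem: the $\nu_a\le1$ condition is exactly equivalent, by \eqref{nu1def3}, to the requirement that every linear combination $\sum_{i<j}\xi_{ij}\vv_{ij}$ with $|\xi_{ij}|\le1$ is itself a flow (i.e.\ has $|\vv|\le1$). This is the essential advantage of $\nu_a$ over $\nu_v$: we are allowed to superpose two unit-norm flows that point in different directions at the same point, so long as we only ever extract \emph{one} linear combination at a time when computing a flux. First I would invoke Theorem \ref{mfmctheorem} (Riemannian max flow-min cut) twice, to produce a flow $\vv_1$ locking $A_I$ and a flow $\vv_2$ locking $A_J$, so that $\int_{A_I}\vv_1=S(A_I)$ and $\int_{A_J}\vv_2=S(A_J)$.

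Next I would form the two combinations $\vv^{\pm}=\tfrac12(\vv_1\pm\vv_2)$, as in Figure \ref{fig:SquareSumDifference}. Each $\vv^\pm$ is divergenceless and satisfies $|\vv^\pm|\le1$ (being a convex combination of flows), so each is a flow. I would then convert the pair $\{\vv^+,\vv^-\}$ into a multiflow by the thread-sorting procedure of subsection \ref{sec:multiflow relations}: turn each $\vv^\pm$ into a locally parallel thread bundle, then separate every thread according to the elementary regions its two endpoints lie in, and assign it to the corresponding component $\vv_{ij}$. The crucial point to check is the $\nu_a$ bound on the resulting multiflow $V$. Because each thread came from either $\vv^+$ or $\vv^-$, at any point the support of $V$ is covered by (at most) these two parallel families, and any linear combination $\sum\xi_{ij}\vv_{ij}$ with $|\xi_{ij}|\le1$ can be reorganized as $\xi^+\vv^+ + \xi^-\vv^-$ with $|\xi^\pm|\le1$; since $\vv^+,\vv^-$ are flows and the coefficients lie in $[-1,1]$, the combination has norm $\le 1$. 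Hence $\nu_a(V)\le1$ by \eqref{nu1def3}.

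Finally I would verify that $V$ locks both $\{I,J\}$. The flux $\int_{A_I}\vv_I$ of the reconstituted combination $\vv_I=\sum_{i<j}\xi^I_{ij}\vv_{ij}$ must be shown to recover $S(A_I)$, and likewise for $J$. The idea is that the sign pattern $\xi^I$ picks out exactly the combination $\vv^+ + \vv^- = \vv_1$ restricted to threads crossing $A_I$ (and similarly $\xi^J$ reproduces $\vv_2$), so that $\int_{A_I}\vv_I=\int_{A_I}\vv_1=S(A_I)$ and $\int_{A_J}\vv_J=\int_{A_J}\vv_2=S(A_J)$. Making this bookkeeping precise is where the crossing hypothesis enters: since $I$ and $J$ cross, the four index blocks $I\cap J$, $I\cap\bar J$, $\bar I\cap J$, $\bar I\cap\bar J$ are all nonempty, and one checks that the signs $\xi^I_{ij}$ and $\xi^J_{ij}$ align so that the $\vv^+$ contribution (threads running between endpoints on the same side of both cuts) and the $\vv^-$ contribution (threads crossing the cuts oppositely) add constructively for the flux through $A_I$, and analogously for $A_J$.

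The main obstacle I anticipate is the flux-bookkeeping step: one must argue that after sorting threads of $\vv^\pm$ into components $\vv_{ij}$ and then re-forming $\vv_I$, the flux through $A_I$ is genuinely $S(A_I)$ rather than something smaller. The subtlety is that reversing thread orientations during sorting (the convention $i<j$) and reassigning closed or same-region threads can in principle alter the extracted flux; I would handle this exactly as in subsection \ref{subsection:theUntanglingProcess}, tracking the sign factors $\xi^I_{ij}$ block by block over the four crossing regions and using the orientation-fixing remark that lets \eqref{mfmc2} serve as a sufficient locking condition. Once the signs are shown to be consistent on all four blocks, the result follows.
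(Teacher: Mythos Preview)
Your core idea matches the paper's exactly: start from max flows $\vv_1,\vv_2$ locking $A_I,A_J$, form $\vv^\pm=\tfrac12(\vv_1\pm\vv_2)$, decompose each into components by thread endpoints, reassemble into a multiflow, and verify $\nu_a\le1$ and locking. So the strategy is right. But two steps in your sketch are genuine gaps, not routine bookkeeping.

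First, the reassembly is not simply ``sort threads of $\vv^\pm$ by endpoints and throw them into the corresponding $\vv_{ij}$.'' The paper first decomposes $\vv^+$ and $\vv^-$ \emph{separately} into $\nu_c$-multiflows $V^\pm$ with components $\vv^\pm_{ij}$, and then combines them with \emph{specific and non-obvious signs} (their eq.\ (6.9)), e.g.\ $\vv_{AB}=-\vv^+_{AB}+\vv^-_{AB}$ but $\vv_{AD}=+\vv^+_{AD}+\vv^-_{AD}$. These signs are chosen precisely so that when you later form $\vv'_1=\vv_I$ and $\vv'_2=\vv_J$, the $\vv^+$ and $\vv^-$ pieces recombine into $\vv_1$ and $\vv_2$ respectively. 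Your statement that ``the sign pattern $\xi^I$ picks out exactly $\vv^++\vv^-=\vv_1$'' is the desired \emph{conclusion}, not something that holds automatically; it depends on having made exactly the right sign choices in the assembly. Without specifying them, the locking claim does not follow.

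Second, even with the correct signs, the flux computation does not close without one more ingredient. The paper proves a short geometric lemma (Lemma~6.2): $\vv^+_{AC}=0$ and $\vv^-_{BD}=0$, using that $\vv_1$ saturates $m(A_I)$ so $\hat n\cdot\vv^+\ge0$ there, forbidding $\vv^+$ flux from $C$ back into $r(AB)$ (and symmetrically). This is what makes $\int_{A_I}\vv'_1$ come out exactly $S(A_I)$. (Alternatively, as the paper notes in a footnote, one can skip the lemma and instead show $\int_{A_I}\vv'_1+\int_{A_J}\vv'_2=S(A_I)+S(A_J)$, then use the individual upper bounds to force both to saturate.) Your proposal contains neither of these arguments; ``tracking signs block by block'' will not by itself produce the vanishing of the cross components.

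Your $\nu_a$-bound argument is correct in spirit but imprecisely stated: the ``reorganization'' $\sum\xi_{ij}\vv_{ij}=\xi^+\vv^++\xi^-\vv^-$ holds only \emph{pointwise}, with $\xi^\pm$ depending on which (unique) components of $V^+,V^-$ are nonzero at that point; this works precisely because each $V^\pm$ is a $\nu_c$-multiflow, so at most one $\vv^+_{kl}$ and one $\vv^-_{mn}$ are nonzero at any point. The paper makes this explicit.
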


\begin{proof}
We define the following new set of elementary regions:
\begin{equation}
A':=A_{I\cap\bar J}\,,\quad
B':=A_{I\cap J}\,,\quad
C':=A_{\bar I\cap J}\,,\quad
D':=A_{\bar I\cap\bar J}\,;
\end{equation}
these are non-empty by the fact that $I,J$ cross. Below we will prove the existence of a $\nu_a$-multiflow $V$ for $A',B',C',D'$ that locks $A'B'$ and $B'C'$. The component flows of $V$ can be separated into component flows of the original elemntary regions as described in subsection \ref{sec:multiflow relations} without changing the total flux out of $A_I=A'B'$ and $A_J=B'C'$, and therefore lock $A_I$ and $A_J$. In the rest of the proof we drop the primes on $A,B,C,D$.

The first step of generating a $\nu_a$-multiflow that locks $AB$ and $BC$ is to decompose the multiflow problem into two separate flow problems, each of which can be solved with a single flow. By the max flow-min cut theorem (Theorem \ref{mfmctheorem}), we know that there exist flows $\vv_1$, $\vv_2$ locking $AB$, $BC$ respectively:
\begin{equation}
\int_{AB}\vec{v}_1=S(AB)\,,\qquad
\int_{BC}\vec{v}_2=S(BC)\,.
\end{equation}
From $\vv_{1,2}$, we now define the following pair of flows:
\begin{equation}\label{eq:vpm}
\vec{v}^\pm:=\frac{1}{2}(\vec{v}_1\pm\vec{v}_2)\,.
\end{equation}
Note that, if the two vector fields $\vv^\pm$ were treated as components of a multiflow, then they would obey the $\nu_a$ norm bound, since both their sum and their difference are flows.\footnote{On the other hand, they would not necessarily obey the $\nu_v$ bound, for example if at some point $\vv_{1,2}$ are orthogonal unit vectors. This is thus the step where the $\nu_a$ bound demonstrates its advantage over the $\nu_v$ bound.} Our strategy will be to decompose $\vv^\pm$ into multiflows $V^\pm$, and then combine them (component-wise) into a single multiflow $V$, which we will show obeys the $\nu_a$ bound and locks $AB$ and $BC$.

We turn each flow $\vv^\pm$ into a multiflow $V^\pm$ by separating it into components $\vv_{ij}^\pm$ according to the endpoints of the flow lines (or threads), as described in subsection \ref{sec:multiflow relations}. In that procedure the orientations of the resulting component flows were not specified; here we choose the orientations so that
\begin{equation}
\vv^\pm=\sum_{i<j}\vv_{ij}^\pm\,.
\end{equation}
(Note that with this choice of orientation the flux of $\vv^\pm_{ij}$ is not necessarily positive from $A_i$ and negative from $A_j$.) We also choose not to assign any closed threads in the $\vv^+$ configuration to the $\vv^+_{AC}$ component, or closed threads in $\vv^-$ to the $\vv^-_{BD}$ component; this will be useful in Lemma \ref{vAC0} below. Since they are derived from a single flow, the components $\vv^+_{ij}$ have non-overlapping supports, and similarly for $\vv_{ij}^-$. Thus, each multiflow $V^\pm$ is a $\nu_c$-multiflow.

The multiflows $V^\pm$ are now combined into a single multiflow $V$, taking care to respect the signs of each of the component flows, inherited from the boundary regions they were designated to lock. Specifically, $V$ is defined to have the following component flows:
\begin{equation}\label{eq:newmf2}
\begin{split}
&\vec{v}_{AB}=-\vec{v}^+_{AB}+\vec{v}^-_{AB}\\
&\vec{v}_{AC}=\vec{v}^-_{AC}\\
&\vec{v}_{AD}=\vec{v}^+_{AD}+\vec{v}^-_{AD}\\
&\vec{v}_{BC}=\vec{v}^+_{BC}+\vec{v}^-_{BC}\\
&\vec{v}_{BD}=\vec{v}^+_{BD}\\
&\vec{v}_{CD}=\vec{v}^+_{CD}-\vec{v}^-_{CD}\,.
\end{split}
\end{equation}

Note that two component flows, $\vec{v}^+_{AC}$ and $\vec{v}^-_{BD}$ never show up in this assignment. As it turns out, these two flows are both zero flows by construction.

\begin{lemma}\label{vAC0}
$\vec{v}^+_{AC}=\vec{v}^-_{BD}=0$
\end{lemma}

\begin{proof}
We begin with $\vec{v}^+_{AC}$. Since $\vv_1$ locks $AB$, on the minimal surface $m(AB)$ it equals $\hat n$, the unit normal pointing away from $r(AB)$. Together with the fact that $|\vv_2|\le1$, this implies that $\hat{n}\cdot\vec{v}^+=\frac{1}{2}\hat{n}\cdot\left(\vec{v}_1+\vec{v}_2\right)=\frac{1}{2}\left(1+\hat{n}\cdot\vec{v}_2\right)\geq 0$. In words, at any point on $m(AB)$, $\vec{v}^+$ flow, and therefore all of its components $\vv^+_{ij}$, point out of $r(AB)$ or are zero; no flux can pass into $r(AB)$. An identical argument follows for $m(BC)$, with the roles of $\vec{v}_1$ and $\vec{v}_2$ swapped, proving that no flux can pass into $r(BC)$. However, any flow connecting $A$ and $C$ must either go from $A$ to $C$, passing through $m(BC)$ into $r(BC)$, or from $C$ to $A$, passing through $m(AB)$ into $r(AB)$. Accordingly, the flow $\vec{v}^+_{AC}$ must have no flux, and be $0$ everywhere.

A similar argument can be used to show that the $\vec{v}^-_{BD}$ flow can neither arrive at nor depart from the boundary region $B$, and so it too must vanish.
\end{proof}

We check again that the $\nu_a$ density bound remains obeyed through this separation and recombination. Since $V^\pm$ are both $\nu_c$-multiflows, at any point $x\in\mathcal{M}$ there exists at most one non-zero $\vec{v}^+_{ij}$ and one non-zero $\vec{v}^-_{ij}$. According to \eqref{eq:newmf2}, each $\vec{v}^{\pm}_{ij}$ contributes to at most one $\vec{v}_{ij}$, so an arbitrary sum of distinct $\vec{v}_{ij}$ components also boils down to a sum of at most one $\vec{v}^+_{ij}$ and one $\vec{v}^-_{ij}$ at any $x\in\mathcal{M}$. If a  $\vec{v}^+$ flow and a $\vec{v}^-$ are not both present, the $\nu_a$ bound is trivially satisfied. Suppose $\vec{v}^+_{kl}$ and $\vec{v}^-_{mn}$ are the two non-zero components at $x$. Assuming $\{k,l\}\neq\{m,n\}$,
\begin{equation}
\nu_a(V)=\max_{|\xi_{ij}|\le1}|\xi_{kl}\vec{v}^+_{kl}+\xi_{mn}\vec{v}^-_{mn}|=\max_{+,-}|\vec{v}^+_{kl}\pm\vec{v}^-_{mn}|=\max_{+,-}|\vec{v}^+\pm\vec{v}^-|=\max\{|\vec{v}_1|,|\vec{v}_2|\}\leq 1\,.
\end{equation}
If $\{k,l\}=\{m,n\}$, this simply fixes the sign in the above equation according to the definition of $\vec{v}_{ij}$ given in \eqref{eq:newmf2}, so the bound still holds.

Finally, we verify that the multiflow $V$ does in fact lock  $AB$ and $BC$. Consider the flows $\vec{v}^{\prime}_1$ and $\vec{v}^{\prime}_2$, constructed from the components of $V$ as defined in \eqref{v from vij} to lock $AB$ and $BC$, respectively:
\begin{equation}\label{vprimedef}
\begin{split}
\vec{v}^{\prime}_1&=\vec{v}_{AC}+\vec{v}_{AD}+\vec{v}_{BC}+\vec{v}_{BD}\\
\vec{v}^{\prime}_2&=-\vec{v}_{AB}-\vec{v}_{AC}+\vec{v}_{BD}+\vec{v}_{CD}\,.
\end{split}
\end{equation}
The total flux of $\vv_1'$ is readily calculated:
\begin{align}\label{ABlock}
\int_{AB}\vec{v}^{\prime}_{1}&=\int_{AB}\left(\vec{v}^-_{AC}+\vec{v}^+_{AD}+\vec{v}^-_{AD}+\vec{v}^+_{BC}+\vec{v}^-_{BC}+\vec{v}^+_{BD}\right)
\nonumber\\
&=\int_{AB}\left(\vec{v}^++\vec{v}^-\right) \nonumber\\
&=\int_{AB}\vec{v}_1 \nonumber\\
&=S(AB)\,,
\end{align}
where in the second equality we used Lemma \ref{vAC0}.\footnote{The theorem can also be proven without using Lemma \ref{vAC0}. A similar calculation to \eqref{ABlock}, but without assuming that $\vv_{AC}^+$ and $\vv_{BD}^-$ vanish, shows that
\begin{equation}\label{altproof}
\int_{AB}\vv_1'+\int_{BC}\vv_2' = S(AB)+S(BC)\,.
\end{equation}
Combined with the inequalities $\int_{AB}\vv_1'\le S(AB)$, $\int_{BC}\vv_2' \le S(BC)$, \eqref{altproof} shows that both inequalities are saturated, hence $AB$ and $BC$ are locked.} Thus $\vv_1'$ locks $AB$. A similar calculation shows that $\vv_2'$ locks $BC$.
\end{proof}

It is possible to expand on Theorem \ref{continuumcrossing} to allow for more than two composite regions under certain conditions. Given sets $\I_1,\I_2\subset2^{[n]}$, each of which  can be locked by a single flow (not multiflow) $\vv_1$, $\vv_2$ respectively, there exists a $\nu_a$-multiflow that locks $\I_1\cup\I_2$. A set of regions can be locked by a single flow if they form a nested sequence, $I_1\subset I_2\subset\ldots$ \cite{Headrick:2017ucz}. 
For example, with $n=4$, we could choose $\I_1=\{A,AB,ABD\}$ and $\I_2=\{B,BC,ABC\}$, allowing us to lock all elementary regions in addition to $AB$ and $BC$. (Note that nesting is a more restrictive requirement than no-crossing. For example, $A$, $B$, and $AB$ do not cross, but they are not nested, and they may not be lockable by a single flow.) This generalization is an easy extrapolation from Theorem \ref{continuumcrossing}, and its proof is left as an exercise to the reader.

\subsection{Conjectures and open questions}
\label{sec:open questions}

The theorem and extension proved above using the $\nu_a$ density bound use two ``layers'' of flows to lock two sets of regions. With two layers, it is always possible to saturate any two directions using the exact sum and difference technique used above. In fact, this technique may be more powerful than Theorem \ref{continuumcrossing} lets on.

One stronger conjecture, which we call the ``bipartite locking conjecture,'' postulates that the union of two \emph{cross-free} sets $\mathcal{I}_1$, $\mathcal{I}_2$ can be locked with a $\nu_a$-multiflow. This is stronger than Theorem \ref{continuumcrossing} in that each set need not be lockable by a single flow. However, since it is cross-free, each set can be locked with a $\nu_c$-multiflow, which consists of a single layer. Overlaying the two $\nu_c$-multiflows and performing the same sum and difference procedure produces a set of curves that is locally a multiflow and saturates all minimal surfaces. The challenge then comes with choosing the right signs. A $\nu_c$-multiflow may have sharp discontinuities and abruptly switch directions on the boundaries of component flow supports. These discontinuities cause the $\vec{v}_+$ and $\vec{v}_-$ layers to swap places at these boundaries, resulting in some strange thread behavior. A single curve in one layer may, at various places in the manifold, contribute to several different component flows, yet never reach the boundary. Tracing out the threads through these tangling and interchanging layers has proven to be a substantial task. In particular no portion of thread gets reused when converting the two-layer tangle into a proper multiflow. Even if proven, however, the bipartite locking conjecture is insufficient to prove any entropy inequalities beyond MMI. The next simplest inequality, the dihedral inequality \eqref{dihedral}, has the set of regions
\begin{equation}\label{dihedralI}
\I=\{AB,BC,CD,DE,EA,ABCDE\}
\end{equation}
on its right-hand side. The set $\I$ is not bipartite, i.e.\ it cannot be divided into two cross-free subsets. Therefore, to prove the dihedral inequality with a $\nu_a$-multiflow, we would need something stronger.

A natural conjecture, modelled on the standard network locking theorem \cite{LockingTheorem} (given as theorem \ref{NetworkLocking} in the appendix), is that any set of regions that does not contain a pairwise crossing triple can be locked by a $\nu_a$-multiflow. This condition does hold for \eqref{dihedralI}, so this conjecture does imply the dihedral inequality. While this conjecture is very reasonable, it is worth noting that the proof of the analogous network theorem is quite involved, requiring fairly sophisticated combinatorial and network-manipulation techniques. We would not expect the manifold proof to be any simpler. Therefore, even if the conjecture is correct, its proof will almost certainly require techniques that go well beyond those that we have developed in this paper. For inequalities beyond the dihedral one, an even more relaxed condition than the $\nu_a$ bound---perhaps involving layers of threads subject to a joint density bound involving only certain combinations of layers---would seem to be required.

\begin{figure}
\centering
\includegraphics[width=3.5in]{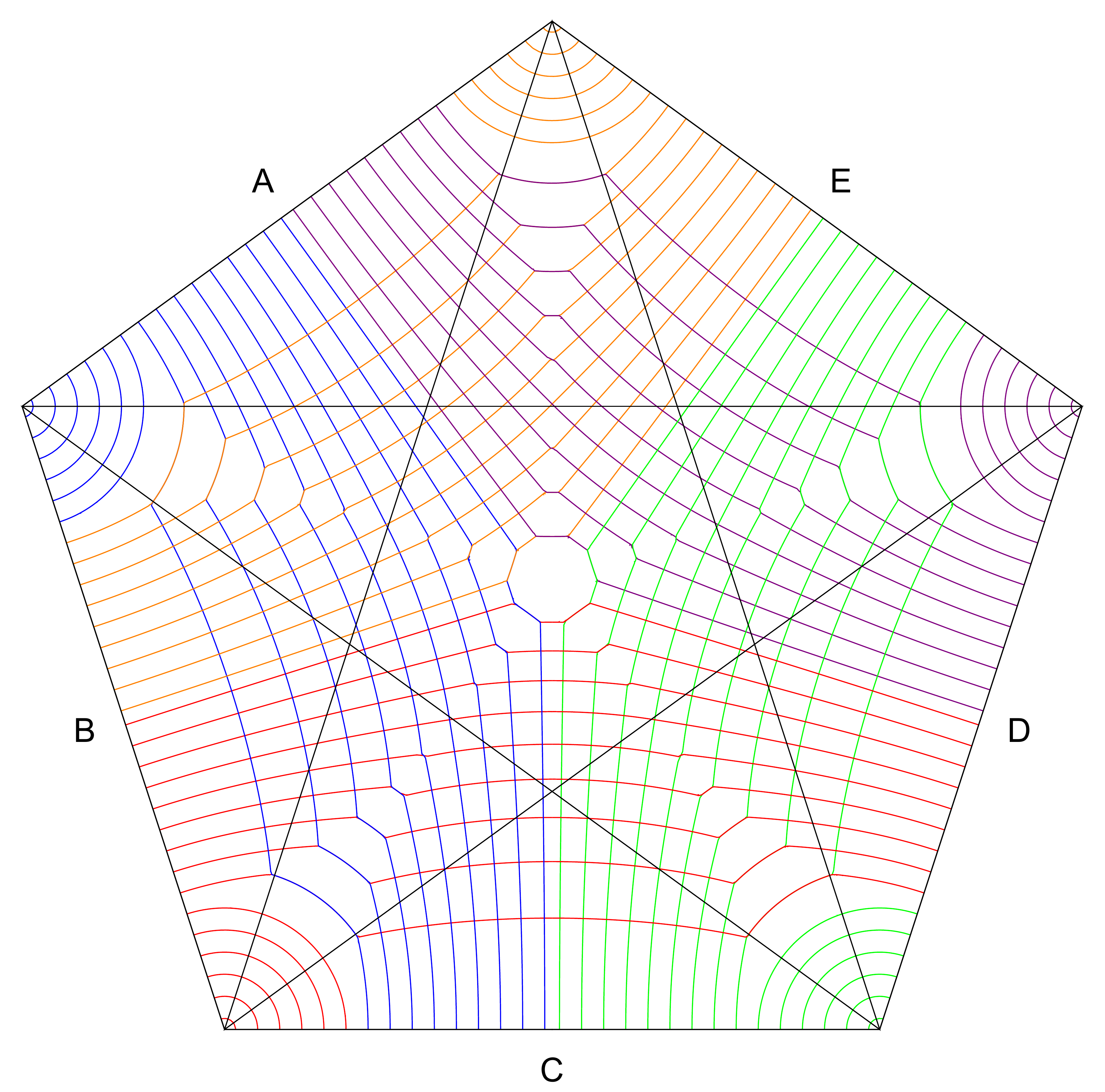}
\caption{
Illustration of $\nu_a$-multiflow on a flat regular pentagon with sides $A,B,C,D,E$ that locks $\mathcal{I}=\{AB,BC,CD,DE,EA\}$. The minimal surfaces are shown as grey chords. This is a special case of the regions appearing on the right-hand side of the dihedral entropy inequality \eqref{dihedral}, with symmetry in the five primitive regions, and empty purifying region. The solution has five distinct layers, each shown in a different color. At any given place in the pentagon, at most two layers overlap at a time ($\zeta\leq 2$).}
\label{fig:Pentagon}
\end{figure}

Non-bipartite sets of regions cannot be locked with two flow layers. However, with more than two layers, restricted to overlapping only two at a time, we can keep all the desirable properties of two local flow layers ($\zeta = 2$ in the language of subsection \ref{subsection:retanglingConvergence}) while locking more complex sets of regions. Figure \ref{fig:Pentagon} was constructed by carefully taking advantage of the symmetries to solve one fifth of the pentagon alone. The solution was then rotated into an array to cover the entire manifold. For problems with less symmetry, constructing such a solution becomes far more challenging, and may not even be possible. However, the success of this simple case offers optimism about the scope of what a $\nu_a$-multiflow restricted to $\zeta\leq 2$ can accomplish.

\begin{table}
\centering
{%
\renewcommand{\arraystretch}{1.2}%
\newcommand{\rc}[2]{\makebox[\linewidth][c]{#1}\makebox[3pt]{}\llap{\textsuperscript{#2}}}%
\newcommand{\gC}{\inlinegraphics{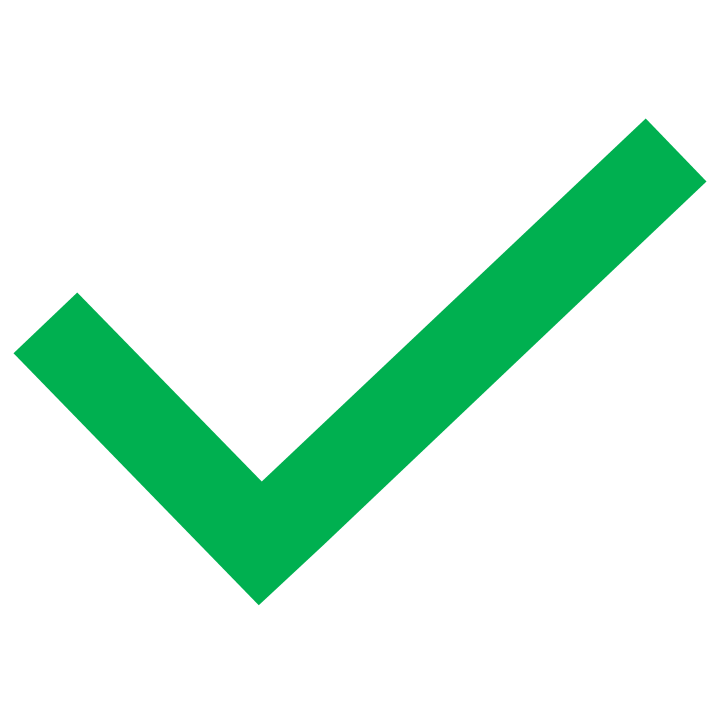}}%
\newcommand{\rX}{\inlinegraphics{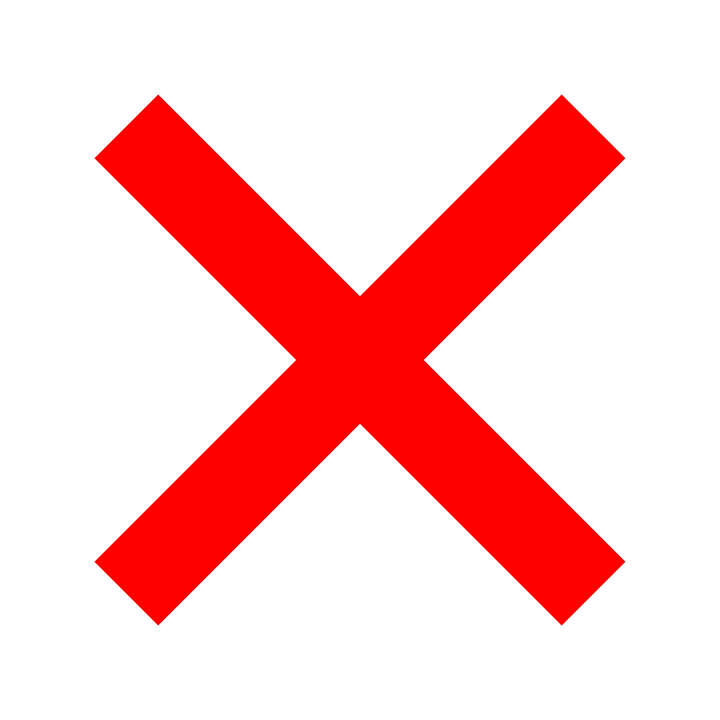}}%
\newcommand{\yQ}{\inlinegraphics{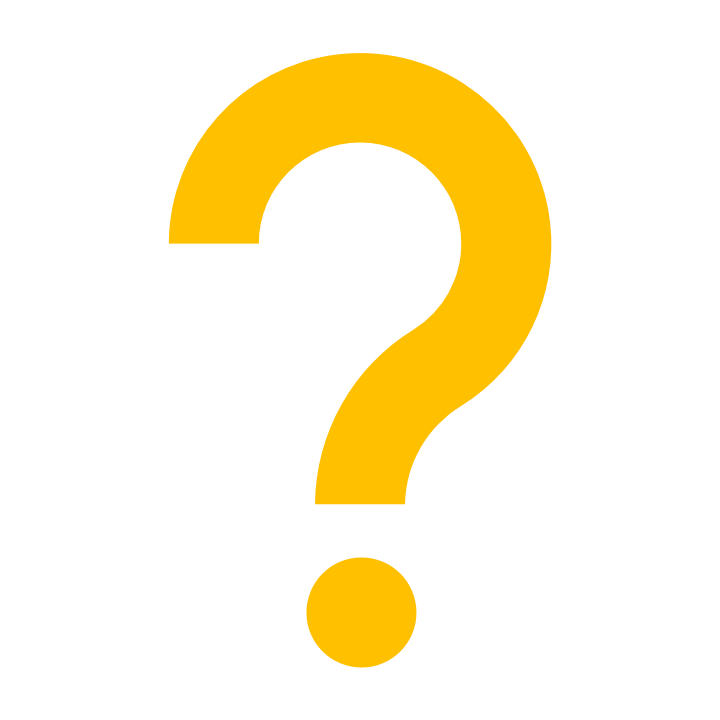}}%
\begin{tabularx}{\textwidth}{V{2.5}XV{2.5}C{0.08\textwidth}|C{0.08\textwidth}|C{0.08\textwidth}|C{0.08\textwidth}|C{0.105\textwidth}V{2.5}}
\hlineB{2.5}
\multirow{2}{=}{\centering \textbf{Set of Regions to Lock}} 
	& \multirow{2}{=}{\centering \textbf{Flow}}
	& \multicolumn{3}{c|}{\footnotesize{\textbf{Multiflow}}}
	& \multirow{2}{=}{\centering \footnotesize{\textbf{Network Multiflow}}}\\
	\cline{3-5}
	\renewcommand{\arraystretch}{1.2}
	&& $\boldmath{\nu_c}$	& $\boldmath{\nu_v}$	& $\boldmath{\nu_a}$	&\\
	\hlineB{2.5}
Nested								& \rc{\gC}{\cite{Headrick:2017ucz}} & \gC & \gC & \gC & \gC \\ \hline
Disjoint							& \rX & \gC & \rc{\gC}{\cite{Cui:2018dyq}} & \gC & \gC \\ \hline
Non-crossing						& \rX & \rc{\gC}{\ref{section:NOLT}} & \rc{\gC}{\ref{section:WCLT}} & \gC & \gC \\ \hline
2 crossing regions 					& \rX & \rX & \rc{\rX}{\ref{section:crossing}} & \rc{\gC}{\ref{section:altBound}} & \gC \\ \hline
2 $\times$ nested (nested bipartite)& \rX & \rX & \rX & \rc{\gC}{\ref{section:altBound}} & \gC \\ \hline
2 $\times$ non-crossing (bipartite)	& \rX & \rX & \rX & \yQ & \gC \\ \hline
3 cross-free						& \rX & \rX & \rX & \yQ & \rc{\gC}{\cite{LockingTheorem}} \\ \hline
3 crossing							& \rX & \rX & \rX & \rX & \rc{\rX}{\cite{LockingTheorem}} \\ \hlineB{2.5}
\end{tabularx}
\begin{tabularx}{\textwidth}{X}
\makebox[\hsize][s]{\mbox{LEGEND:} \mbox{\gC\quad proven} \mbox{\rX\quad disproven} \mbox{\yQ\quad conjectured}}
\end{tabularx}}
\caption{Summary of current knowledge about locking of different sets of regions by different types of multiflows. A green check means that a region set of the indicated type can always be locked by the indicated type of flow or multiflow, while a red ex means that it cannot always be locked (not that it can never be locked). Columns are labeled by flow type, and placed in order of increasing generality. Therefore, if a box is proven (green check), all boxes to the right are also proven. Contrapositively, if a box is disproven (red ex), then all boxes to the left are also disproven. Boxes with references in their top right corners link to where they were first proven (paper citations with square brackets, sections of this paper without). Green checks and red exes without references are implied by the proven boxes.}
\label{summaryTable}
\end{table}

In summary, this paper has introduced two new multiflow constructions (the $\nu_c$- and $\nu_a$-multiflows), and proven several new locking theorems concerning all types of continuum locking theorems. However, there is likely much left to prove. Table \ref{summaryTable} presents a summary of which locking theorems are known, and which are conjectured. It should be noted that any composite boundary regions that can be locked in a continuum multiflow can also be locked on a network, and any composite boundary regions sets that cannot be locked on the network cannot be locked on the continuum. This relationship can be proven by the processes converting a continuum multiflow to a network multiflow, and visa versa. A detailed description of these processes can be found in the appendix.

\acknowledgments{
The work of Headrick and Herman is supported in part by the Simons Foundation through \emph{It from Qubit: Simons Collaboration on Quantum Fields, Gravity, and Information} and in part by the Department of Energy, Office of High-Energy Physics, through Award DE-SC0009987. We would like to thank Veronika Hubeny for very useful conversations and comments on the draft, and John Wilmes for help with the numerical optimization. Headrick would also like to thank the Kavli Institute for Theoretical Physics for hospitality while this work was completed; KITP is supported in part by the National Science Foundation under Grant No.\ NSF PHY-1748958.
}

\appendix

\section{Multiflows on networks}
\label{sec:networks}

Much of the work in this paper is inspired by the well-established theory of multiflows on networks and the question of to what extent that theory carries over to multiflows on manifolds. In this appendix we discuss some aspects of multiflows on networks and their relation to multiflows on manifolds. We begin in subsection \ref{sec:basic} by reviewing basic definitions of networks and path sets, as well as the standard network locking theorem, Theorem \ref{NetworkLocking}, which says that a three-cross-free terminal subset can be locked. In subsection \ref{sec:networkfailure}, we explain the importance of the three-cross-free condition by proving what is essentially a converse result. Theorem \ref{NetworkLocking} involves integral networks and integral paths sets, and its proof is correspondingly discrete and combinatorial in nature. In subsection \ref{sec:networkflows}, we give an alternative description of multiflows in terms of vector fields, analogous to multiflows on manifolds, and in subsection \ref{sec:networkcrossing} we use that desciption to prove Theorem \ref{networkcrossing}, which is a network analogue of Theorem \ref{continuumcrossing}. While this is essentially a special case of Theorem \ref{NetworkLocking}, as far as we know the method of proof is new. Finally, in subsection \ref{sec:networkmanifold}, we explain how to map a manifold to a network and vice versa, and correspondingly map multiflows from one to the other.

More detailed information on multiflows on networks can be found in \cite{MR2676135} and references therein.

\subsection{Basic definitions and theorems}
\label{sec:basic}

A network $\N$ consists of a graph $(V,E)$ with vertex set $V$ and edge set $E$, a subset $T\subseteq V$ of its vertices called \emph{terminals}, and perhaps some extra data such as a positive real- or integer-valued capacity (or weight) $c(e)$ associated to each edge $e\in E$. Various types of graphs are considered in the literature: directed or undirected graphs; simple graphs (with at most one edge connecting any pair of distinct vertices and no edge connecting a vertex to itself) or multigraphs (with any number of edges connecting a given pair of distinct vertices or a vertex to itself); finite or infinite graphs; etc. To focus the discussion, we will consider only finite, undirected, simple graphs, with real or integral capacities. A network will be assumed to have real capacities unless specified as an ``integral network''.\footnote{An integral network $\N$ is essentially equivalent to a network $\N'$ based on a multigraph with unit capacities, where for every edge $e$ with capacity $c(e)$ in $\N$ there are $c(e)$ edges connecting the same pair of vertices in $\N'$. An integral path set on $\N$ is then equivalent to a pairwise edge-disjoint path set on $\N'$. (See below for the definition of a path set.) This equivalent language is commonly used in the network flow literature.}

We will use $x,y$ to denote general vertices and $i,j$ to denote terminals, and take $T=[n]:=\{1,\ldots,n\}$. Given a subset $I\subseteq T$ of terminals, we write $\bar I:=T\setminus I$. An $I$-cut is a set of edges $C\subseteq E$ such that in the graph $(V,E\setminus C)$ there is no path connecting $I$ and $\bar I$. In the analogy between networks and manifolds, the terminal set is the analogue of the boundary of the manifold, each terminal is the analogue of an elementary region $A_i$, and an $I$-cut is the analogue of a bulk surface homologous to $A_I$. The min cut function $S(I)$ is the minimum total capacity of any $I$-cut:
\begin{equation}
S(I):=\min_{\text{$I$-cut $C$}}\sum_{e\in C}c(e)\,.
\end{equation}

The analogue of a ``thread configuration'' is a set $\PP$ of (unoriented simple) paths, each of which connects distinct terminals, together with a positive weight $w_P$ for each path $P\in\PP$, satisfying the capacity constraint:
\begin{equation}\label{capacityconstraint}
\sum_{P\in\mathcal{P},P\ni e}w_P\le c(e)\quad\text{for all }e\in E\,.
\end{equation}
Such a weighted path set is sometimes called a ``multiflow'' in the literature on network flows, but we will reserve the term for a set of vector fields, defined in subsection \ref{sec:networkflows} and analogous to a multiflow on a manifold used in the main text. We will use the term \emph{path set} to mean a weighted path set obeying \eqref{capacityconstraint}, and \emph{integral path set} for one with integer weights.

Given a path set $\PP$, we will denote the total weight of paths connecting disjoint terminal sets $I$, $J$ by $N_{I:J}$:
\begin{equation}
N_{I:J}:=\sum_{\PP\ni P:I-J}w_P\,.
\end{equation}
In particular, the number connecting $I$ and $\bar I$ is bounded above by $S(I)$:
\begin{equation}
N_{I:\bar I}\le S(I)\,.
\end{equation}
The network max flow-min cut theorem says that this inequality is tight:
\begin{equation}\label{networkMFMC}
\max_\PP N_{I:\bar I}= S(I)\,.
\end{equation}
We say that $\PP$ \emph{locks} $I$ if it achieves the maximum. Menger's theorem says further that, on an integral network, $I$ can be locked by an integral path set.

Given a family $\I\subseteq2^T$ of terminal sets, we say that $\PP$ \emph{locks} $\I$ if it locks $I$ for all $I\in\I$. The network locking theorem of Karzanov-Lamonosov gives sufficient conditions for $\I$ to be lockable by an integral path set.
\begin{theorem}[Network locking \cite{KarzanovLomonosov,LockingTheorem}]\label{NetworkLocking}
Let $\N$ be an inner-Eulerian network, and $\I$ a family of subsets of $T$ without a pairwise-crossing triple. Then there exists an integral path set that locks $\I$.
\end{theorem}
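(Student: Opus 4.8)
The plan is to establish the sufficiency direction asserted here (the necessity of the no-triple condition being the content of subsection \ref{sec:networkfailure}), and I would organize it into an \emph{existence} step and an \emph{integrality} step. For integrality I would lean entirely on the inner-Eulerian hypothesis (every non-terminal vertex having even degree): it is standard that on such a network any fractional locking multiflow can be taken half-integral, and that the Eulerian parity upgrades half-integrality to a genuine integral path set via a $T$-join / edge-splitting argument (this is the same mechanism behind Menger's theorem quoted above \eqref{networkMFMC}). I would therefore reduce the theorem to the purely fractional claim that a locking weighted path set exists and treat integral rounding as a separate, essentially known, step.

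For existence, the organizing principle is that the only obstruction to locking several cuts simultaneously comes from \emph{crossing} pairs; nested or disjoint cuts never conflict. I would first dispose of the laminar (cross-free) sub-case, where $\I$ carries a tree structure and can be locked inductively from the outermost set inward, applying single-commodity max-flow--min-cut \eqref{networkMFMC} within each region cut out by the previous min-cut. This is precisely the network image of the bulk-cell decomposition used in the second proof of Theorem \ref{WCLT}. A single crossing pair is then handled by the two-commodity locking argument of Theorem \ref{networkcrossing} below (or classically by Hu's two-commodity theorem together with the cut condition). The content of the no-pairwise-crossing-triple hypothesis is that it controls how these two building blocks interact: the ``crossing graph'' on $\I$, whose edges join crossing pairs, is triangle-free, so the crossing structure is sparse enough to be resolved one pair at a time.

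The heart of the argument, and the step I expect to be the main obstacle, is an inductive \emph{uncrossing}. Given a crossing pair $I,J\in\I$, one wants to modify either the network---splitting off edge pairs at internal vertices, which the inner-Eulerian condition guarantees is possible without lowering any terminal connectivity, by the Lov\'asz/Mader splitting-off theorem---or the family itself, so as to strictly decrease a complexity measure (for instance the number of crossing pairs, or $\sum_{I\in\I}|I|$) while (i) preserving every min-cut value $S(I')$ for $I'\in\I$, and (ii) preserving the no-triple condition so the induction hypothesis still applies. The delicate point is that the sets produced by uncrossing $I$ and $J$ must not form a new pairwise-crossing triple with any surviving member $K\in\I$, and verifying this requires a careful case analysis of how $K$ can meet $I$ and $J$ simultaneously; this is exactly where the hypothesis that no three members cross pairwise is indispensable. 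Once the recursion terminates, I would reassemble the locked path sets from the pieces and check that their weights glue consistently across the min-cuts along which the network was decomposed, yielding a locking path set for the original $\I$. As the excerpt itself cautions, the combinatorial bookkeeping in the uncrossing and gluing is substantial, so I would not expect a short proof.
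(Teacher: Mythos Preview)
The paper does not actually prove Theorem \ref{NetworkLocking}. Immediately after stating it, the authors write: ``A refined proof of theorem \ref{NetworkLocking} is given in \cite{LockingTheorem}, and will not be repeated here.'' The theorem is quoted as an established result from the combinatorics literature (Karzanov--Lomonosov), so there is no proof in the paper to compare your proposal against.

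That said, your sketch is broadly in the spirit of how the classical proofs proceed: splitting-off at inner-Eulerian vertices (Lov\'asz/Mader) to reduce to a graph on terminals only, followed by a combinatorial analysis of the 3-cross-free family. A few cautions, though. First, your ``uncrossing'' step is underspecified: you do not say what you replace the crossing pair $I,J$ by, and the standard set-theoretic uncrossing (replacing by $I\cap J$ and $I\cup J$) does \emph{not} obviously preserve the min-cut values $S(I'),\,I'\in\I$, nor the 3-cross-free property in general---this is exactly the delicate combinatorics the cited papers work through, and it is not a routine check. Second, your integrality step conflates two different mechanisms: inner-Eulerian networks give half-integrality essentially for free (Corollary \ref{notinnerE} in the paper), but upgrading to full integrality is not a generic $T$-join rounding; it relies again on the specific structure of the locking family and is part of the Karzanov--Lomonosov argument proper, not a separate afterthought. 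Third, the laminar base case you describe is fine, and the two-commodity / Theorem \ref{networkcrossing} ingredient is relevant, but the global induction does not simply alternate between these two building blocks in the way you suggest---the actual proofs in \cite{KarzanovLomonosov,LockingTheorem} require a more careful simultaneous treatment. So your outline identifies the right toolkit but understates the depth of the combinatorial core.
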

\noindent An inner-Eulerian network is an integral network such that, for every interior (i.e.\ non-terminal) vertex $x\in V\setminus T$, the total capacity of the incident edges is even. To see why this condition is needed, consider the simplest non-trivial integral network that is not inner-Eulerian, namely the star graph with 3 external vertices $A,B,C$ and unit capacities. Locking the set $\I=\{A,B,C\}$ requires paths of weight $1/2$ connecting every pair of external vertices. A refined proof of theorem \ref{NetworkLocking} is given in \cite{LockingTheorem}, and will not be repeated here.

The following corollaries follow simply by multiplying all the capacities of $\N$ by 2 (for Corollary \ref{notinnerE}) or twice their lowest common denominator (for Corollary \ref{rationalcapacities}), which makes the network inner-Eulerian, and then dividing the weights in the resulting locking path set by the same factor.

\begin{corollary}\label{notinnerE}
Let $\N$ be an integral network, and $\I$ a family of subsets of $T$ without a pairwise-crossing triple. Then there exists a path set with half-integer weights that locks $\I$.
\end{corollary}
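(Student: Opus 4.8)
The plan is to reduce directly to Theorem \ref{NetworkLocking} by a rescaling of capacities, exactly as flagged in the sentence preceding the corollary. First I would build an auxiliary network $\N'$ with the same underlying graph, terminal set $T$, and family $\I$, but with every capacity doubled: $c'(e):=2c(e)$. Since the original capacities are integers, the doubled ones are as well, and for every interior vertex $x\in V\setminus T$ the total incident capacity becomes $\sum_{e\ni x}c'(e)=2\sum_{e\ni x}c(e)$, which is manifestly even. Hence $\N'$ is inner-Eulerian. Because $\I$ is assumed to contain no pairwise-crossing triple, Theorem \ref{NetworkLocking} applies to $\N'$ and yields an integral path set $\PP'$ that locks $\I$ on $\N'$.

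Next I would transport $\PP'$ back to $\N$ by halving its weights: let $\PP$ consist of the same paths, with weights $w_P:=w'_P/2$. These are half-integers, since the $w'_P$ are integers. I would then check that $\PP$ respects the capacity constraint on the original network, which is immediate: $\sum_{P\ni e}w_P=\tfrac12\sum_{P\ni e}w'_P\le\tfrac12 c'(e)=c(e)$. So $\PP$ is a legitimate half-integer path set on $\N$.

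The final step is to confirm that $\PP$ locks $\I$. The key observation is that the min-cut function scales linearly with capacity: the notion of an $I$-cut depends only on the graph, not on the weights, so the set of $I$-cuts is identical in $\N$ and $\N'$, and each cut has exactly double its total capacity in $\N'$. Thus $S'(I)=2S(I)$ for every $I$. Since $\PP'$ locks $\I$, we have $N'_{I:\bar I}=S'(I)=2S(I)$ for each $I\in\I$, and halving the weights gives $N_{I:\bar I}=\tfrac12 N'_{I:\bar I}=S(I)$. Therefore $\PP$ locks every $I\in\I$ on $\N$, completing the argument.

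I do not expect a genuine obstacle here; the content is entirely in Theorem \ref{NetworkLocking}, and this corollary is a one-line scaling device. The only points requiring even minimal care are the two linearity facts---that doubling integer capacities produces an inner-Eulerian network, and that both the min-cut values $S(I)$ and the realized path weights scale by the same factor---so that dividing the locking weights by $2$ preserves locking while turning integer weights into half-integer ones. The same argument with ``twice the lowest common denominator'' in place of ``$2$'' handles the rational-capacity generalization.
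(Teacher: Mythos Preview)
Your proposal is correct and follows essentially the same approach as the paper: double all capacities to obtain an inner-Eulerian integral network, apply Theorem \ref{NetworkLocking} to get an integral locking path set, and halve the weights to obtain a half-integer locking path set on the original network. The paper states this argument in one sentence without elaboration, but your verification of the three points (inner-Eulerian condition, capacity constraint under halving, and linear scaling of $S(I)$) fills in exactly the details one would want.
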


\noindent (The above star graph is an example of Corollary \ref{notinnerE}.)

\begin{corollary}\label{rationalcapacities}
Let $\N$ be a network with rational capacities, and $\I$ a family of subsets of $T$ without a pairwise-crossing triple. Then there exists a path set with rational weights that locks $\I$.
\end{corollary}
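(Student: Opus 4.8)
The plan is to reduce to Theorem \ref{NetworkLocking} by rescaling the capacities so that the network becomes inner-Eulerian, applying that theorem to the rescaled network, and then scaling the resulting locking path set back down. First I would write each rational capacity as $c(e) = p_e/q_e$ with $p_e,q_e$ positive integers, let $D$ be the least common multiple of the denominators $q_e$, and set $M := 2D$. I would then define the rescaled network $\N'$ on the same graph $(V,E)$ with the same terminal set $T$ but with capacities $c'(e) := M\,c(e)$.

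The first thing to check is that $\N'$ is inner-Eulerian. Since $D$ is a common multiple of all the $q_e$, each $D\,c(e) = D p_e/q_e$ is a positive integer, so $c'(e) = 2\,(D\,c(e))$ is a positive \emph{even} integer. Hence $\N'$ is integral, and at every interior vertex $x\in V\setminus T$ the total capacity of incident edges is a sum of even integers and is therefore even. Thus $\N'$ satisfies the inner-Eulerian hypothesis. Crucially, the family $\I$ is unchanged, and the crossing-triple-free condition depends only on $\I$ through the index-set conditions \eqref{crossingdef}, not on any capacities; so $\I$ remains without a pairwise-crossing triple on $\N'$, and Theorem \ref{NetworkLocking} applies to yield an integral path set $\PP'$ on $\N'$ that locks $\I$.

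Next I would define $\PP$ to be the same underlying set of paths equipped with the rescaled weights $w_P := w'_P/M$, which are rational. It then remains to verify that $\PP$ is a valid path set on $\N$ and locks $\I$. Both reduce to the observation that scaling all capacities by the common factor $M$ multiplies the capacity constraint \eqref{capacityconstraint}, the path-weight totals $N_{I:\bar I}$, and the min-cut function all linearly by $M$; in particular $S_{\N'}(I) = M\,S_{\N}(I)$ for every $I$, since each $I$-cut has its total capacity multiplied by $M$. The capacity constraint for $\PP$ on $\N$ follows from that for $\PP'$ on $\N'$ upon dividing by $M$. For the locking, since $\PP'$ locks $\I$ on $\N'$ we have $N^{\PP'}_{I:\bar I} = S_{\N'}(I)$ for each $I\in\I$; dividing by $M$ gives $N^{\PP}_{I:\bar I} = S_{\N}(I)$, which is exactly the statement that $\PP$ locks $\I$ on $\N$.

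There is essentially no hard step here; the argument is a routine homogeneity-under-rescaling reduction, and the crossing hypothesis is carried along untouched. The only point requiring a moment's care is that the single factor $M = 2D$ must discharge two conditions at once: the factor $D$ clears denominators to guarantee integrality, while the extra factor of $2$ supplies the evenness needed for the inner-Eulerian hypothesis. (Corollary \ref{notinnerE} is the special case where $\N$ is already integral, so $D=1$ and $M=2$.)
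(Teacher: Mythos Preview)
Your proof is correct and follows exactly the approach the paper uses: multiply all capacities by twice their lowest common denominator to obtain an inner-Eulerian network, apply Theorem \ref{NetworkLocking}, and divide the resulting integral weights by the same factor. Your write-up just spells out more carefully why the factor $2D$ simultaneously clears denominators and ensures evenness, and why the min-cut and capacity constraints scale homogeneously.
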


\noindent It seems reasonable to expect, by approximating the capacities more and more closely by rationals, that Corollary \ref{rationalcapacities} holds with ``rational'' replaced by ``real''; however, we have neither seen this statement in the literature nor proved it.

\subsection{Network locking failure}
\label{sec:networkfailure}

The three-cross-free condition in Theorem \ref{NetworkLocking} and its corollaries is in fact necessary if one wants lockability of $\I$ to be guaranteed without putting further restrictions on the network, as shown by the following theorem:\footnote{There are locking theorems with additional conditions on the network interior that are able to surpass the three-cross-free ``ceiling''. One example, mentioned in \cite{LockingTheorem}, states that if the network is planar with all its terminals on the exterior, and terminal sets in $\I$ consist only of contiguous spans of the network perimeter, then $\I$ can be locked.}

\begin{theorem}[Network locking failure]\label{lockingfailure}
Let $n$ be an integer $\ge4$, and $\I\supseteq\{I,J,K\}$ a family of subsets of $[n]$ such that $I$, $J$, $K$ are pairwise crossing. Then there exists an inner-Eulerian integral network $\N$ with terminal set $T=[n]$ on which $\I$ cannot be locked by any path set.
\end{theorem}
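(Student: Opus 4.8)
The plan is to reduce the statement to a single weak-duality estimate, exactly parallel to the thread weak-duality bound \eqref{threadWD} and to the explicit dual configuration used for the square in Section \ref{section:crossing}, and then to exhibit an inner-Eulerian network on which that estimate is violated. For terminals $i,j$ write $d(i,j)$ for the number of the three sets $I,J,K$ that \emph{separate} them (contain exactly one of $i,j$); this is the network analogue of the coefficient $N_{ij}$ appearing in \eqref{eq:pathBound}. Summing the three locking conditions, a path set $\PP$ can lock $\{I,J,K\}$ only if
\[
\sum_{I'\in\{I,J,K\}} N_{I':\bar I'} \;=\; \sum_{P\in\PP} w_P\, d(\text{endpoints of }P) \;=\; S(I)+S(J)+S(K)\,.
\]
Now suppose $\lambda:E\to\R$ with $\lambda\ge0$ assigns nonnegative lengths to the edges so that the induced shortest-path distance between any two terminals $i,j$ is at least $d(i,j)$. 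Summing $\sum_{e\in P}\lambda_e\ge d(i,j)$ over all paths and exchanging the order of summation via the capacity constraint \eqref{capacityconstraint} gives, for \emph{every} path set,
\[
\sum_{I'\in\{I,J,K\}} N_{I':\bar I'} \;\le\; \sum_{e\in E}\lambda_e\, c(e)\,.
\]
Hence it suffices to produce an inner-Eulerian integral network $\N$ on $T=[n]$ together with such a $\lambda$ obeying the \emph{strict} inequality $\sum_e\lambda_e c(e) < S(I)+S(J)+S(K)$: this rules out not merely multiflows but arbitrary path sets, closing exactly the analogue of the loophole discussed below \eqref{lambdaint}.

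The cleanest case is when the three cuts can be made balanced and no two terminals are totally separated (no pair with $d(i,j)=3$). There I take the \emph{star}: a single interior vertex $o$ joined to each terminal $t$ by an edge of integer capacity $c_t$, with $\sum_t c_t$ even so that $\N$ is inner-Eulerian, and with the $c_t$ chosen so that $c(I)=c(\bar I)$ and likewise for $J,K$. Then $S(I')=\tfrac12\sum_t c_t$ for each of the three sets, every pair of terminals is at graph-distance $2$, and $\lambda_e\equiv1$ is feasible (distance $2\ge d(i,j)$). Its objective $\sum_e\lambda_e c(e)=\sum_t c_t$ is strictly below $S(I)+S(J)+S(K)=\tfrac32\sum_t c_t$, so locking fails; for the essential four-terminal configuration $I=\{1,2\}$, $J=\{1,3\}$, $K=\{1,4\}$ with unit capacities this is just the statement that length-$2$ paths can carry total flux only $\tfrac12\sum_t c_t$.

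For a \emph{general} pairwise-crossing triple the star can genuinely lock: when the eight "octant" membership types all occur, antipodal terminals at distance $3$ can be joined by single paths crossing all three cuts, and $\lambda\equiv1$ is no longer feasible. To defeat such shortcuts I would instead use the network dual to three generic lines meeting in a central triangle — the direct analogue of the two crossing diagonals of the square in Section \ref{section:crossing}, with a third cut added. Each terminal is attached to the planar region matching its membership pattern in $(I,J,K)$, the edge capacities are fixed to make $\N$ inner-Eulerian, and the length function generalizes the configuration of Figure \ref{fig:6}: a constant bulk value together with weights concentrated on the three cut-lines that decay linearly away from the central triangle. The bounded central triangle is precisely what forces any thread crossing all three cuts to pay a detour, producing the strict deficit.

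The main obstacle is this general construction. A pairwise-crossing triple realizes an a priori arbitrary set of membership types (constrained only by each pairwise coordinate projection being surjective), and the naive star is foiled exactly when antipodal types coexist; so one must show the central-triangle network can be built for every such triple, verify that the claimed minimal cuts survive the capacity choices forced by the inner-Eulerian condition, and check the strict inequality for the accompanying $\lambda$. The weak-duality reduction above is what makes this tractable in principle, since it converts the assertion ``no path set locks $\{I,J,K\}$'' into the purely combinatorial/geometric task of exhibiting a single feasible length function with deficient objective, just as the explicit $\lambda$ of Section \ref{section:crossing} settled the square.
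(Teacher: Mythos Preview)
Your weak-duality reduction is correct and elegant: if $\lambda_e\ge0$ gives every terminal pair $i,j$ graph-distance at least $d(i,j)$, then any path set satisfies $\sum_{I'\in\{I,J,K\}}N_{I':\bar I'}\le\sum_e\lambda_e c(e)$, and a strict deficit rules out locking. The star construction with $\lambda\equiv1$ then settles every triple for which no two occupied octants are antipodal (i.e.\ $d(i,j)\le2$ for all terminal pairs).

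The gap is that the remaining case is not optional, and you have not actually handled it. The pairwise-crossing condition forces $J\cap K$ and $\bar J\cap\bar K$ to meet the occupied octants, and when (in the paper's notation) $D$ and $E$ are empty this \emph{requires} $A=I\cap J\cap K$ and $H=\bar I\cap\bar J\cap\bar K$ both to be non-empty. In that situation there is no way to choose a subset of terminals, pairwise at $d\le2$, on which $I,J,K$ still cross pairwise; some antipodal pair at $d=3$ is unavoidable. Your star with $\lambda\equiv1$ is then infeasible, and you yourself note that the star can genuinely be locked. Your proposed ``central triangle'' network is only described in words: you neither specify the graph, nor the capacities, nor the length function, nor verify the min cuts, nor check the strict inequality. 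Saying ``the main obstacle is this general construction'' is an accurate self-assessment---that construction \emph{is} the proof in this case, and it is missing.

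For comparison: the paper does not use LP duality at all here. It classifies which octants must be occupied, reduces (after complementation) to two minimal patterns, and for each simply exhibits a small explicit inner-Eulerian graph---four terminals in one pattern, six in the other---on which a short direct argument shows $\{I,J,K\}$ cannot be locked, then appends the remaining terminals as isolated vertices. Your duality framework is more systematic and ties in nicely with Section~\ref{section:crossing}, but to complete it you still need, for the antipodal case, an explicit network and an explicit feasible $\lambda$ with strictly deficient objective. Until that is written down, the proof is incomplete.
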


\noindent (Note that we do not require the path set to be integral, so the theorem is a bit stronger than the converse to Theorem \ref{NetworkLocking}.) Theorem \ref{lockingfailure} was proven in \cite{KarzanovPevzner}, which is in Russian. We are not aware of a proof in English, so we provide one here.

\begin{proof}
We will classify the different types of three-crossing, and for each type find a network on which $\{I,J,K\}$ cannot be locked.

\begin{figure}
\centering
\includegraphics[width=2in]{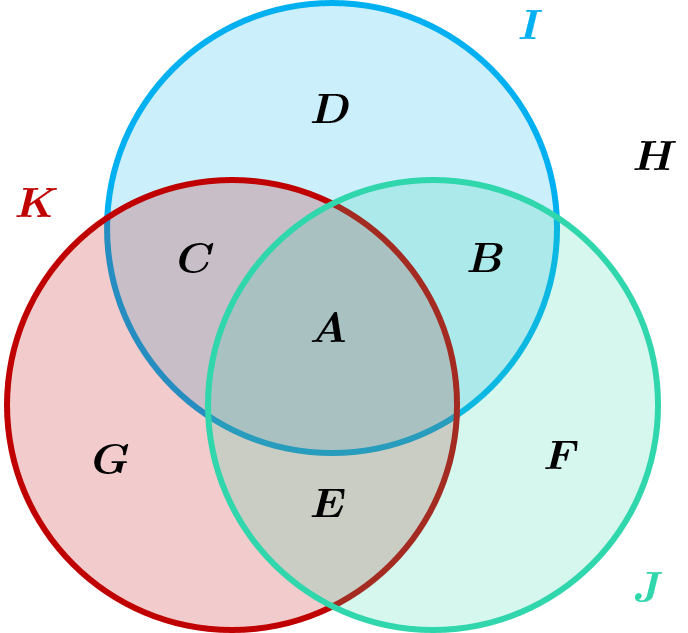}
\caption{Visual representation of the regions defined in \eqref{eq:IntersectionRegions}.}
\label{fig:VenDiagram}
\end{figure}

We decompose $[n]$ into the following eight subsets (see figure \ref{fig:VenDiagram}):
\begin{equation}
\begin{matrix}\label{eq:IntersectionRegions}
&A=I		\cap J			\cap K			&\qquad		E=\bar{I}	\cap J			\cap K			\\
&B=I		\cap J			\cap \bar{K}	&\qquad		F=\bar{I}	\cap J			\cap \bar{K}	\\
&C=I		\cap \bar{J}	\cap K			&\qquad		G=\bar{I}	\cap \bar{J}	\cap K			\\
&D=I		\cap \bar{J}	\cap \bar{K}	&\qquad		H=\bar{I}	\cap \bar{J}	\cap \bar{K}	\\
\end{matrix}
\end{equation}
$H$ is the subset of terminals not in $I$, $J$, or $K$, and will be referred to as the purification set. 
In order for $I$, $J$, $K$ to cross pairwise, we need all of the following sets to be non-empty:
\begin{equation}
\underbrace{AB,EF,CD,GH}_{I\text{ crosses }J},\underbrace{BF,CG,AE,DH}_{J\text{ crosses }K},\underbrace{BD,EG,AC,FH}_{I\text{ crosses }K}
\end{equation}
To avoid double-counting equivalent cases under complementation, we will assume that $H$ is non-empty. This automatically satisfies the requirement that $GH$, $DH$, and $FH$ are non-empty, and we are left with the other nine. We can visually represent these constraints with the following graph, in which the nodes are labeled $A,\ldots,G$ and two terminal sets $X,Y\in\{A,\ldots, G\}$ are adjacent if their union $XY$ must be non-empty:

\noindent%
\begin{minipage}[c]{\textwidth}
\centering
\includegraphics[width=1.5in]{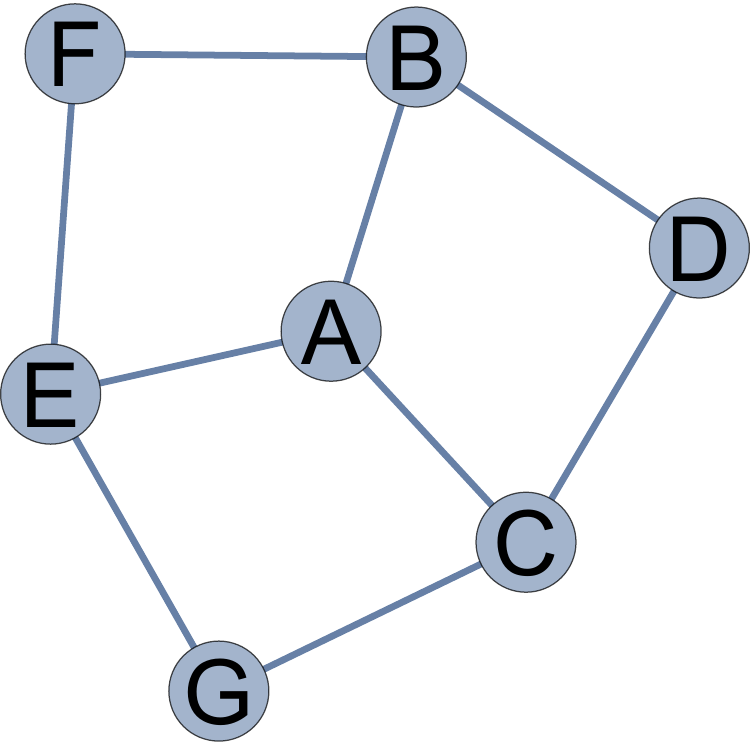}
\end{minipage}
For each edge in the above graph, at least one of the incident vertices must be a non-empty terminal set. There are three distinct minimal cases which satisfy this requirement. Vertices in red are non-empty in their respective cases. All instances of three-crossing are reducible to at least one of these cases:
\medskip

\noindent\begin{minipage}[c]{\textwidth}
\centering
\begin{minipage}[c]{0.25\textwidth}
\centering
\includegraphics[width=1in]{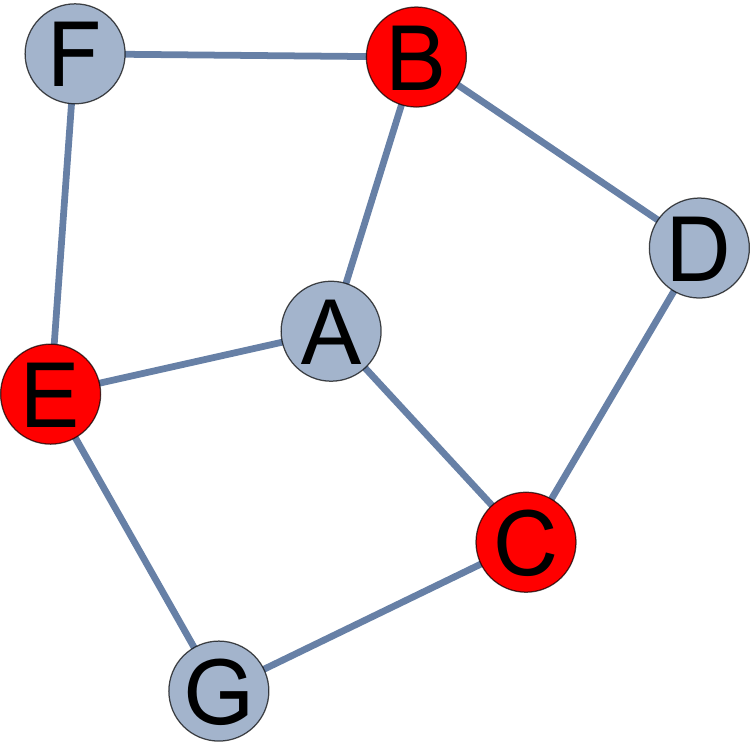}\\
Case 1
\end{minipage}%
\begin{minipage}[c]{0.25\textwidth}
\centering
\includegraphics[width=1in]{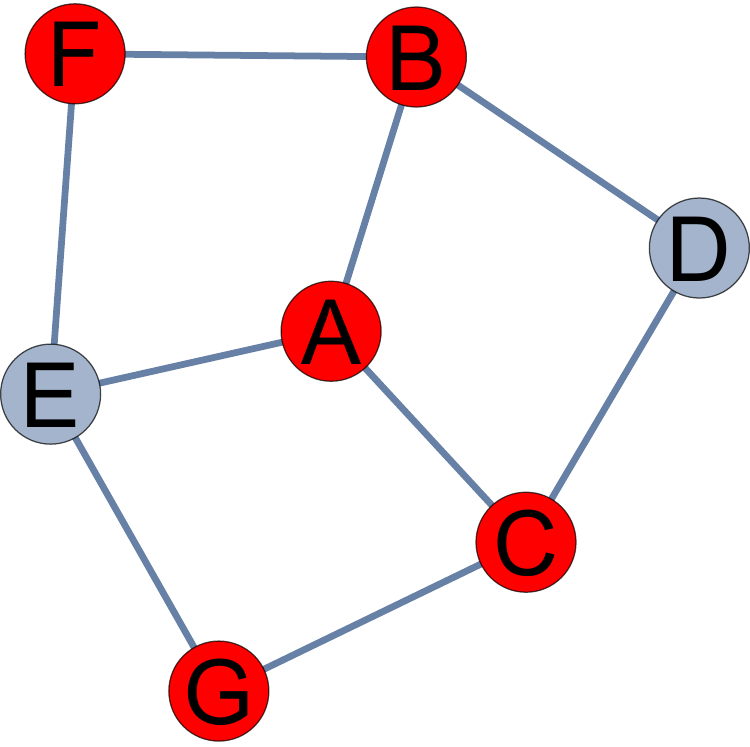}\\
Case 2
\end{minipage}%
\begin{minipage}[c]{0.25\textwidth}
\centering
\includegraphics[width=1in]{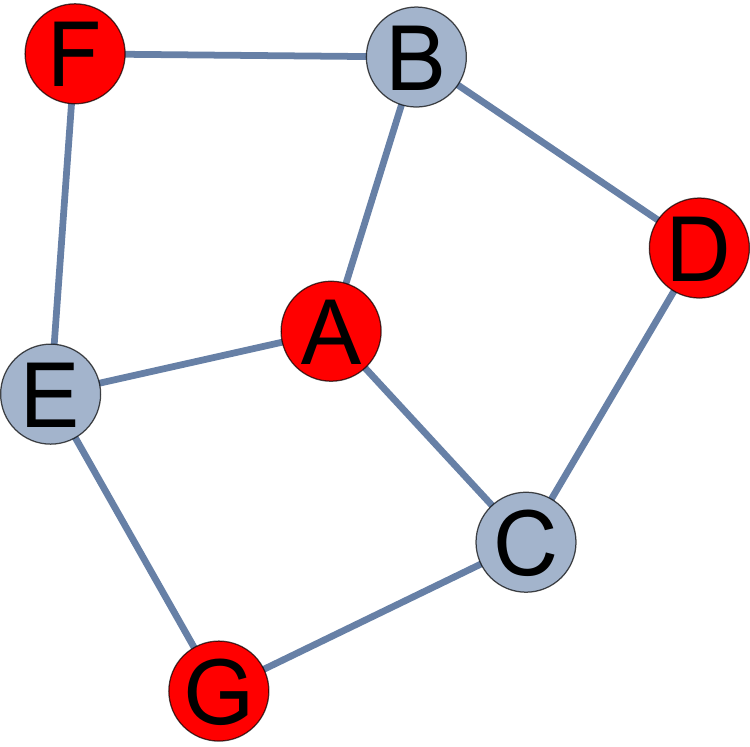}\\
Case 3
\end{minipage}%
\end{minipage}
\smallskip

\noindent As it turns out, cases 1 and 3 are not distinct. In case 3, if we replace $I$, $J$, $K$ with their complements $\bar{I}$, $\bar{J}$, $\bar{K}$ (making $A$ the new purification set), and then set the old purification set $H$ to the empty set, we are left with exactly case 1. This leaves only cases 1 and 2.

In case 1, $B$, $C$, $E$, $H$ are all non-empty; let $b,c,e,h\in[n]$ be elements of the respective subsets. Then $I\supseteq \{b,c\}$, $J\supseteq \{b,e\}$, $K\supseteq\{ c,e\}$. It is a straightforward to show that, on the following graph with terminals $b,c,e,h$ and unit capacities on all edges, the set $\{bc,ce,be\}$ cannot be locked:

\centerline{
{\includegraphics[height=0.75in]{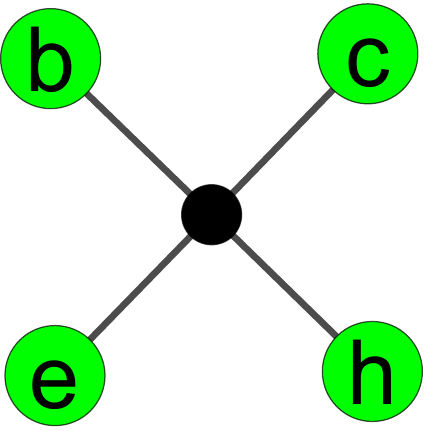}}}

\noindent Adding the elements of $[n]\setminus\{b,c,e,h\}$ as isolated terminals, the result is a network with terminal set $[n]$ in which $\{I,J,K\}$, and therefore $\I$, cannot be locked.

In case 2, $A,B,C,F,G,H$ are non-empty; again let $a,b,c,f,g,h$ be elements. Then $I\supseteq\{ a,b,c\}$, $J\supseteq\{ a,b,f\}$, $K\supseteq\{ a,c,g\}$. 
By the same reasoning, $\I$ is not lockable on the following graph:

\centerline{
{\includegraphics[height=1in]{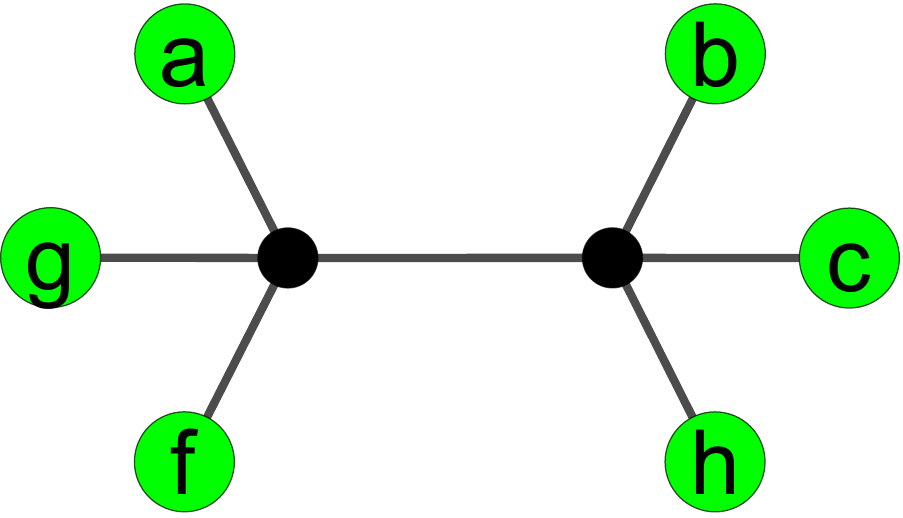}}}
\end{proof}

\subsection{Flows, multiflows, and path sets}
\label{sec:networkflows}

In this subsection, we will define the notions of flows and multiflows on networks in terms of vector fields, by analogy with those concepts on manifolds. We will then show that they can be converted to and from path sets and each other.

First, we associate to each edge $e=\{x,y\}\in E$ a one-dimensional vector space with an inner product---the ``tangent space''. In this tangent space, we label vectors on one side of 0 as being directed from $x$ to $y$ and on the other side from $y$ to $x$. We denote the unit vector from $x$ to $y$ by $\hat n(x,y)$. A \emph{vector field} $\vec v$ associates a vector $\vec v(e)$ to each edge $e$. Given a vector field, the net flux out of a vertex $x$ is defined as
\begin{equation}\label{fluxDef}
\Phi[\vec v](x):=\sum_{y:\{x,y\}\in E}\hat n(x,y)\cdot\vec v(\{x,y\})\,.
\end{equation}
$\Phi$ is a linear map from vector fields to real functions on the vertex set $V$. We also write the total flux out of a set of terminals $I\subseteq T$ as $\Phi[\vec v](I)$:
\begin{equation}\label{uFluxDef}
\Phi[\vec v](I):=\sum_{i\in I}\Phi[\vec v](i)\,.
\end{equation}

A \emph{flow} is a vector field $\vec v$ satisfying the capacity constraint
\begin{equation}
|\vec v(e)|\le c(e)\quad\text{for all }e\in E\,,
\end{equation}
and the divergenceless constraint
\begin{equation}\label{divergenceless}
\Phi[\vec v](x)=0\quad\text{for all }x\in V\setminus T\,.
\end{equation}
The flux of a flow out of a terminal set $I$ cannot exceed its min cut:
\begin{equation}
\Phi[\vec v](I) \le S(I)\,.
\end{equation}
The flow version of the max flow-min cut theorem says that this bound is tight:
\begin{equation}
\max_{\text{flows }\vec v}\Phi[\vec v](I)= S(I)\,.
\end{equation}
This is equivalent to the path-set version of the theorem \eqref{networkMFMC} by converting between flows and path sets, as discussed below. We say $\vec v$ \emph{locks} $I$ if it achieves the maximum.

A \emph{multiflow} is a set of flows $V=(\vec v_{ij})_{i<j}$ obeying the joint capacity condition
\begin{equation}\label{jointcapacity}
\sum_{i<j}|\vec v_{ij}(e)|\le c(e)\quad\text{for all }e\in E\
\end{equation}
and the no-flux condition for $\vec v_{kl}$ on $i$:
\begin{equation}\label{noflux}
\Phi[\vec v_{ij}](k)=0\quad
\text{for all }i,j,k\in T\quad(k\neq i,j)\,.
\end{equation}
Together with the divergenceless condition, \eqref{noflux} implies
\begin{equation}\label{total}
\Phi[\vec v_{ij}](i) = - \Phi[\vec v_{ij}](j)\,.
\end{equation}
Given a multiflow, any constant ($e$-independent) linear combination of the component flows with coefficients between $-1$ and 1 is a flow:
\begin{equation}\label{flowfrommf}
\vec v(e) = \sum_{i<j}\xi_{ij}\vec v_{ij}(e)\,,\qquad-1\le\xi_{ij}\le1\,.
\end{equation}
We say that a multiflow \emph{locks} a set $I$ of terminals if the flow
\begin{equation}
\vec v_I:=\sum_{I\ni i< j\in\bar I}\vec v_{ij}-\sum_{\bar I\ni i< j\in I}\vec v_{ij}
\end{equation}
locks $I$.
As in the continuum, the advantage of multiflows is that they allow us to keep track not just of the total flux out of a given set of terminals but of how much flux is flowing from one terminal to another.

It was indicated in \eqref{flowfrommf} that the component flows of a multiflow can be combined to obtain a single flow. Conversely, as we will show, a flow can be decomposed into a multiflow. We will do this by first writing the flow as a path set and then sorting the paths by their endpoints.

We first define an \emph{(integral) oriented path set} $\tilde\PP$ as a set of paths $P$, each of which has a positive (integer) weight $w_P$ and an orientation $\hat n_P(e)$, which is a unit vector on each edge $e\in P$, obeying the capacity constraint \eqref{capacityconstraint}. We allow oriented path sets to include not only paths connecting distinct terminals but also loops. This is merely a technical convenience, as it makes the correspondence to flows closer and simplifies some formulas. An oriented path set $\tilde\PP$ can be converted to a path set $\PP$ by dropping the orientations and deleting any loops; this preserves any integrality and locking properties of $\tilde\PP$.

An oriented path set $\tilde\PP$ can be converted to a flow $\vv$:
\begin{equation}\label{vfromP}
\vv(e)= \sum_{e\in P\in\tilde\PP}w_P\hat n_P(e)\,.
\end{equation}
Similarly, to write $\tilde\PP$ as a multiflow we assign the paths connecting distinct terminals $i<j$ to the component flow $\vv_{ij}$, and the loops to any component flow.

The converse decomposition is given by the following lemma, which we prove below:
\begin{lemma} \label{pathfromflow}
Let $\N$ be a network and $\vv$ a flow on $\N$. Then there exists an oriented path set $\tilde\PP$ such that \eqref{vfromP} holds and the orientations $\hat n_P(e)$ are equal for all $e\in E$ and $P\in\PP$ containing $e$, implying
\begin{equation}
|\vv(e)| = \sum_{e\in P\in\tilde\PP}|w_P|\,.
\end{equation}
Furthermore, if $\vv$ is integral then there exists such a $\tilde\PP$ that is integral.
\end{lemma}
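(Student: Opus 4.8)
The plan is to prove this by the standard flow-decomposition argument, extracting one oriented loop or terminal-to-terminal path at a time until the flow is exhausted. First I would fix an orientation on each edge once and for all: wherever $\vv(e)\neq0$, let $\hat n(e)$ be the unit vector in the direction of $\vv(e)$, so that $\vv(e)=f(e)\,\hat n(e)$ with $f(e):=|\vv(e)|\ge0$. This turns the support of $\vv$ into a weighted directed graph, and the divergenceless condition \eqref{divergenceless} becomes the statement that at every interior vertex the total weight of incoming edges equals the total weight of outgoing edges. Freezing the orientations in this way is exactly what will force the ``no cancellation'' conclusion at the end, since every extracted path will traverse $e$ in the single direction $\hat n(e)$.

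The core is an extraction step. If the support is nonempty, I would start at the tail of any edge with $f(e)>0$ and grow a directed walk by repeatedly following outgoing positive-weight edges. The key observation, and the main point to verify carefully, is that conservation guarantees the walk can always be continued at an interior vertex: positive weight flowing in forces positive weight flowing out. Hence the walk can only terminate by reaching a terminal or by revisiting a vertex. If a vertex repeats first, the segment between the two visits is a simple directed loop; otherwise the walk reaches a terminal, and I would extend it backward from the starting vertex by the same conservation argument (positive out-weight forces positive in-weight) until it too hits a terminal or revisits a vertex. In every case one obtains either a simple directed loop or a simple directed terminal-to-terminal path $Q$ lying entirely in the support, oriented consistently with the $\hat n(e)$.

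Having obtained $Q$, I would set $w:=\min_{e\in Q}f(e)>0$, add $Q$ to $\tilde\PP$ with weight $w_Q=w$ and the inherited edge orientations, and subtract $w$ from $f(e)$ for every $e\in Q$. Both a loop and a path pass through each of their interior vertices exactly once, removing one unit of in-weight and one of out-weight, so conservation at interior vertices is preserved and $f$ stays nonnegative; meanwhile the bottleneck edge is driven to zero, so the number of positive-weight edges strictly decreases and the procedure terminates after at most $|E|$ steps. Because every path or loop through a given edge carries the single fixed orientation $\hat n(e)$ with positive weight, summing \eqref{vfromP} reproduces $\vv(e)=\big(\sum_{e\in P\in\tilde\PP}w_P\big)\hat n(e)$, which yields both \eqref{vfromP} and the no-cancellation identity $|\vv(e)|=\sum_{e\in P}|w_P|$.

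For integrality I would simply note that if $\vv$ is integral then each $f(e)$ is a nonnegative integer, so every bottleneck weight $w=\min_{e\in Q}f(e)$ is a positive integer and the subtraction preserves integrality at every step; hence the resulting $\tilde\PP$ is integral. The only delicate point in the whole argument is the conservation/continuation claim used in the extraction step, and in particular making sure that stopping ``at the first repeated vertex'' really produces a \emph{simple} loop or path rather than a more complicated closed walk when the backward extension meets the forward portion; this is where I would be most careful, although it is ultimately routine.
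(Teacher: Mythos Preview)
Your proposal is correct and is essentially the same flow-decomposition argument as the paper's proof: iteratively extract a directed path or loop along the current flow, assign it the bottleneck weight, subtract, and observe that the support strictly shrinks and that orientations never reverse. The only cosmetic difference is that you freeze the edge orientations from $\vv$ at the outset, whereas the paper re-reads them from $\vv^{(m)}$ at each step and then notes afterward that the update never flips a direction; the content is identical.
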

\noindent By separating the paths according to their endpoints, and converting the paths with endpoints $i<j$ into the component flow $\vv_{ij}$, we obtain a multiflow:
\begin{corollary}
\label{flowdecomp}
Given a flow $\vec v$ on a network $\N$, there exists a multiflow $V$ such that
\begin{equation}\label{goal}
\sum_{i<j}\vec v_{ij} = \vec v
\end{equation}
and, for all $e\in E$, all non-zero $\vec v_{ij}(e)$ have the same direction, implying
\begin{equation}\label{onedirection}
\sum_{i<j}|\vec v_{ij}(e)| = |\vec v(e)|\,.
\end{equation}
\end{corollary}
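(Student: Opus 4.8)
The plan is to obtain this corollary as essentially routine bookkeeping on top of Lemma \ref{pathfromflow}, which does all the real work. First I would apply Lemma \ref{pathfromflow} to the given flow $\vec v$, producing an oriented path set $\tilde\PP$ satisfying \eqref{vfromP} together with its crucial additional clause: on every edge $e$, all paths of $\tilde\PP$ passing through $e$ carry the same orientation $\hat n_P(e)$.

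Next I would partition $\tilde\PP$ according to endpoints. For each pair $i<j$ let $\tilde\PP_{ij}$ be the subcollection of paths connecting terminals $i$ and $j$, and assign every loop of $\tilde\PP$ to a single fixed component, say $\tilde\PP_{12}$; since loops are permitted in oriented path sets only as a technical convenience, this choice is harmless. I then define each component flow by the formula \eqref{vfromP} restricted to the corresponding subcollection,
\[
\vec v_{ij}(e) := \sum_{e\in P\in\tilde\PP_{ij}} w_P\,\hat n_P(e)\,.
\]
Because the $\tilde\PP_{ij}$ partition $\tilde\PP$, summing over $i<j$ and comparing with \eqref{vfromP} gives \eqref{goal} at once.

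It then remains to verify the three defining conditions of a multiflow. The divergenceless condition \eqref{divergenceless} and the no-flux condition \eqref{noflux} both follow from the elementary fact that an oriented path contributes zero net flux at every vertex other than its two endpoints, while a loop contributes zero net flux everywhere; hence $\Phi[\vec v_{ij}](x)=0$ at every interior vertex and at every terminal $k\neq i,j$, and in particular each $\vec v_{ij}$ is a genuine flow once its capacity is checked. For the capacity constraints I would invoke the uniform-orientation clause: since all paths through a given edge $e$ share one orientation, the paths in $\tilde\PP_{ij}$ through $e$ do too, so $|\vec v_{ij}(e)| = \sum_{e\in P\in\tilde\PP_{ij}} w_P$, and summing over $i<j$ yields $\sum_{i<j}|\vec v_{ij}(e)| = \sum_{e\in P\in\tilde\PP} w_P = |\vec v(e)| \le c(e)$. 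This simultaneously establishes \eqref{onedirection}, the per-component capacity bound $|\vec v_{ij}(e)|\le c(e)$, and the joint capacity constraint \eqref{jointcapacity}.

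Since Lemma \ref{pathfromflow} is assumed, there is no serious obstacle; the only point worth emphasizing is that the uniform-orientation clause of the lemma is exactly what upgrades the generic triangle inequality $\sum_{i<j}|\vec v_{ij}(e)|\ge|\vec v(e)|$ to the equality \eqref{onedirection}, thereby making the joint capacity constraint come for free from the single-edge capacity of $\vec v$. Without that clause one could only bound the component sum from below and would be unable to conclude \eqref{jointcapacity}.
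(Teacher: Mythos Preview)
Your proposal is correct and follows exactly the route the paper indicates: apply Lemma \ref{pathfromflow}, sort the resulting paths by their terminal endpoints (assigning loops to an arbitrary component), and read off the multiflow. You supply more detail than the paper---which dispatches the corollary in a single sentence before the statement---but the argument and all its ingredients (the uniform-orientation clause doing the work for \eqref{onedirection} and \eqref{jointcapacity}, paths contributing zero flux away from their endpoints for \eqref{divergenceless} and \eqref{noflux}) are the same.
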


\begin{proof}[Proof of lemma \ref{pathfromflow}]
We proceed iteratively, building up $\tilde\PP$ step by step. It will be seen that if $\vv$ is integral then the resulting $\tilde\PP$ is also integral.

Let $\vec v^{(0)}:=\vec v$. For $m\ge 0$, if $\vv^{(m)}\neq0$, we define an oriented weighted path $P_m$ and a new flow $\vv^{(m+1)}$. If $\vec v^{(m)}=0$, we stop the procedure, and set $\tilde\PP:=\{P_{m'}\}_{0\le m'<m}$.

Given $\vv^{(m)}$, we find an oriented path $P_m$---either a loop or a path connecting distinct terminals---such that on every edge $e\in P_m$, $\vec v^{(m)}(e)$ is non-zero and is in the same direction as $P_m$. To construct $P_m$, first choose an edge $e_0=\{x_0,x_1\}$ where $\vec v^{(m)}(e_0)\neq0$; assume without loss of generality that $\vec v^{(m)}(e_0)$ is directed from $x_0$ to $x_1$. If $x_1$ is in the interior, then there exists an edge $e_1=\{x_1,x_2\}$ such that $\vec v^{(m)}(e_1)$ is non-zero and directed from $x_1$ to $x_2$. Continue extending the path in this way until it either reaches a terminal or meets a vertex already on the path: $x_{s_1}=x_{s_2}$, $s_1<s_2$. In the latter case let $P_m$ be the loop from $s_1$ to $s_2$; drop the part of the path up to $s_1$. In the former case, continue the path ``backward'' from $x_0$; again, until it either reaches a terminal or a vertex already on the path. Again, in the latter case retain only the loop. This defines $P_m$.

The weight of $P_m$ is fixed as follows,
\begin{equation}\label{vmindef}
w_{P_m} := \min_{e\in P_m}|\vec v(e)|\,,
\end{equation}
and the new flow $\vv^{(m+1)}$ is defined by reducing the flow along $P_m$ by $w_{P_m}$:
\begin{equation}
\vec v^{(m+1)} = \vec v^{(m)}-\begin{cases} w_{P_m}\hat n_{P_m}(e)\,,\quad &e\in P_m\\0\,,\quad &e\notin P_m\end{cases}\,.
\end{equation}
This is still a flow, since the divergenceless constraint is still obeyed, as is the capacity constraint: $|\vec v^{(m+1)}(e)|\le|\vec v^{(m)}(e)|\le c(e)$ for all $e\in E$. Furthermore, the flow has been zeroed out on the edge $e_{\rm min}$ where the minimum in \eqref{vmindef} is achieved: $\vec v^{(m)}(e_{\rm min})\neq0$ but $\vec v^{(m+1)}(e_{\rm min})=0$. Hence, if we repeat this procedure, given that the graph is finite, it will end with the vanishing flow in a finite number of steps. Finally, on any edge $e$ where $\vec v^{(m)}(e)=0$, $\vec v^{(m+1)}(e)=0$, and on any edge $e$ where $\vec v^{(m)}(e)\neq0$, $\vec v^{(m+1)}(e)$ either vanishes or is in the same direction as $\vec v^{(m)}(e)$. In other words, the flow direction is never reversed by the update. Hence $\hat n_{P_m}(e)$ is the same for all $P_m$ containing $e$.
\end{proof}

\subsection{Crossing-pair locking theorem}
\label{sec:networkcrossing}

In this subsection we use flows and multiflows to prove the following theorem. The proof follows closely that of Theorem \ref{continuumcrossing}.
\begin{theorem}[Network crossing-pair locking]\label{networkcrossing}
Let $\N$ be a network and $I,J\subset T$ terminal sets that cross. Then there exists a path set that locks $\{I,J\}$. Furthermore, if $\N$ is integral then there exists a path set with half-integral weights that locks $\{I,J\}$.
\end{theorem}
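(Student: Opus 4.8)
The plan is to mirror the proof of Theorem~\ref{continuumcrossing} essentially step for step, replacing minimal surfaces by minimal cuts, the continuum thread decomposition by Corollary~\ref{flowdecomp}, and the Riemannian max flow--min cut theorem by its network counterpart. First I would reduce to a four-group problem: since $I,J$ cross, the four terminal sets $A:=I\cap\bar J$, $B:=I\cap J$, $C:=\bar I\cap J$, $D:=\bar I\cap\bar J$ are all non-empty, and locking $I=A\cup B$ and $J=B\cup C$ is the same as locking the composite super-terminals $AB$ and $BC$. By the flow version of the network max flow--min cut theorem there exist flows $\vec v_1,\vec v_2$ locking $I$ and $J$ respectively; on an integral network these may be taken integral by Menger's theorem. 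I then set $\vec v^\pm:=\frac12(\vec v_1\pm\vec v_2)$. Because $\vec v^++\vec v^-=\vec v_1$ and $\vec v^+-\vec v^-=\vec v_2$ are both flows, and because on a single edge $\frac12(|a+b|+|a-b|)=\max(|a|,|b|)$, the pair $\vec v^\pm$ already behaves like an admissible two-component multiflow (on a network there is no distinction between the $\nu_a$ and $\nu_v$ bounds, so the sole joint-capacity constraint is what must be checked).

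Next I would decompose each $\vec v^\pm$ into a multiflow $V^\pm$ by applying Corollary~\ref{flowdecomp}, so that the components $\vec v^\pm_{ij}$ sum to $\vec v^\pm$ and, on every edge, all nonzero components point in a common direction, whence $\sum_{i<j}|\vec v^\pm_{ij}(e)|=|\vec v^\pm(e)|$. I would then overlay $V^+$ and $V^-$ into a single multiflow $V$ using exactly the signed assignment \eqref{eq:newmf2}, with the groups $A,B,C,D$ playing the roles they play there. Converting the component flows of $V$ to paths via Lemma~\ref{pathfromflow} and discarding orientations and loops yields the desired path set, in which each group-level component $\vec v_{ij}$ decomposes into genuine paths between individual terminals. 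If $\vec v_{1,2}$ are integral, then $\vec v^\pm$ and hence all components of $V$ are half-integral, and the decomposition (applied to $2\vec v^\pm$ and rescaled) produces half-integral weights.

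The crux is the network analogue of Lemma~\ref{vAC0}, namely $\vec v^+_{AC}=\vec v^-_{BD}=0$. Here I would fix a minimal $I$-cut $C_I$ saturated by $\vec v_1$, with source side $R_I\supseteq I$; saturation forces $\hat n\cdot\vec v_1=c(e)$ outward on each cut edge, so $\hat n\cdot\vec v^+=\frac12(c(e)+\hat n\cdot\vec v_2)\ge0$, and no flux of $\vec v^+$, hence of its same-direction component $\vec v^+_{AC}$, enters $R_I$; likewise no flux enters the source side $R_J$ of a minimal $J$-cut saturated by $\vec v_2$. Since any terminal of $A$ lies in $R_I$ but outside $R_J$ while any terminal of $C$ lies outside $R_I$ but inside $R_J$, every oriented path of $\vec v^+_{AC}$ would have to enter either $R_I$ or $R_J$, a contradiction; thus $\vec v^+_{AC}=0$, and symmetrically $\vec v^-_{BD}=0$. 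With this lemma in hand the joint-capacity check and the locking computation go through verbatim as in \eqref{ABlock}: one finds $\sum_{i<j}|\vec v_{ij}(e)|\le|\vec v^+(e)|+|\vec v^-(e)|=\max(|\vec v_1(e)|,|\vec v_2(e)|)\le c(e)$, and $\Phi[\vec v_1'](I)=\Phi[\vec v^++\vec v^-](I)=\Phi[\vec v_1](I)=S(I)$, with the analogous identity for $J$. The main obstacle is precisely making this vanishing lemma rigorous on a discrete graph, where ``inside the homology region'' must be recast as ``on the source side of a saturated minimal cut'' and the topological crossing argument rephrased in terms of cut edges; the half-integral bookkeeping in the final path-set conversion is a secondary technical point.
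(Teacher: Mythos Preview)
Your proposal is correct and tracks the paper's proof closely: the same reduction to four terminal groups $A,B,C,D$, the same max flows $\vec v_{1,2}$, the same half-sums $\vec v^\pm$, decomposition via Corollary~\ref{flowdecomp}, recombination via \eqref{eq:newmf2}/\eqref{newmf}, and the same half-integrality bookkeeping via Menger's theorem. The joint-capacity check is also handled the same way.

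The one substantive difference is in the final locking verification. You flag the network analogue of Lemma~\ref{vAC0} (vanishing of $\vec v^+_{AC}$ and $\vec v^-_{BD}$) as the crux and propose a saturated-minimal-cut argument to establish it. That argument appears to be sound, but the paper sidesteps the lemma entirely, using instead the trick from the footnote of the continuum proof: one computes directly that
\[
\Phi[\vec v_1'](AB)=S(AB)-\Phi[\vec v^+_{AC}](A)-\Phi[\vec v^-_{BD}](B),\qquad
\Phi[\vec v_2'](BC)=S(BC)+\Phi[\vec v^+_{AC}](A)+\Phi[\vec v^-_{BD}](B),
\]
so the two fluxes sum to $S(AB)+S(BC)$; since each is individually bounded above by the corresponding min cut, both bounds are saturated. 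This avoids any need to reason about cut sides or path crossings. Your route has the advantage of actually establishing the vanishing (and hence pinning down the structure of the multiflow more precisely), at the cost of the extra topological work you identified; the paper's route is shorter but yields only the flux identities needed for locking. A minor secondary difference: the paper formally reduces to a genuine four-terminal network $\N'$ via infinite-capacity edges, whereas you work with the four terminal groups in situ.
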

\noindent The first part of this theorem is a special case of Corollary \ref{rationalcapacities}, or more precisely its conjectured real version, while the second part is a special case of Corollary \ref{notinnerE}.

\begin{proof}
For clarity we consider the case where there are only four terminals, which we denote $A,B,C,D$, and we wish to lock $\{AB,BC\}$. This is without loss of generality. Given crossing sets of terminals $I,J$ on $\N$, we can make a new network $\N'$ by adding a new terminal $A$  attached with edges of infinite capacity to the terminals in $I\setminus J$, and convert the latter into interior vertices; similarly with $B$ and $I\cap J$; $C$ and $J\setminus I$; and $D$ and $\overline{I\cup J}$. A path set on $\N'$ that locks $\{AB,BC\}$ defines a path set on $\N$ that locks $\{I,J\}$.

We now proceed to the proof. We start with max flows $\vec v_{1,2}$ for $AB,BC$ respectively:
\begin{equation}\label{v12maxflows}
\Phi[\vec v_1](AB) = S(AB)\,,\qquad   \Phi[\vec v_2](BC) = S(BC)\,.
\end{equation}
We would like to combine these into a single multiflow, while preserving the fluxes on $AB$ and $BC$. We could use Corollary \ref{flowdecomp} to decompose each of $\vec v_1$ and $\vec v_2$ into a multiflow. The problem is that these two multiflows are not necessarily compatible with each other; we cannot necessarily combine them while obeying the joint capacity constraint \eqref{jointcapacity}.

Therefore we will use a trick. We combine them into flows $\vec v^\pm$:
\begin{equation}
\vec v^\pm :=\frac12\left(\vec v_1\pm\vec v_2\right).
\end{equation}
Obviously this reshuffling does not lose any information about the original flows. However, unlike the original flows, $\vec v^\pm$ are compatible with each other, i.e.\ they obey a joint capacity constraint:
\begin{equation}\label{jointcc}
|\vec v^+(e)|+|\vec v^-(e)|\le c(e)
\end{equation}
for every edge. \eqref{jointcc} follows from the fact that the left-hand side equals the maximum of $|\vec v_1(e)|$ and $|\vec v_2(e)|$. Since they are compatible, when we convert each into a multiflow, we can combine them into a single multiflow, in a way that preserves the relevant fluxes.

We thus use Corollary \ref{flowdecomp} to convert each flow $\vec v^\pm$ into a multiflow $\{\vec v^\pm_{ij}\}$. Each of these obeys \eqref{onedirection}, that is
\begin{equation}
\sum_{i<j}|\vec v_{ij}^\pm(e)|=|\vec v^\pm(e)|
\end{equation}
for all $e\in E$. Combining this with \eqref{jointcc}, we have
\begin{equation}
\sum_{i<j}\left(|\vec v^+_{ij}(e)|+|\vec v^-_{ij}(e)|\right)\le c(e)\,.
\end{equation}
This implies that an arbitrary linear combination of $\{\vec v^+_{ij}\}$ and $\{\vec v^-_{ij}\}$ of the form
\begin{equation}
\vec v_{ij} = \xi^+_{ij}\vec v^+_{ij}+\xi^-_{ij}\vec v^-_{ij}\,,
\end{equation}
with ($e$-independent) coefficients $\xi^\pm_{ij}\in[-1,1]$, obeys the joint capacity constraint, as well as the other conditions in the definition of a multiflow. We choose the following combinations:
\begin{align}\label{newmf}
\vec v_{AB} &= -\vec v^+_{AB}+\vec v^-_{AB} \\
\vec v_{AC} &= \vec v^-_{AC} \nonumber \\
\vec v_{AD} &= \vec v^+_{AD}+\vec v^-_{AD} \nonumber \\
\vec v_{BC} &= \vec v^+_{BC}+\vec v^-_{BC}\nonumber  \\
\vec v_{BD} &= \vec v^+_{BD} \nonumber \\
\vec v_{CD} &= \vec v^+_{CD}-\vec v^-_{CD}  \,.\nonumber
\end{align}

We now show that the multiflow \eqref{newmf} locks $AB$ and $BC$. From \eqref{newmf} we define the following flows:
\begin{equation}
\vec v_1' = \vec v_{AC}+\vec v_{AD}+\vec v_{BC}+\vec v_{BD}\,,\qquad
\vec v_2' = -\vec v_{AB}-\vec v_{AC}+\vec v_{BD}+\vec v_{CD}\,.
\end{equation}
We now calculate the total flux of $\vec v_1'$ out of $AB$:
\begin{align}\label{ABflux}
\Phi[\vec v_1'](AB)&=\Phi[\vec v_{AC}+\vec v_{AD}](A)+
\Phi[\vec v_{BC}+\vec v_{BD}](B)\nonumber\\
&=\Phi[\vv^++\vv^-](AB)-\Phi[\vv^+_{AC}](A)-\Phi[\vv^-_{BD}](B)\nonumber\\
&=S(AB)-\Phi[\vv^+_{AC}](A)-\Phi[\vv^-_{BD}](B)\,.
\end{align}
A similar calculation shows that
\begin{equation}\label{BCflux}
\Phi[v_2'](BC) = S(BC) +\Phi[\vv^+_{AC}](A)+\Phi[\vv^-_{BD}](B)\,.
\end{equation}
Finally, given that
\begin{equation}
\Phi[\vec v_1'](AB)\le S(AB)\,,\qquad
\Phi[\vec v_2'](BC)\le S(BC)\,,
\end{equation}
\eqref{ABflux}, \eqref{BCflux} imply that both bounds must be saturated, i.e.\ $\vec v_1'$ locks $AB$ and $\vec v_2'$ locks $BC$. Each component of the multiflow \eqref{newmf} can be converted using Lemma \ref{pathfromflow} into a path set, the union of which locks $AB$, $BC$.

If $\N$ is integral, then by Menger's theorem $\vv_{1,2}$ can be chosen to be integral. $\vv^\pm$ are then half-integral, and therefore so is the multiflow \eqref{newmf} and the resulting path set.
\end{proof}

\subsection{Relation between network and manifold multiflows}
\label{sec:networkmanifold}

\begin{figure}
    \centering
    \includegraphics[width=4in]{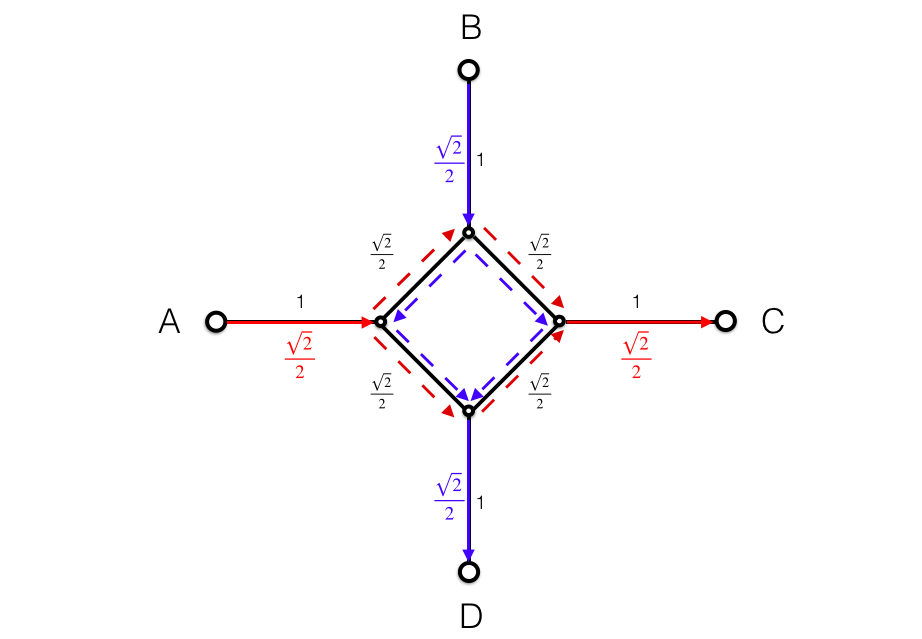}
    \caption{Network obtained by dessication of flat unit square with sides $A,B,C,D$. (More precisely, the square with padding added to separate each boundary region from its respective minimal surface; e.g.\ a disk of radius $1/\sqrt2$ with boundary circle divided into equal arcs.) The capacity of each edge is shown in black. Also shown (in color) is a multiflow that locks both $AB$ and $BC$. The multiflow has only two non-zero components, $\vec v_{AC}$ (red) and $\vec v_{BD}$ (blue).
    }
    \label{fig:DiscreteSquare}
\end{figure}

Multiflows on networks and on manifolds are closely related, since either one can be converted into the other. Given a Riemannian manifold with boundary $\M$ and boundary decomposition $\{A_1,\ldots,A_n\}$, we can define a network $\N$ with terminal set $[n]$ such that the respective min cut functions $S_\M$, $S_{\N}$ agree:
\begin{equation}\label{Sagree}
S_{\M}(A_I)=S_{\N}(I)\quad\text{for all }I\subseteq[n]\,.
\end{equation}
The construction, described in \cite{Bao:2015bfa} and sometimes called ``dessication'', proceeds by cutting $\M$ up along the minimal surfaces $m(A_I)$ for all possible composite regions $A_I$. Each resulting bulk region, or cell, is mapped to a vertex of the network, which is labelled a terminal if the region is bounded by an $A_i$. If two cells are adjacent then the corresponding vertices are joined by an edge, with capacity given by the area of their shared boundary. Under this construction, the minimal cut in $\N$ for any $I\subseteq[n]$ lifts to the minimal surface $m(A_I)$ in $\M$. The dessication of the flat unit square studied in section \ref{section:crossing} is shown in figure \ref{fig:DiscreteSquare}.

A flow $\vv_\M$ on $\M$ can be mapped to a flow $\vv_\N$ on the dessication of $\M$ by setting $\vv_\N(e)$ on a given edge $e$ to the flux of $\vv_\M$ over the corresponding surface in $\M$. This guarantees that, for any terminal set $I\subseteq [n]$,
\begin{equation}
\Phi[\vv_\N](I) = \int_{A_I}\vv_\M\,;
\end{equation}
in particular, $\vv_\N$ locks $I$ if and only if $\vv_\M$ locks $A_I$. Similarly, a $\nu_a$-, $\nu_v$-, or $\nu_c$-multiflow on $\M$ can be mapped to a multiflow on $\N$, preserving the fluxes of all component flows. Figure \ref{fig:DiscreteSquare} shows the multiflow obtained in this way from the $\nu_a$-multiflow on the unit square that locks $AB$ and $BC$, discussed in section \ref{section:altBound} (in which $\vv_{AC}$ is a constant horizontal vector field with norm $1/\sqrt2$ and $\vv_{BD}$ is a constant vertical vector field with the same norm).

In the other direction, a network $\N$ can be ``hydrated'' to form a manifold $\M$ and boundary decomposition $\{A_i\}$ such that \eqref{Sagree} is again satisfied. One such construction is as follows. Each internal vertex of $\N$ is mapped to a very large sphere (of radius much larger than $\max_{e\in E}w(e)$), and each terminal $i\in[n]$ to a very large hemisphere whose boundary is labelled $A_i$. If $\N$ contains an edge $e=\{x,y\}$, then the (hemi)spheres corresponding to vertices $x$, $y$ are sewn together along a circle of circumference $w(e)$, with different circles well separated on each sphere. Again, under this construction the minimal surface $m(A_I)$ in $\M$ corresponds to the minimal cut in $\N$ for the terminal subset $I$.\footnote{We leave the proofs of this and other statements in this subsection to the reader.} Note that dessication and hydration are not inverse operations, in the sense that the dessication of the hydration of a network $\N$ does not necessarily equal $\N$ (a simple counterexample being a graph with one terminal and one internal vertex; after hydration and dessication, the internal vertex is lost).

Under hydration, a flow $\vv_\N$ can be lifted to a flow $\vv_\M$, while preserving fluxes, by setting $\vv_\M$ on the sewing circle corresponding to an edge $e$ equal to $|\vv_\N(e)|$ times the unit normal on the circle (with the obvious orientation); the fact that the spheres are large and the sewing circles well-separated guarantees that $\vv_\M$ can be extended between the circles while being divergenceless and obeying the norm bound. Similarly, a multiflow on $\N$ can be lifted to a $\nu_v$- or $\nu_a$-multiflow on $\M$.\footnote{The component flows may overlap on $\M$ in this construction, so the result is not necessarily a $\nu_c$-multiflow. A $\nu_c$-multiflow can be produced by going to higher dimensions as follows: Map each vertex of $\N$ to a 3-sphere; for each edge $e$ sew the appropriate 3-spheres along a 2-sphere of area $w(e)$; decompose the 2-sphere into non-overlapping regions of area $|\vv_{\N ij}(e)|$ for all $i<j$; set $\vv_{\M ij}$ equal to the unit normal vector on its respective region; and extend the $\vv_{\M ij}$s to the rest of $\M$ to divergenceless vector fields obeying the $\nu_c$ norm bound. The reason it is necessary to go to higher dimensions for this construction is that, on 2-spheres, the component flows (or threads) may be topologically forced to cross each other.} An implication is that Theorem \ref{lockingfailure} (network locking failure) can be lifted directly to manifolds:

\begin{corollary}
Let $n$ be an integer $\ge4$ and $\I\subseteq 2^{[n]}$ a family of terminal sets containing a pairwise crossing triple. Then there exists a manifold $\M$ and boundary decomposition $\{A_1,\ldots,A_n\}$ such that $\I$ cannot be locked by a $\nu_a$-multiflow (and therefore cannot be locked by a $\nu_c$- or $\nu_v$-multiflow).
\end{corollary}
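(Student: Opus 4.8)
The plan is to combine the network locking-failure result, Theorem~\ref{lockingfailure}, with the hydration construction of subsection~\ref{sec:networkmanifold}, arguing by contradiction. First I would invoke Theorem~\ref{lockingfailure}: since $\I$ contains a pairwise crossing triple and $n\ge4$, there exists an inner-Eulerian integral network $\N$ with terminal set $[n]$ on which $\I$ cannot be locked by any path set. I would then hydrate $\N$ to produce a manifold $\M$ together with a boundary decomposition $\{A_1,\dots,A_n\}$ whose min-cut function agrees with that of $\N$, i.e.\ $S_\M(A_I)=S_\N(I)$ for all $I\subseteq[n]$, as in \eqref{Sagree}. Recall that under hydration each edge $e$ of $\N$ corresponds to a ``sewing circle'' in $\M$ of circumference $c(e)$, and the minimal surface $m(A_I)$ corresponds to the minimal $I$-cut.

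Next I would suppose, toward a contradiction, that some $\nu_a$-multiflow $V=(\vv_{ij})$ locks $\I$ on $\M$, and project it back onto $\N$. Concretely, for each component flow $\vv_{ij}$ and each edge $e$ of $\N$, I would set $\vv'_{ij}(e)$ equal to the flux of $\vv_{ij}$ through the sewing circle corresponding to $e$ (with the circle's orientation). Because each $\vv_{ij}$ is divergenceless on $\M$, the divergence theorem applied to the region associated with an interior vertex gives the divergenceless condition \eqref{divergenceless}, while the region associated with a terminal $i$ gives $\Phi[\vv'_{ij}](i)=\int_{A_i}\vv_{ij}$; the no-flux condition \eqref{noflux} is inherited from \eqref{eq:flow_terminal}. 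The $\nu_a$ bound \eqref{nu1def}, which bounds $\sum_{i<j}|\hat n\cdot\vv_{ij}|$ through any surface by its area, translates on a circle of circumference $c(e)$ into the joint capacity constraint \eqref{jointcapacity}. Thus $V'=(\vv'_{ij})$ is a genuine network multiflow on $\N$.

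By linearity the projection sends the combined flow $\vv_I$ of \eqref{v from vij} to the corresponding $\vv'_I$, and the flux identity above gives $\Phi[\vv'_I](I)=\int_{A_I}\vv_I$. Since $V$ locks $A_I$, I would take (using the orientation-fixing remark near \eqref{fluxnumb}) $\int_{A_I}\vv_I=S_\M(A_I)=S_\N(I)$, so $V'$ locks every $I\in\I$ on $\N$. Finally I would apply Lemma~\ref{pathfromflow}, or equivalently Corollary~\ref{flowdecomp}, to decompose each component $\vv'_{ij}$ into paths connecting $i$ and $j$; the union of these paths is a path set locking $\I$ on $\N$, contradicting Theorem~\ref{lockingfailure}. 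This rules out a locking $\nu_a$-multiflow, and the parenthetical assertion for $\nu_c$ and $\nu_v$ then follows immediately from $\nu_c(V)\ge\nu_v(V)\ge\nu_a(V)$, since any $\nu_c$- or $\nu_v$-multiflow is in particular a $\nu_a$-multiflow.

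The step I expect to be the main obstacle is the projection $V\mapsto V'$ in the second paragraph. The key subtlety is that hydration and dessication are \emph{not} inverse operations, so I cannot simply dessicate $\M$ and hope to recover $\N$; instead I must map the manifold multiflow directly onto the \emph{original} hydrated network $\N$ using its sewing circles, and verify there that both the joint capacity constraint \eqref{jointcapacity} and the flux-out-of-$I$ identity hold. Getting the divergence-theorem bookkeeping and the orientation conventions right—so that fluxes out of terminal sets are genuinely preserved and the $\nu_a$ bound maps to \eqref{jointcapacity}—is the only real content; everything else is a direct appeal to the cited results.
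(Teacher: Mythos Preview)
Your proposal is correct and follows essentially the same route as the paper: invoke Theorem~\ref{lockingfailure} to obtain an unlockable network $\N$, hydrate it to a manifold $\M$, and observe that a hypothetical locking $\nu_a$-multiflow on $\M$ would project (via fluxes through the sewing circles) to a locking network multiflow on $\N$, hence a locking path set, contradicting the theorem. The paper states the corollary as an immediate consequence of the manifold/network correspondence without spelling out the projection step; you have correctly identified that the relevant direction is manifold $\to$ network (not the lifting direction the paper mentions just before the corollary), and your verification that the $\nu_a$ bound integrates over each sewing circle to give the joint capacity constraint \eqref{jointcapacity} is exactly the content that makes the argument go through.
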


\bibliographystyle{JHEP}
\bibliography{refs}

\end{document}